\newif\ifllncs
\newcommand{\E}{\mathbb{E}}
\title{The Knowledge Complexity of Quantum Problems}
\title{The Knowledge Complexity of Quantum Problems}
\date{}
\author{Giulio Malavolta}
\affil{Bocconi University}
\begin{document}
\maketitle

\begin{abstract}
    Foundational results in theoretical computer science have established that 
\emph{everything provable, is provable in zero knowledge}. However, this assertion fundamentally assumes a classical interpretation of computation and many interesting physical statements that one can hope to prove are not characterized. 

In this work, we consider decision problems, where the problem instance itself is specified by a (pure) quantum state. We discuss several motivating examples for this notion and, as our main technical result, we show that every quantum problem that is provable with an interactive protocol, is also provable in zero-knowledge. Our protocol achieves unconditional soundness and computational zero-knowledge, under standard assumptions in cryptography. 

In addition, we show how our techniques yield a protocol for the Uhlmann transformation problem that achieves a meaningful notion of zero-knowledge, also in the presence of a malicious verifier.

\end{abstract}

\section{Introduction}

Zero-knowledge proofs are a cornerstone of modern cryptography and complexity theory. It is known that \emph{everything provable is provable in zero knowledge} \cite{GMW,BGGSHKMR}, meaning that every statement that can be proven with an interactive protocol can also be proven without revealing any additional information to the verifier. However, this assertion fundamentally assumes a \emph{classical} model of computation: Statements are specified by bits and interactive protocols are executed by classical Turing machines. 

On the other hand, Nature operates by the laws of quantum mechanics. Properties about physical objects that one could hope to prove are not even well-defined in the classical model of computation. Take for instance the following simple task: You are given two bipartite pure states $\ket{\psi}_{AB}$ and $\ket{\phi}_{AB}$ and want to determine if they have the same reduced density on register $A$, i.e., if
\[
\mathrm{Tr}_B\left(\ketbra{\psi}{\psi}_{AB}\right) \stackrel{?}{=} \mathrm{Tr}_B\left(\ketbra{\phi}{\phi}_{AB}\right).
\]
Uhlmann's theorem \cite{UHLMANN} shows that a witness for this fact always exists, in the form of a unitary $\mathrm{U}$ mapping $(\mathrm{Id}_A\otimes \mathrm{U}_B)\ket{\psi}_{AB} = \ket{\phi}_{AB}$. Nevertheless, the current framework does not allow one to prove this fact in zero-knowledge.\footnote{We insist that the verifier does not have any further information about the states. If, for instance, the verifier knew a classical description of a quantum circuit generating such states, then the problem is solvable with known methods.}

To characterize what class of statements can be proven in zero-knowledge, clearly the first step is to characterize what can be proven at all. Indeed a series of recent works investigate complexity classes that are uniquely quantum. Rosenthal and Yuen \cite{RY21} studied the complexity for \emph{state synthesis} problems, characterizing the complexity of interactive protocol generating quantum states. Bostanci et al.~\cite{BEMPQY23} introduced complexity classes for \emph{unitary synthesis}, problems where a prover and a verifier interactively generate a unitary matrix. These can be thought of problems specified by classical inputs, but where the ``output'' is a quantum state. On the flip side, Kashefi and Alves \cite{KA04} and more recently Chia et al.~\cite{CCHS24} studied the complexity of decision problems (so the output is a classical bit) but where \emph{the problem instance itself} is specified by a quantum state.

All of the aforementioned works are focused on understanding the computational complexity of such problems, whereas the \emph{knowledge complexity} (i.e., what can be proven in zero-knowledge) is only considered for restricted cases, such for an honest verifier. At present, the question of what exactly is provable in zero-knowledge, according to its standard definition, remains wide open. The purpose of this work is to make progress on this question.

\subsection{Our Results}

In this work we mainly consider \emph{quantum promise} problems~\cite{KA04,CCHS24}, where problems instances $X = (X^{\text{yes}}, X^{\text{no}})$ are defined by two sets of quantum states. The complexity of interactive proofs for quantum promise problems is captured by the class $^*$QIP: A prover and a verifier are given as input (multiple copies of) a quantum state $\rho$, and the prover wants to convince the verifier that $\rho \in X^{\text{yes}}$. The verifier is required to be efficient, but the prover is potentially unbounded. Furthermore, it is required that the verifier rejects if $\rho \in X^{\text{no}}$.
If the states in the problem instance are restricted to \emph{pure states}, then we denote the complexity class by pQIP. Otherwise, if \emph{mixed states} are allowed, then we refer to the class as mQIP.

Our first contribution is a formal definition of the complexity classes p/mQIPzk where, in addition to the above properties, we require that a computationally-bounded verifier learns nothing beyond the fact that $\rho \in X^{\text{yes}}$. As standard in cryptography, this is formalized by the existence of an \emph{efficient} simulator whose messages are computationally indistinguishable from real protocols to the eyes of any, possibly \emph{malicious}, verifier. 

Our main technical result is a tight characterization of the class pQIPzk. We show that, under standard cryptographic assumptions, everything provable in pQIP, is also provable in zero-knowledge.

\ifllncs
\begin{theorem}[Main Theorem]
    If EFI pairs exist, then pQIP $\subseteq$ pQIPzk.
\end{theorem}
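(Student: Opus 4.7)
The plan is to adapt the classical ``commit to the transcript, then prove acceptance'' blueprint of \cite{BGGSHKMR} to the quantum promise setting, exploiting the fact that EFI pairs are known to yield a statistically binding, computationally hiding quantum commitment scheme. Fix any pQIP protocol $\Pi=(P,V)$ deciding $X$, and suppose the verifier holds several copies of the pure instance $\ket{\psi}$. I would construct $\Pi'$ in which, at each round $i$, the prover no longer sends its message $m_i$ in the clear but instead sends a quantum commitment $c_i$ to $m_i$; the verifier's messages are unchanged. After the final round, $P'$ and $V'$ engage in a zero-knowledge subprotocol to establish the statement: ``there exist openings of $c_1,\ldots,c_k$ to strings $m_1,\ldots,m_k$ such that the transcript $(m_1,v_1,\ldots,m_k,v_k)$ causes the original $V$ to accept on input $\ket{\psi}$.''

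Soundness would follow by combining statistical binding with the soundness of $\Pi$: if $\ket{\psi}\in X^{\text{no}}$, then by assumption no sequence of prover messages makes $V$ accept, so no opening of the commitments can satisfy the final predicate, and any attempted cheating is caught by the soundness of the subprotocol. Computational zero-knowledge would follow by a standard hybrid argument: the simulator commits to the zero string in every round and invokes the simulator of the inner subprotocol; successive hybrids replace real commitments by dummies (using computational hiding) and the real inner proof by its simulation (using the zero-knowledge of the inner subprotocol).

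The main obstacle is the inner zero-knowledge subprotocol itself. Unlike the classical case, where ``the committed transcript is accepting'' reduces to an NP statement, here the verification predicate depends on the quantum state $\ket{\psi}$ and so the corresponding language lives in a quantum promise class, morally the pQMA variant in which the quantum witness is the decommitment register. I would try to construct this subprotocol by generalizing existing zero-knowledge proofs for QMA to inputs that are themselves quantum states, using additional fresh copies of $\ket{\psi}$ as ``input witnesses'' available to both parties. A key subtlety is that a malicious verifier $V^{*}$ can entangle its copies of $\ket{\psi}$ with the received commitments and its internal registers; the simulator must therefore act in a rewinding-free manner on the input and exploit the hiding (and possibly equivocation) properties of the EFI-based commitment to produce convincing openings for a dummy transcript. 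Composing all of these ingredients so as to yield indistinguishability against any efficient quantum verifier is where the bulk of the technical effort will go.
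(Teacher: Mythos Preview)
Your proposal has a structural gap at the very first step. In the classical \cite{BGGSHKMR} blueprint, the verifier's messages can still be computed even when the prover only sends commitments, because one can first make the protocol public-coin (IP $=$ AM). In your $\Pi'$, however, the verifier's message $v_i$ is obtained by applying $\mathrm{V}_i$ to its workspace together with the prover's actual message $m_i$; if the prover sends only a commitment $c_i$, the verifier cannot compute $v_i$ at all, so the ``transcript'' you later want to certify does not exist. You would first need a public-coin reduction for pQIP, and this is already nontrivial: the paper spends an entire section (the modified Kitaev--Watrous and Marriott--Watrous compilers plus a SWAP-test substitute for the projector $\ketbra{\psi_v}{\psi_v}$) to obtain a $3$-message protocol where the verifier sends a single random bit.

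The second gap is the ``inner zero-knowledge subprotocol.'' You correctly identify that the predicate ``the committed openings make $V$ accept on input $\ket{\psi}$'' is not an NP or QMA statement but a quantum promise statement whose instance involves $\ket{\psi}$ itself. But this is precisely the class you are trying to show has zero-knowledge proofs, so the reduction is circular unless you independently establish zero-knowledge for (something like) pQMA. The paper's technical outline explains why even that step is delicate: a naive secure-computation approach fails because neither party can safely supply $\ket{\psi}$ to the functionality (prover supplying it breaks soundness, verifier supplying it breaks zero-knowledge), and Watrous rewinding fails because the simulator has no efficient projector onto the verifier's initial state $\ketbra{\psi_v}{\psi_v}$. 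A further obstacle you do not mention is that the honest pQIP prover is unbounded, so reductions to the computational hiding of your commitments cannot simply ``run the honest prover'' internally; the paper handles this by carefully structuring non-uniform hybrids with advice states.

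For contrast, the paper does \emph{not} follow the commit-to-transcript paradigm. It first achieves honest-verifier zero-knowledge by running each round of $V$ inside a one-sided statistically secure two-party computation, with the verifier's workspace held under a quantum state commitment between rounds; then it round-collapses to three messages, amplifies, makes the single verifier message public-coin via a SWAP-test variant of Marriott--Watrous, and finally replaces that coin by a secure coin-flip to upgrade to malicious-verifier zero-knowledge.
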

\else
\begin{theorem*}[Main Theorem]
    If EFI pairs exist, then pQIP $\subseteq$ pQIPzk.
\end{theorem*}
\fi
EFI pairs \cite{EFI} are families of quantum states that can be generated by a polynomial-size quantum circuit, which are (i) statistically far but (ii) computationally indistinguishable. The existence of EFI pairs is equivalent to the existence of quantum bit commitments, and there is strong evidence \cite{EFI} that they are the ``minimal'' assumption in quantum cryptography. For our theorem, assuming EFI pairs is tight, since it is shown in \cite{CCHS24} that EFI pairs are \emph{implied} by the average-case hardness of pQIPzk with an honest verifier. Furthermore, the computational assumption is only used to argue zero-knowledge, but soundness remains unconditional.

We remark that the algorithm for the honest prover in pQIP is not required to be efficient, so we cannot expect that the prover for our pQIPzk protocol will in general be efficient. However, for the case of pQMA (i.e., pQIP where the prover sends a single message) we can ask for a polynomial-time protocol, where the prover is non-deterministically given a quantum state as a \emph{witness}. The proof of our main theorem is an explicit compiler that achieves this strong efficiency guarantee, which makes it more suitable for cryptographic applications.

On the other hand, we show that when considering \emph{mixed states}, i.e., the complexity class mQMAzk, then such a protocol is unlikely to exist. Under standard cryptographic assumptions, we show that there exists a problem instance $X$ in mQMA such that any efficient protocol is either not sound, or not witness-indistinguishable (and therefore not zero-knowledge).

\ifllncs
\begin{theorem}[Limits on mQMAzk]
    If the learning with errors (LWE) problem is hard, then there exist problems in mQMA without an efficient zero-knowledge protocol.
\end{theorem}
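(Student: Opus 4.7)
The plan is to exhibit a mQMA problem whose YES and NO instances are computationally indistinguishable under LWE but distinguishable given a suitable quantum witness. Concretely, fix a public key $pk$ of an LWE-based CPA-secure public key encryption scheme and define $X^{\mathrm{yes}}$ to be the set of mixed states obtained by encrypting the plaintext $1$ under $pk$ with uniformly random encryption randomness, and similarly $X^{\mathrm{no}}$ to be the analogous encryptions of $0$. Membership in mQMA is immediate: the honest witness is the secret key $sk$, and the verifier coherently decrypts the challenge state and measures the plaintext register.

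First I would pin down the formal problem definition, including how $pk$ is sampled and supplied as part of the instance, so that the average-case problem is well posed. Next, assume towards contradiction that there is an efficient protocol $(P,V)$ that is statistically sound and computational zero-knowledge against efficient malicious verifiers, witnessed by an efficient simulator $S$. On any $\rho \in X^{\mathrm{yes}}$, the ZK guarantee implies that $S(\rho)$ outputs a transcript that $V$ accepts with overwhelming probability. On any $\rho \in X^{\mathrm{no}}$, statistical soundness forces $S(\rho)$ to produce accepting transcripts only with negligible probability, since otherwise the composition ``$S$ followed by $V$'' would be an efficient cheating prover on a NO instance. Hence that very composition is an efficient quantum distinguisher between $\rho^{\mathrm{yes}}_{pk}$ and $\rho^{\mathrm{no}}_{pk}$ with non-negligible advantage, contradicting CPA security of the LWE scheme.

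To sharpen the conclusion from \emph{not zero-knowledge} to \emph{not witness-indistinguishable}, I would enrich the construction so that each YES instance admits two efficiently usable witnesses, for example a coherent double encryption of the plaintext under two independently sampled public keys $pk_0,pk_1$ with respective witnesses $sk_0$ and $sk_1$. A hybrid argument then chains together: (i) witness-indistinguishability of the protocol across the two witnesses; (ii) LWE-based indistinguishability between the honest key pair $(pk_0,pk_1)$ and a lossy variant in which only one of the two secret keys remains functional; and (iii) the soundness-based distinguisher of the previous paragraph, applied in the lossy mode. Together these reduce the existence of an efficient WI protocol for the problem to breaking LWE.

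The main obstacle I anticipate is not the ZK step, which is essentially a quantum analog of the folklore ``ZK implies easy'' argument for problems with computationally indistinguishable YES/NO instances, but the WI step. Ruling out WI requires engineering a problem that genuinely admits two honestly usable witnesses, and a lossy-mode variant of the public parameters that is LWE-indistinguishable from the real distribution while still preserving the statistical soundness of the protocol. Threading a non-negligible distinguishing advantage through the chain of hybrids without colliding with soundness, and ensuring that the lossy-mode instance still lies in the promise set, is where the bulk of the delicate work lies.
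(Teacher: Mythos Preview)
Your first argument---the single-key construction plus the claim that soundness forces the simulator to reject on NO instances---has a genuine gap. The definition of mQMAzk allows the zero-knowledge protocol to be \emph{multi-round}, even though the underlying problem is in mQMA. For a multi-round protocol the simulator $S$ is not a prover: it produces the verifier's final state, not an interactive strategy, and a simulated transcript cannot in general be replayed against a fresh honest verifier (the verifier messages in it need not match what $V$ would actually send in response to those prover messages). So ``$S$ followed by $V$'' is not a well-defined cheating prover, and soundness places no constraint whatsoever on $S$'s behaviour on NO instances. All CPA security gives you is that $S$ behaves the same on YES and NO, which is perfectly consistent with $S$ outputting ``accept'' in both cases; the ZK guarantee only speaks about YES.

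Your second (WI) paragraph is on the right track and is essentially how the paper proceeds, but the mechanism is more direct than your lossy-mode sketch. The paper's instance is $\rho^{(\mathrm{pk}_0,\mathrm{pk}_1,b_0,b_1)}$, a product of a random encryption of $b_i$ under $\mathrm{pk}_i$; YES when some $b_i=0$, NO when $b_0=b_1=1$; witness $\mathrm{sk}_i$ for any $i$ with $b_i=0$ (the $\mathrm{pk}_i$ are augmented with statistically binding commitments to $\mathrm{sk}_i$ so that soundness holds). Both keys are always honestly generated---no lossy mode. The key move is to let a \emph{malicious verifier} feed in different copies: starting from the accepting run on $(0,0)$ with prover witness $\mathrm{sk}_0$, switch the verifier's input to $(1,0)$. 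Either the acceptance probability stays negligibly close, in which case CPA security under $\mathrm{pk}_1$ (the prover holds only $\mathrm{sk}_0$) lets you further switch the verifier's input to $(1,1)\in X^{\text{no}}$, contradicting soundness; or it drops noticeably, but the symmetric argument with prover witness $\mathrm{sk}_1$ and CPA under $\mathrm{pk}_0$ shows acceptance on verifier input $(1,0)$ stays high, so this malicious verifier distinguishes the two witnesses, contradicting WI and hence ZK. The entire hybrid runs on CPA security with respect to whichever key the prover does \emph{not} hold, and the case split replaces any need for lossy parameters.
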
\else
\begin{theorem*}[Limits on mQMAzk]
    If the learning with errors (LWE) problem is hard, then there exist problems in mQMA without an efficient zero-knowledge protocol.
\end{theorem*}\fi
Finally, we outline how our techniques can be used to achieve a meaningful notion of zero-knowledge for the Uhlmann transformation problem \cite{BEMPQY23}, whereas prior works only considered honest verifiers.

\subsection{Related Work}

As already mentioned, prior work considering zero-knowledge in the context of quantum problems \cite{BEMPQY23,CCHS24} only studied this notion in the honest-verifier settings. Closest to our work is \cite{CCHS24}, where the authors systematically study (interactive and non-interactive) complexity classes where problem instances are quantum states, propose complete problems, and study separations across different classes. While they also discuss the notion of zero-knowledge for quantum promise problem, their results are in the honest-verifier settings and in fact it is explicitly left as an open problem to construct \emph{statistical} zero-knowledge protocols for quantum problems. Our work can be seen as a conditional resolution of this problem, albeit in the computational settings.

On the other side of the spectrum, zero-knowledge proofs for QMA \cite{ZKQMA} do consider malicious verifiers. The difference compared to our work is that instances in the complexity class QMA are classical bitstrings, despite the prover and the verifier being quantum algorithms. We refer the reader to \cite{CCHS24} for a detailed discussion. Of course, our theorem captures zero-knowledge for QMA as a special case.

\subsection{Motivation}

We discuss a few potential applications for zero-knowledge protocol for quantum problems.

\paragraph{Testing Properties of Quantum States.} Many areas of computer science and physics are concerned with learning properties of quantum states, see for example \cite{aaronson2007learnability,christandl2012reliable,pallister2018optimal,aaronson2018shadow,huang2020predicting,fawzi2024learning}. Furthermore, quantum property testing \cite{buhrman2008quantum,montanaro2013survey} studies how many copies of a given state are necessary in order to learn whether a state possesses a specific property. A natural question \cite{CCHS24} is whether interaction with an (untrusted) prover lowers the number of copies necessary. Using zero-knowledge protocols would ensure that even a malicious verifier can learn nothing other than the property holds.

In physics, one can imagine experimentalists wanting to certify that a given state is a ground state of some Hamiltonian, or that it lies in a particular phase of matter, without exposing the full state. This is different from certifying that \emph{there exists} a state that satisfies such property (which is a regular QMA statement), but rather we can prove that \emph{the state that we are currently holding} satisfies it.
Alternatively, one can prove that a given state has some hidden entanglement structure, without fully revealing the correlations.

\paragraph{Zero-Knowledge Uhlmann Beyond Honest Verifiers.} The Uhlmann transformation problem studies the complexity of implementing the unitary transformation from Uhlmann's theorem. A recent work of Bostanci et al.~\cite{BEMPQY23} identifies the Uhlmann transformation problem as a primitive of central interest, characterizing critical building blocks in quantum Shannon theory, quantum cryptography, quantum gravity, and quantum complexity. One of their main contributions is an \emph{honest-verifier} zero-knowledge protocol for the Uhlmann transformation problem, where at the end of the protocol the prover is guaranteed to apply the Uhlmann unitary to a state chosen by the verifier.

In this work, we show that it is possible to upgrade their protocol to achieve a meaningful notion of zero-knowledge also against a \emph{malicious} verifier: Loosely speaking, we show the existence of a simulator that simulates the real interaction with a \emph{single} oracle query to the Uhlmann unitary, capturing the fact that the only freedom of the verifier is which state to use, but nothing more.

\paragraph{Quantum Cryptography.} One of the textbook applications of zero-knowledge proofs for NP is to upgrade protocols secure against semi-honest adversaries to \emph{malicious security} \cite{GMW}. Naturally, one might expect that zero-knowledge for QMA might serve the same role for quantum protocols. Alas, the situation is more complicated, precisely because a quantum interaction consists of quantum states and proving the ``well-formedness'' of a quantum state is not a QMA statement (the statement itself involves a quantum state).

As an example, image the natural scenario where the verifier is given a commitment to a quantum circuit $\mathrm{Com}(C)$, together with (many copies of) a state $\ket{\psi}$ and a prover wants to convince them that $C\ket{0} = \ket{\psi}$. 
To prove such statements, one has to resort to ad-hoc techniques, whereas our framework provides a method to systematically approach this kind of problems.

\subsection{Open Problems}

We view zero-knowledge proofs for quantum problems as a promising avenue for future research, and many open problems remain. For example, all of our protocol require polynomially-many copies of the input state. Can we achieve meaningful soundness and zero-knowledge even when the verifier is given a \emph{single copy} of the state? Ideally we would like the test to be non-destructive.

Another fascinating question concerns the round complexity of our protocols. Our protocols have a (fixed) polynomial number of rounds in the security parameter, and using round-optimized secure computation protocols \cite{RoundMPC} one can achieve constant rounds, but under stronger cryptographic assumptions. Is it possible to construct constant-round protocols, assuming only the existence of EFI pairs?

Finally, a question left open by our work is whether a zero-knowledge protocol for mQMA or mQIP exists. Our theorem only rules out protocols with \emph{efficient provers}, but does not apply to the settings where the honest prover is not efficient. Although this would make the protocol less relevant for cryptographic applications, we would still like to know the answer to this question.

\subsection{Technical Outline}

For starters, let us consider the simpler settings of problem instances $X = (X^{\text{yes}}, X^{\text{no}})$ in pQMA. The prover and the verifier are given copies of a state $\ket{\psi}\in X^{\text{yes}}$ and at the end of the protocol the verifier should either accept or reject. An obvious first approach would be to let the prover and the verifier run a secure two-party computation protocol \cite{BCKM2PC,GLSV21}, where the prover additionally supplies the witness $\ket{\phi}$ as an input for the following functionality: Given $\ket{\phi}$ and $\ket{\psi}$, run the honest verification circuit and return its output. Alas, this idea quickly runs into trouble.

The problem is that the parties do not have a circuit to generate $\ket{\psi}$, so it can only be taken as part of the inputs of the secure computation. So the question is: Do we let the prover or the verifier supply $\ket{\psi}$? In the former case, the prover could supply a different state, for which he knows a witness, so one cannot argue soundness. For the latter, the verifier could input an arbitrary state, and use the output bit to learn non-trivial information about the prover's witness. Thus, zero-knowledge fails.

\paragraph{The ``De Finetti'' Approach.}
For the special case of pQMA, we can fix this problem by letting the prover and the verifier input many copies of the $\ket{\psi}$ state and securely compute the following functionality:
\begin{itemize}
    \item Step I: Randomly permute the copies of the prover and select a position $i^*$ among these copies.
    \item Step II: For all but the $i^*$-th copy, run a SWAP-test \cite{SWAP} with a different copy of the verifier. Abort if any of the tests fail.
    \item Step III: Run the honest verification circuit on input the $i^*$-th copy of $\ket{\psi}$ \emph{of the prover}, along with the witness $\ket{\phi}$. Return to the verifier the output of the circuit.
\end{itemize}
Note that the witness state $\ket{\phi}$ is only interacting with states supplied by the prover, which guarantees zero-knowledge -- the SWAP-tests are easy to simulate given enough copies of $\ket{\psi}$. To prove soundness against a malicious prover, we observe that the protocol (Step I) guarantees that the state $\rho$ in the registers of the prover is \emph{permutation-invariant}, so we can invoke the quantum De Finetti theorem \cite{RennerDeFinetti} to show that 
\[
\rho \approx \sum_ip_i \tau_i \otimes \dots \otimes \tau_i
\]
is approximated by a convex combination of separable states, where each register holds an identical state $\tau_i$. Crucially, we can interpret our test as a tensor product of classical-quantum channels, so we can appeal to a version of the De Finetti theorem \cite{BH13deFinetti} where the bound on the approximation scales favorably with the dimension of the Hilbert space. Given that the states in different registers are identical, we can use the fact that the SWAP-test guarantees that $\tau_i \approx \ketbra{\psi}{\psi}$ to derive the same conclusion on the $i^*$-th copy, although we never actually test it. Finally, soundness follows from the underlying pQMA verifier.

This approach works well for the pQMA case, but it stops working for pQIP instances. One issue is that the prover is not longer efficient, and it is therefore not clear how to reduce to the security of the secure computation protocol. The more severe problem is that the verification protocol is now interactive, so we can no longer claim that our verification is a product channel. Other versions of the De Finetti theorem, e.g., \cite{RennerDeFinetti,CHRISTANDLdeFINETTI}, do not yield meaningful bounds in our context. Instead, we resort to a \emph{completely different} approach, which also results into a protocol with better round complexity.

\paragraph{The ``Round Collapsing'' Approach.}
First, notice that in the \emph{honest-verifier} settings, the issue outlined above is not actually a problem: Since the verifier is honest, we trust it to supply the right state $\ket{\psi}$ as the input to the secure computation protocol. Clearly this is just deferring the problem (the main challenge was to achieve security against a malicious verifier), but it will soon become clear why this is a useful observation. A different issue is that the prover is no longer efficient, but this can be solved by running a new two-party computation protocol for each round of the verifier's computation, using quantum state commitments \cite{QSC} to ensure that the state of the verifier is kept consistent throughout different rounds. Due to the fact that the prover is inefficient, the security proof is delicate, by the hybrids can be made to work.

Next, we observe that the \emph{round-collapsing} theorems for QIP proofs, such as \cite{KW00,MW05,KKMV09}, result into protocols with a particularly useful structure. 
Let $\ket{\psi_v}$ be the initial state of the verifier (which consists of multiple copies of the statement $\ket{\psi}$). The $3$-message round-collapsed protocols have roughly of the following form.
\begin{enumerate}
    \item The verifier receives a register $W$ from the prover.
    \item The verifier samples a coin $b \in\{0,1\}$ uniformly at random and sends it to the prover.
    \item The prover returns a register $M$.
    \begin{itemize}
    \item If $b=0$: Apply $\mathrm{V}_2$ to $WM$ and measure the first qubit of $W$ in the computational basis. Accept if the outcome is $1$.
    \item If $b=1$: Apply $\mathrm{V}_1^\dagger$ to $WM$ and apply the projective measurement $\{\Pi_0, \Pi_1\}$ where
    \[
    \Pi_0 = \mathrm{Id} - \Pi_1 \quad \text{and} \quad \Pi_1 = \ketbra{\psi_v}{\psi_v}.
    \]
    Accept if the outcome $1$ is observed.
    \end{itemize}
\end{enumerate}
In particular, all the verifier is in this protocol is sending random challenge bits and project onto its initial state $\ket{\psi_v}$. An immediate difficulty is that we cannot quite do the latter, but we show that a SWAP-test provides a good enough approximation (with a slightly worse soundness bound), and the resulting protocol maintains soundness and honest-verifier zero-knowledge. 

We are omitting a lot of details here: Turns out that a straightforward application of existing theorems does not work. For instance, the recursive approach of \cite{KKMV09} would double the number of copies of $\ket{\psi}$ needed at every call, resulting in an exponential blow-up. In the actual protocol we need to apply a modified version of the Kitaev-Watrous \cite{KW00} compiler, followed by another modification of the Marriot-Watrous \cite{MW05} compiler, alternated with soundness amplifications steps. We refer to the technical sections for more details.

At this point we appear to be in good shape, since we have obtained an honest-verifier zero-knowledge protocol where the only thing that the verifier sends to the prover is random challenge bits, which seems to be innocuous even for a malicious verifier. Alas, we do not know how to prove the protocol secure against malicious verifier. For the informed reader, Watrous \cite{WatrousZK} rewinding strategy does not appear to be working without a good projector onto the initial state of the verifier.\footnote{This issue was already identified in \cite{CCHS24}, who left the problem open.} Instead, we resort to secure computation (again), but just to simulate the random coin tosses of the verifier. This can be shown to work, although once again the proof is quite delicate since the honest prover is not efficient, so care has to be taken when reducing to the security of the protocol. 

This completes the high-level description of the pQIPzk protocol. For more technical details, limitations for protocols with mixed problem instances, and for zero-knowledge for Uhlmann, we refer the reader to the technical sections.
\section{Preliminaries}

Throughout this work, we denote the security parameter by $\lambda$.
We denote by $1^{\lambda}$ the all-ones string of length~$\lambda$.
We say that a function~$f$ is negligible in the security parameter~$\lambda$ if~$f(\lambda) = {\lambda}^{-\omega(1)}$.

Unless stated otherwise, all random variables and probability distributions are finitely supported, that is, take values in finite sets. We write~$\mathrm{Id}$ for identity matrices or operators, and~$\Tr$ for the trace of a matrix or operator.
A unitary operator~$\mathrm{U}$ is one that satisfies~$\mathrm{U}\mathrm{U}^\dagger = \mathrm{U}^\dagger \mathrm{U}=\mathrm{Id}$, and a Hermitian operator~$\mathrm{U}$ is one such that~$\mathrm{U}^\dagger = \mathrm{U}$.

\subsection{Quantum Information}

In this section, we provide a brief overview of quantum information. For a more detailed introduction, we refer to \cite{NC00,Watrous2018theory}.
A \emph{(quantum) register}~$A$ consisting of $n$ qubits is associated with the Hilbert space $\mathcal{H}_A = (\mathbb{C}^2)^{\otimes n}$. Given two registers $A$ and $B$, we denote the composite register by~$AB$.
The corresponding Hilbert space is given by the tensor product $\mathcal{H}_{AB} = \mathcal{H}_A \otimes \mathcal{H}_B$.

The \emph{(quantum) state} of a register $A$ is described by a density operator $\rho_A$ on $\mathcal{H}_A$, which is a positive semi-definite Hermitian operator with trace equal to one.
A state is called \emph{pure} if it has rank one.
Thus, pure quantum states can be represented by unit vectors $\ket\psi_A \in \mathcal{H}_A$, with $\rho_A = \ketbra{\psi}{\psi}_A$.
For a quantum state~$\rho_{AB}$ on $\mathcal{H}_{AB}$, we denote $\rho_A = \Tr_B(\rho_{AB})\in \mathcal{H}_A$ the reduced state of $\rho_{AB}$ on $A$.

A \emph{(quantum) channel} $\mathrm{F}$ is a completely positive trace-preserving (CPTP) map from a register $A$ to a register $B$.
In other words, given any density matrix $\rho_A$, the channel $\mathcal{F}$ produces $\mathrm{F}(\rho_A) = \sigma_B$, which is another state on register~$B$, and the same applies when $\mathrm{F}$ is applied to the $A$-register of a quantum state $\rho_{AC}$, resulting in the quantum state~$\sigma_{BC} = (\mathrm{F} \otimes  \mathrm{Id})(\rho_{AC})$, where in this case $\mathrm{Id}$ denotes the identity channel.
For any unitary operator~$\mathrm{U}$, there is a quantum channel that maps any input state~$\rho$ to the output state $\mathrm{U} \rho \mathrm{U}^\dagger$.

A \emph{projective measurement} is defined by a set of projectors $\{ \Pi_j \}_j$ such that $\sum_j \Pi_j = \mathrm{Id}$.
A projector $\Pi$ is a Hermitian operator such that $\Pi^2 = \Pi$, that is, an orthogonal projection.
Given a state $\rho$, the measurement yields outcome $j$ with probability $p_j = \Tr(\Pi_j \rho)$, upon which the state changes to ${\Pi_j \rho \Pi_j/p_j}$.
A basis measurement is one where~$\Pi_j=\ketbra{e_j}{{e_j}}$ and the $\{\ket{e_j}\}_j$ (necessarily) form an orthonormal basis.

A \emph{positive operator-valued measure} (POVM) is a generalization of a projective measurement.
A POVM is defined by a set of positive semi-definite operators~$\{\mathrm{E}_j\}_j$ such that~$\sum_j \mathrm{E}_j = \mathrm{Id}$ (that is, the $\mathrm{E}_j$ no longer need to be projections).
As before, given a quantum state~$\rho$, the probability of obtaining outcome $j$ when performing the measurement is given by~$p_j = \Tr(\mathrm{E}_j \rho)$, but the state after the measurement is no longer uniquely specified.

The \emph{trace distance} between two states $\rho$ and $\sigma$, denoted by $\mathrm{Td}(\rho, \sigma)$, is defined as
\[
\mathrm{Td}(\rho, \sigma) = \frac{1}{2}\norm{\rho - \sigma}_1 = \frac{1}{2}\Tr\left(\sqrt{(\rho-\sigma)^\dagger(\rho - \sigma)}\right)
\]
The operational meaning of the trace distance is that $\frac{1}{2}(1 + \mathrm{Td}(\rho,\sigma))$ is the maximal probability that
two states $\rho$ and $\sigma$ can be distinguished by any (possibly unbounded) quantum channel or algorithm. That is,
\begin{align*}
    \mathrm{Td}(\rho, \sigma) = \max_{\mathcal A} \abs*{ \Pr\left( \mathcal A(\rho) = 1 \right) - \Pr\left( \mathcal A(\sigma) = 1 \right) }
\end{align*}
where the maximum is over arbitrary quantum channels~$\mathcal A$ that output a single bit. We recall the following version of the gentle measurement lemma.

\begin{lemma}[Gentle Measurement]\label{lmm:gentle}
    Let $\rho$ be a density matrix and $\Pi$ be an orthogonal projection. If $\Tr(\Pi\rho)\geq 1-\varepsilon$, then $\mathrm{Td}(\rho, \Pi \rho \Pi/\Tr(\Pi\rho)) \leq \sqrt{\varepsilon}$.
\end{lemma}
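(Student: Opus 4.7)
The plan is to reduce the claim about mixed states to a simple computation on pure states by going up to a purification, exploiting the fact that trace distance is non-increasing under the partial trace (which is how we got from the pure-state statement on the purification to a mixed-state statement on the reduced density in the first place).

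Concretely, I would introduce an auxiliary register $R$ and let $\ket{\psi}_{AR}$ be an arbitrary purification of $\rho$, so that $\mathrm{Tr}_R(\ketbra{\psi}{\psi}_{AR}) = \rho$. Let $q = \mathrm{Tr}(\Pi\rho) \geq 1-\varepsilon$, and define the post-measurement pure state
\[
\ket{\phi}_{AR} = \frac{1}{\sqrt{q}} (\Pi_A \otimes \mathrm{Id}_R) \ket{\psi}_{AR}.
\]
A short calculation shows that $\mathrm{Tr}_R(\ketbra{\phi}{\phi}_{AR}) = \Pi\rho\Pi/q$, i.e., the reduced state of $\ket{\phi}$ on $A$ is exactly the normalized post-measurement state in the lemma.

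Next I would compute the overlap $\langle\psi|\phi\rangle = \langle\psi|(\Pi\otimes \mathrm{Id})|\psi\rangle/\sqrt{q} = \mathrm{Tr}(\Pi\rho)/\sqrt{q} = \sqrt{q}$, using that $\Pi$ is a projection (so $\Pi^2 = \Pi$ and $\Pi = \Pi^\dagger$). Hence $|\langle\psi|\phi\rangle|^2 = q \geq 1-\varepsilon$. Using the standard identity for the trace distance between two pure states,
\[
\mathrm{Td}(\ketbra{\psi}{\psi}, \ketbra{\phi}{\phi}) = \sqrt{1-|\langle\psi|\phi\rangle|^2} \leq \sqrt{\varepsilon}.
\]
Finally, trace distance is monotone under quantum channels, and in particular under the partial trace $\mathrm{Tr}_R$, so
\[
\mathrm{Td}\!\left(\rho, \frac{\Pi\rho\Pi}{q}\right) = \mathrm{Td}(\mathrm{Tr}_R \ketbra{\psi}{\psi}, \mathrm{Tr}_R \ketbra{\phi}{\phi}) \leq \mathrm{Td}(\ketbra{\psi}{\psi}, \ketbra{\phi}{\phi}) \leq \sqrt{\varepsilon},
\]
which is exactly the claim. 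There is no real obstacle here: the entire argument rests on a purification, one line of algebra for the inner product, and two standard facts (the pure-state trace-distance formula and monotonicity under partial trace), both of which are typically taken as background in this section. The only mild subtlety is being explicit that the monotonicity statement is being invoked for the trace-out channel rather than for a general channel of the prover/verifier, but that is routine.
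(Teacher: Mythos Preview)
Your argument is correct: the purification trick, the inner-product computation $\langle\psi|\phi\rangle=\sqrt{q}$, the pure-state trace-distance identity, and the monotonicity of trace distance under partial trace all combine exactly as you describe to yield the bound $\sqrt{\varepsilon}$.

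As for comparison with the paper: there is nothing to compare. The paper does not prove this lemma at all; it is stated in the preliminaries with the phrase ``We recall the following version of the gentle measurement lemma'' and no proof is given. It is treated as a standard background fact. Your purification-based proof is one of the usual textbook arguments for this statement, so it is entirely appropriate here.
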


The \emph{fidelity} of two states $\rho$ and $\sigma$ can be defined as 
\begin{equation}\label{eq:fidelity}
F(\rho, \sigma) = \max_{\ket{\psi}, \ket{\phi}} \abs{\ip{\psi}{\phi}}^2
\end{equation}
where the maximum is taken over all purifications of $\rho$ and $\sigma$, i.e., all $K$, $\ket{\psi}$, and $\ket{\phi}$ such that
\[
\Tr_K(\ketbra{\psi}{\psi}) = \rho
\quad\text{and}\quad
\Tr_K(\ketbra{\phi}{\phi}) = \sigma.
\]
For all states $\rho$, $\sigma$, and $\tau$, the fidelity satisfies the following reverse triangle inequality
\begin{equation}\label{eq:fidtriangle}
    F(\rho,\sigma)^2 +F(\sigma,\tau)^2 \leq 1+F(\rho,\tau).
\end{equation}

We will also consider the notion of \emph{computational indistinguishability}.
Recall that a \emph{nonuniform QPT algorithm} $\mathcal A = \{\mathcal A_\lambda\}_\lambda$ consists of a family of quantum channels that can be implemented by polynomial-size quantum circuits that get quantum states of a polynomial number of qubits as advice.
We call $\mathcal A$ a \emph{nonuniform QPT distinguisher} if the channels output a single bit.

\begin{definition}[Computational Indistinguishability]\label{def:computational indist}
We say that two families of states~$\{\rho_\lambda\}_\lambda$ and $\{\sigma_\lambda\}_\lambda$ are \emph{computationally indistinguishable} if for every nonuniform QPT distinguisher~$\mathcal A = \{\mathcal A_\lambda\}_\lambda$ there exists a negligible function~$\mu$ such that the following holds for all~$\lambda$:
\begin{align}\label{eq:advantage}
     \abs{ \Pr\left( \mathcal A_\lambda(\rho_\lambda) = 1 \right) - \Pr\left( \mathcal A_\lambda(\sigma_\lambda) = 1 \right) } \leq \mu(\lambda).
\end{align}
We abbreviate this definition by writing $\{\rho_\lambda\}_\lambda \approx_c \{\sigma_\lambda\}_\lambda$.
\end{definition}

\subsection{The SWAP Test}

The $\mathrm{SWAP}$ operator is defined as the unitary operator mapping $\mathrm{SWAP}\ket{\psi}\ket{\phi} = \ket{\phi}\ket{\psi}$, for all pure states $\ket{\psi}$ and $\ket{\phi}$. The \emph{SWAP test} \cite{SWAP} is a well-known protocol to estimate the distance of two quantum states $\rho$ and $\sigma$. The test proceeds as follows:
\begin{itemize}
    \item Initialize an ancilla qubit in the $\ketbra{+}{+}$ state.
    \item Controlled on the ancilla qubit, apply a $\mathrm{SWAP}$ operator on the two input states.
    \item Measure the ancilla in the Hadamard basis and accept if the outcome is $0$.
\end{itemize}
It can be verified by direct calculation that, for the special case of one state $\sigma = \ketbra{\psi}{\psi}$ being pure, the output of the SWAP test is characterized by the POVM $\{ \mathrm{E}_0, \mathrm{E}_1\}$ applied to $\rho$, where
\begin{equation}\label{eq:swapPOVM}
    \mathrm{E}_0 = \frac{\mathrm{Id}+ \ketbra{\psi}{\psi}}{2} 
    \quad  \text{and}\quad \mathrm{E}_1 = \frac{\mathrm{Id}- \ketbra{\psi}{\psi}}{2}.
\end{equation}

\subsection{Quantum Message Authentication Codes}

We recall the notion of a quantum message authentication code (MAC) \cite{DNS10,BW16}. A quantum MAC consists of a collection of encoding and decoding channels
\[
\mathrm{Enc}^{(k)} : M \mapsto C \quad\text{and}\quad
\mathrm{Dec}^{(k)} : C \mapsto MF
\]
for all $k \in K$, where $K$ is the set of all possible keys. The register $F$ is referred to as the \emph{flag} register and it consists of a single qubit. For correctness, we require that for all states $\rho_M \in \mathcal{H}_M$ and all $k\in K_\lambda$ it holds that
\[
\mathrm{Dec}^{(k)}\mathrm{Enc}^{(k)}\rho_{M} = \rho_M \otimes\ketbra{1}{1}_F.
\]
In other words, the flag register is always set to $1$ (denoting accept) after the encoding and decoding channels are applied in sequence.

For security, we require that the encoded state cannot be modified without the decoding algorithm noticing. In particular, we require the strong notion of \emph{simulation security} as defined in \cite{BW16}. In the same work, the authors show that the Clifford code and the trap code satisfy this definition (unconditionally). Let us define, for a fixed $k$ and an adversarial unitary $\mathrm{A}$, the real channel as
\[
\mathrm{Real}^{(k)}(\mathrm{A}) : \rho_{MR} \mapsto  (\mathrm{Dec}^{(k)} \otimes \mathrm{Id})\mathrm{A}((\mathrm{Enc}^{(k)}\otimes \mathrm{Id})\rho_{MR})\mathrm{A}^\dagger.
\]
On the other hand, for two completely positive maps $(\mathrm{S}^{\mathrm{acc}},\mathrm{S}^{\mathrm{rej}})$ acting on the register $R$ and satisfying $\mathrm{S}^{\mathrm{acc}} +\mathrm{S}^{\mathrm{rej}} = \mathrm{Id}$, we define the ideal channel as
\[
\mathrm{Ideal}(\mathrm{S}^{\mathrm{acc}},\mathrm{S}^{\mathrm{rej}}) : \rho_{MR} \mapsto 
(\mathrm{Id}\otimes \mathrm{S}^{\mathrm{acc}})\rho_{MR} \otimes \ketbra{1}{1} + \Tr_M\left(\mathrm{Id} \otimes(\mathrm{S}^{\mathrm{rej}})\rho_{MR}\right) \otimes \tau_M \otimes \ketbra{0}{0}
\]
where $\tau_M$ is some fixed state. We are now ready to state formally the security of a quantum MAC.
\begin{definition}[Quantum MAC Security]\label{def:qMAC}
    A quantum MAC is $\varepsilon$-secure if for all unitaries $\mathrm{A}$ there exists a simulator $(\mathrm{S}^{\mathrm{acc}},\mathrm{S}^{\mathrm{rej}})$ such that for all states $\rho_{MR}$ we have
    \[
    \mathrm{Td}\left(
    \frac{1}{|K|}\sum_{k \in K}\mathrm{Real}^{(k)}(\mathrm{A})\rho_{MR}
    , \mathrm{Ideal}(\mathrm{S}^{\mathrm{acc}},\mathrm{S}^{\mathrm{rej}}) \rho_{MR}\right) \leq \varepsilon.
    \]
    Furthermore, if $\mathrm{Real}^{(k)}(\mathrm{A})$ is polynomial-size in the size of the input register $M$, then so is $\mathrm{Ideal}(\mathrm{S}^{\mathrm{acc}},\mathrm{S}^{\mathrm{rej}})$.
\end{definition}
In \cite{BW16} it is shown that the trap code with parameter $t$ is $\varepsilon$-secure for a function $\varepsilon=\varepsilon(t) = 3^{-(t+1)}$, so setting $t$ proportional to the security parameter $\lambda$, we can henceforth assume that the $\varepsilon$ is bounded by a negligible function in $\lambda$.

\subsection{Quantum Interactive Protocols}\label{sec:interactive}

An $m$-message quantum interactive protocol \cite{KW00} (for an odd $m$) is specified by an initial state $\ket{\psi_{\mathrm{init}}} \in \mathcal{H}_R \otimes  \mathcal{H}_W \otimes \mathcal{H}_M$ and two sequences of unitaries 
\[
\mathrm{P} = (\mathrm{P}_1, \dots, \mathrm{P}_{(m+1)/2})
\quad  \text{and}\quad
\mathrm{V} =(\mathrm{V}_1, \dots, \mathrm{V}_{(m+1)/2})
\]
where we refer the the former as the \emph{prover} and to the latter as the \emph{verifier}. Each unitary of the prover acts on registers $R M$  whereas each unitary of the verifier acts on $W M$. We refer to $R$ and $W$ as the workspace of the prover and verifier, respectively, and $M$ as the message register.

We assume without loss of generality that the prover sends the first message, then each round of the interaction proceeds as follows:
\begin{itemize}
    \item The prover applies $\mathrm{P}_i$ to $R  M$ and sends the message register $M$ to the verifier.
    \item The verifier applies $\mathrm{V}_i$ to $W M$ and sends $M$ back to the prover.
\end{itemize}
At the end of the interaction, the verifier applies the unitary $\mathrm{V}_{(m+1)/2}$ to $W M$ and applies the projective measurement $\Pi_\mathrm{Ver} =\{\Pi_0, \Pi_0\}$ where
\[
\Pi_0 = \ketbra{0}{0} \otimes \mathrm{Id}
\quad\text{and}\quad
\Pi_1 = \ketbra{1}{1} \otimes \mathrm{Id}
\]
and the projection acts non-trivially on the (without loss of generality) first qubit of $W$. If the result of the measurement is $1$, the verifier accepts. Otherwise, it rejects. Given this convention, we can uniquely specify the state of the system at round $k$ as follows:
\[
\mathrm{V}_k\mathrm{P}_k \cdots \mathrm{V}_1\mathrm{P}_1 \ket{\psi_\mathrm{init}}
\]
by appropriately tensoring the operators with identity.

In this work we consider \emph{families} of interactive protocols, where all of the variables above are parametrized by the security parameter $\lambda$, although for the sake of readability we often omit the index. We also require the verifier to be QPT, i.e., implementable by polynomial-size quantum circuits.

\subsection{Secure Computation}\label{sec:2pc}

Let $\mathrm{F} = \{\mathrm{F}_\lambda\}_\lambda$ be a family of quantum channels mapping $\mathrm{F}_\lambda : A_\lambda B_\lambda \mapsto \Tilde{A}_\lambda\Tilde{B}_\lambda$ implementable by a polynomial-size quantum circuit. A \emph{secure computation} protocol for $\mathrm{F}$ is an interactive protocol between two parties, Alice and Bob. Alongside their workspaces, Alice is given as input a state $\rho_{A_\lambda}$ and Bob is given as input a state $\rho_{B_\lambda}$ and both parties receive the security parameter $1^\lambda$. At the end of the interaction the \emph{output} of the protocol is defined to be the reduced state of their output registers $\Tilde{A}_\lambda\Tilde{B}_\lambda$. For the sake of readability we henceforth drop the dependency of $\lambda$, and we refer to the output of the protocol as 
\[
\mathrm{Alice}(\rho_A) \rightleftharpoons \mathrm{Bob}(\rho_B).
\]
We require the protocol to satisfy the following notion of security, that we refer to as malicious simulation security (it is also known as security with aborts).
\begin{definition}[Malicious Simulation Security]\label{def:2pc}
    Let $\mathrm{F}$ be a quantum channel implementable by a polynomial-size quantum circuit. An interactive protocol for computing $\mathrm{F}$ satisfies malicious simulation security against Alice, if the following holds. For any (malicious) QPT algorithm $\mathcal{A} = \{\mathcal{A}_\lambda\}_\lambda$, there exists a QPT simulator $\mathcal{S} = \{\mathcal{S}_\lambda\}_\lambda$, such that for any QPT distinguisher $\mathcal{D} = \{\mathcal{D}_\lambda\}_\lambda$, there exists a negligible function $\mu$ such that for all security parameters $\lambda\in\mathbb{N}$, all non-uniform bipartite advice state $\rho_{AD}$, and Bob's input $\rho_B$ it holds that:
    \[
    \abs{\Pr\left(\mathcal{D}_\lambda\left( \mathcal{A}_\lambda(\rho_A) \rightleftharpoons \mathrm{Bob}(\rho_B),\rho_D\right)= 1\right) -
    \Pr\left(\mathcal{D}_\lambda\left( \mathcal{S}_{\lambda,\mathrm{F}}(1^\lambda, \rho_A, \rho_B),\rho_D\right)= 1\right)
    } \leq \mu(\lambda)
    \]
    where the algorithm $\mathcal{S}_{\lambda,\mathrm{F}}$ is defined as follows:
    \begin{itemize}
        \item Run the simulator $\mathcal{S}_\lambda(1^\lambda, \rho_A)$, which outputs a quantum state $\rho_{A}^*$.
        \item Apply the channel $\mathrm{F}$ to $\rho_{AB}^*$ to compute an output $\rho_{\Tilde{A}\Tilde{B}}$.
        \item Continue executing the simulator on $\rho_{\Tilde{A}}$, which in the end outputs a certain state $\sigma$ and a bit $b\in\{0,1\}$.
        \item Output $(\sigma, \rho_{\Tilde{B}})$ if $b=0$ and  $(\sigma, \bot)$ otherwise.
    \end{itemize}
\end{definition}
Note that \cref{def:2pc} only refers to security against a corrupted Alice, but we can define security against a corrupted Bob analogously, swapping the roles of the parties. We say that the protocol satisfies \emph{statistical} security if the same definition holds, except that the adversary and the distinguisher (and consequently the simulator) are not required to be QPT, i.e., security holds against \emph{unbounded} adversaries. It is well-known that a secure computation protocol cannot satisfy statistical security against both a corrupted Alice and a corrupted Bob, but there exist protocols that satisfy statistical security against a corrupted Bob and standard (computational) security against Alice. In this case, we say that a protocol satisfies \emph{one-sided statistical security}.

In \cite{EFI} it is show, using techniques from \cite{BCKM2PC,AQY22}, that one-sided statistical secure computation for \emph{classical functions} exist assuming the existence of EFI pairs. EFI pairs generalize the notion of one-way functions and consists of two families of quantum states $\{\rho_\lambda\}_\lambda$ and $\{\sigma_\lambda\}_\lambda$ that are (i) efficiently preparable, i.e., they can be generated by a polynomial-size quantum circuit, (ii) statistically far, and (iii) computationally indistinguishable, that is
\[
\{\rho_\lambda\}_\lambda \approx_c \{\sigma_\lambda\}_\lambda.
\]
On the other hand, \cite{DNS10} shows that \emph{two-sided statistically} secure computation for any quantum channel is possible, assuming an ideal implementation of a secure computation for classical functionality. In \cite{DNS10} only one party receives the output. Thus, combining all of the aforementioned works, we obtain the following implication (already observed in \cite{EFI}).
\begin{lemma}[\cite{EFI,DNS10}]\label{lmm:2pcEFIoneside}
    If EFI pairs exist, then one-sided statistical secure computation exists for all quantum channels implementable by a polynomial-size quantum circuit, where only one party receives the output.
\end{lemma}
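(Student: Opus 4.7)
The plan is to obtain the statement by composing two off-the-shelf results. From \cite{EFI,BCKM2PC,AQY22}, the existence of EFI pairs implies a protocol that securely computes any classical functionality with one-sided statistical security (statistical against one corrupted party, computational against the other). From \cite{DNS10}, any quantum channel implementable by a polynomial-size quantum circuit can be securely computed, with only one party receiving the output, provided the parties have access to an ideal classical secure-computation oracle; in the hybrid model, the DNS10 protocol achieves \emph{statistical} security against both corruptions.

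Given these ingredients, the construction is the DNS10 protocol where each invocation of the ideal classical functionality is replaced by a real execution of the one-sided statistically secure classical protocol guaranteed by \cref{lmm:2pcEFIoneside}'s first hypothesis. The security analysis proceeds by a hybrid argument: starting from the ideal (hybrid-model) DNS10 simulator, one replaces the ideal classical calls one at a time by their real implementations. On the party against which the classical protocol is statistically secure, each swap induces only a negligible statistical error. On the other party, each swap costs only a negligible computational indistinguishability gap, since the simulator guaranteed by \cref{def:2pc} is QPT. Because DNS10 invokes the classical functionality only polynomially many times, the cumulative error is negligible and the resulting protocol inherits one-sided statistical security.

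The main subtlety is quantum composition: the adversary may hold registers entangled with the honest party's state across successive calls to the classical subroutine, so one needs to verify that the hybrids remain indistinguishable under this entanglement. This is handled by using the \emph{non-uniform} formulation of \cref{def:2pc}, which allows arbitrary bipartite advice $\rho_{AD}$ to carry the purification/entanglement across the hybrid boundary into the external distinguisher. With this formulation, the standard sequential composition theorem applies, and the claim follows. I would not expect any real obstacle beyond bookkeeping for this composition, since the same argument was already invoked (implicitly) in \cite{EFI}; the lemma is essentially a statement that the two cited constructions compose cleanly.
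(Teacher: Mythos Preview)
Your proposal is correct and matches the paper's approach exactly: the paper does not give a proof of this lemma but simply states it as the composition of \cite{EFI,BCKM2PC,AQY22} (EFI $\Rightarrow$ one-sided statistical classical 2PC) with \cite{DNS10} (statistical quantum 2PC in the classical-2PC-hybrid model, one-sided output), noting that the implication was already observed in \cite{EFI}. Your write-up is a faithful expansion of that one-paragraph justification, including the composition argument the paper leaves implicit.
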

Furthermore, we can easily show that a one-sided secure computation where one party receives the output, implies one where both parties receive the output. Note that this does not contradict known impossibility results on fairness, since we only achieve security with aborts. The proof is a straightforward consequence of the existence of quantum MACs.
\begin{lemma}[EFIs Imply Secure Computation]\label{lmm:2pcEFI}
    If EFI pairs exist, then one-sided statistical secure computation exists for all quantum channels implementable by a polynomial-size quantum circuit.
\end{lemma}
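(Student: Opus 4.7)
The plan is to bootstrap from \cref{lmm:2pcEFIoneside}, which provides one-sided statistical secure computation in the model where only one designated party (say Alice) receives the output, and to use a quantum MAC as the delivery vehicle for Bob's output. Concretely, Bob locally samples a fresh MAC key $k$, and the two parties invoke the inner one-sided protocol on the modified channel $\mathrm{F}'$ defined by $\mathrm{F}'(\rho_A, (\rho_B, k)) = (\rho_{\tilde A},\, \mathrm{Enc}^{(k)}(\rho_{\tilde B}))$, where $(\rho_{\tilde A}, \rho_{\tilde B}) = \mathrm{F}(\rho_A, \rho_B)$. Only Alice receives the output: she keeps $\rho_{\tilde A}$ and forwards the authenticated register to Bob, who applies $\mathrm{Dec}^{(k)}$, outputs the message if the flag qubit is $1$, and aborts otherwise. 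Correctness is immediate from the correctness of the MAC.

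For statistical security against a malicious Bob, I would observe that Bob's entire view in the outer protocol is determined by his view in the inner one-sided protocol together with a single final register sent by Alice (which is itself a deterministic function of what Alice held after the inner protocol). Since \cref{lmm:2pcEFIoneside} yields statistical security against Bob on the inner call, this transfers verbatim to the outer protocol, and on the honest-Alice branch Bob obtains $\rho_{\tilde B}$ by the MAC correctness.

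For computational security against a malicious Alice, I would construct the outer simulator by composing the inner simulator with the MAC simulator of \cref{def:qMAC}. The outer simulator runs the inner simulator to extract Alice's effective input $\rho_A^*$, sends $(\rho_A^*, \rho_B)$ to the ideal functionality for $\mathrm{F}$, receives $(\rho_{\tilde A}, \rho_{\tilde B})$, and feeds Alice $\rho_{\tilde A}$ together with the MAC encoding of $\rho_{\tilde B}$ under a freshly sampled key. Alice's subsequent action on the authenticated register is a unitary $\mathrm{A}$ that, by \cref{def:qMAC} averaged over $k$, is $\varepsilon$-close to the ideal channel $\mathrm{Ideal}(\mathrm{S}^{\mathrm{acc}}, \mathrm{S}^{\mathrm{rej}})$ acting only on her side. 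The outer simulator therefore runs $(\mathrm{S}^{\mathrm{acc}}, \mathrm{S}^{\mathrm{rej}})$ on Alice's register, and instructs the ideal functionality to deliver $\rho_{\tilde B}$ to Bob on the $\mathrm{S}^{\mathrm{acc}}$ branch and $\bot$ on the $\mathrm{S}^{\mathrm{rej}}$ branch, exactly matching the shape required by \cref{def:2pc}.

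The main subtlety is that \cref{def:qMAC} gives a simulator only in expectation over a uniformly random key $k$ that is independent of Alice's view. In the real outer protocol, however, $k$ is used inside the inner secure computation, so Alice could in principle correlate with it. The step that justifies invoking MAC security is precisely the switch from the real inner execution to the inner simulation: after this (computational) hybrid, Alice's view is a function of $\rho_A, \rho_{\tilde A}$ and of $\mathrm{Enc}^{(k)}(\rho_{\tilde B})$ for a freshly sampled, independent $k$, so averaging over $k$ is legitimate and \cref{def:qMAC} applies. Composing the two simulators and using a standard triangle inequality over the hybrids yields the desired negligible indistinguishability, and since MAC security is unconditional while only the inner protocol's guarantee against Alice is computational, the overall protocol preserves the one-sided statistical property claimed by the lemma.
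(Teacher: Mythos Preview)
Your proposal is correct and follows essentially the same approach as the paper: Bob supplies a fresh MAC key as part of his input to the inner one-sided protocol computing $\mathrm{F}$ followed by $\mathrm{Enc}^{(k)}$, Alice forwards the authenticated register, and security against a corrupted Alice is argued by first switching to the inner simulator (so that $k$ becomes independent of Alice's view) and then invoking the MAC simulator from \cref{def:qMAC}. Your identification of the subtlety about $k$'s independence is exactly the point the paper handles via its first hybrid.
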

\begin{proof}
    Let $\mathrm{F}: AB \mapsto \Tilde{A}\Tilde{B}$ be the quantum channel that we wish to implement and, for a key $k$, define $\mathrm{F}^{(k)}:AB \mapsto \Tilde{A}\Tilde{B}$ to be the channel that computes $\mathrm{F}$ and applies the encoding algorithm $\mathrm{Enc}^{(k)}$ of the quantum MAC to $\Tilde{B}$. Then define $\Tilde{\mathrm{F}}: ABK \mapsto \Tilde{A}\Tilde{B}$ as the controlled application of $\mathrm{F}^{(k)}$ on the $K$ register. Observe that if $\mathrm{F}$ is implementable by a polynomial-size quantum circuit, then so is $\Tilde{\mathrm{F}}$.
    The protocol proceeds as follows:
    \begin{itemize}
        \item Alice provides her input $\rho_{A}$.
        \item Bob provides his input $\rho_{B}$ along with a uniformly sampled key $k$ on register $K$.
        \item Alice and Bob run the secure computation for $\Tilde{\mathrm{F}}$ (whose existence is guaranteed by \cref{lmm:2pcEFIoneside}), where Alice receives the output.
        \item Alice sends the register $\Tilde{B}$ to Bob.
        \item Bob runs the decoding algorithm $\mathrm{Dec}^{(k)}$ of the quantum MAC and returns the resulting state if the verification passes. Otherwise it outputs $\bot$.
    \end{itemize}
    Security against a corrupted Bob follows with a straightforward reduction against the simulation security of the secure computation protocol. On the other hand, for a corrupted Alice, we can first consider a hybrid experiment where the output of the secure computation protocol is simulated by the ideal functionality. Indistinguishability from the real experiment follows by \cref{def:2pc}. In this hybrid experiment, the encoding algorithm is honestly computed and the key $k$ is uniformly sampled. Thus, we can invoke the simulator from the security of the quantum MAC (see \cref{def:qMAC}) to simulate the output of Alice acting as the identity on register $\Tilde{B}$, conditioned on Bob accepting. Indistinguishability of the simulation follows by a triangle inequality.
\end{proof}

\subsection{Quantum State Commitments}\label{sec:qscom}

A quantum state commitment scheme \cite{QSC} consists of a unitary $\mathrm{Com}$ acting on two registers $M$ and $W$ with consisting of $n$ and $\lambda$ qubits, respectively. Committing to a state $\rho_M$ is done by computing
\[
\mathrm{Com}\left(\rho_M \otimes \ketbra*{0^\lambda}{0^\lambda}_W\right) \in \mathcal{H}_C \otimes \mathcal{H}_D
\]
and by dividing the resulting state into two registers $C$ and $D$, which we refer to as the commitment register and the decommitment register, respectively. Verification of a commitment is done via the canonical procedure: Apply $\mathrm{Com}^\dagger$ to $CD$ and apply the projective measurement $\Pi_{\mathrm{Com}} = \{\Pi_0, \Pi_1\}$, where
\[
\Pi_0 = \mathrm{Id} - \Pi_1
\quad\text{and}\quad
\Pi_1 = \ketbra*{0^\lambda}{0^\lambda}_W \otimes\mathrm{Id}_M.
\]
The verification accepts if the outcome $1$ is observed. Security of a commitment requires the notion of hiding, meaning the the state $\rho_M$ is hidden by the commitment, and binding, meaning that one cannot open to a different state acting only on the decommitment register $D$. For our purposes, we will need the following form of binding, which we define next.

\begin{definition}[Double Opening Binding]\label{def:swapbind}
Consider the experiment $\mathrm{ExpDoubleOpen}$, played between an adversary $\mathcal{A} = \{\mathcal{A}_\lambda\}_\lambda$ and a challenger:
\begin{itemize}
    \item The adversary sends $CD$.
    \item The challenger checks that the decommitment is valid by applying $\mathrm{Com}^\dagger$ and applying the measurement $\Pi_{\mathrm{Com}}$, aborting if the first outcome is observed.
    \item The challenger samples a bit $b \gets \{0, 1\}$ and does one the following:
    \begin{itemize}
        \item If $b = 0$, it applies $\mathrm{Com}$ to recompute the commitment/decommitment.
        \item If $b = 1$, it applies the $\mathrm{SWAP}$ operator to $MM'$, where $M'$ is a register initialized in state $\ketbra{0^n}{0^n}$. It then applies $\mathrm{Com}$ to recompute the commitment/decommitment.
    \end{itemize}
    Finally, it sends $D$ back to the adversary (but not $C$).
    \item The adversary sends another decommitment on $D$ back to the challenger.
    \item The challenger checks that the decommitment is valid by applying $\mathrm{Com}^\dagger$ and applying the measurement $\Pi_{\mathrm{Com}}$, aborting if the first outcome is observed.
    \item The challenger then does one the following:
    \begin{itemize}
        \item If $b = 0$, it applies the $\mathrm{SWAP}$ operator to $MM'$, where $M'$ is a register initialized in state $\ketbra{0^n}{0^n}$. It then applies $\mathrm{Com}$ to recompute the commitment/decommitment.
        \item If $b = 1$, it applies $\mathrm{Com}$ to recompute the commitment/decommitment.
    \end{itemize}
    Finally, it sends all of its registers (including $M'$) back to the adversary.
    \item The adversary outputs a guess $b'$ and wins if $b' = b$.
\end{itemize}
We say that a commitment scheme satisfies double-opening binding if for all nonuniform QPT attackers $\mathcal{A} = \{\mathcal{A}_\lambda\}_\lambda$, there exists a negligible function $\mu$ such that for all security parameters $\lambda$ it holds that
\[
\Pr(\mathrm{ExpDoubleOpen}(1^\lambda, \mathcal{A}_\lambda) =1) \leq \frac{1}{2} + \mu(\lambda).
\]
\end{definition}
Likewise, we say that a commitment scheme satisfies \emph{statistical} double opening binding if the above holds for all (possibly unbounded) attackers. 

It is shown in \cite{QSC} that the above definition is equivalent to the notion of \emph{SWAP binding}. Furthermore, the authors show that quantum bit commitments (which are equivalent to EFI pairs) suffice to construct commitments that are statistically hiding and computationally SWAP binding.

In addition, in \cite{QSC} they prove a duality theorem showing that if a commitment scheme is statistically hiding and computationally SWAP binding, then the same scheme can be turned into a computationally hiding and statistically SWAP binding simply by inverting the roles of the registers $C$ and $D$. Combining these implications, we obtain that EFI pairs imply the existence of a commitment scheme that simultaneously satisfies (i) computational double opening binding and (ii) statistical double opening binding, by switching the roles of the registers $C$ and $D$. This is the only property that we are going to need in our work, so we refrain from formally defining hiding.

\section{Definitions}\label{sec:def}

In this section we define quantum problems and zero-knowledge proofs, which is the main focus of this work. We begin by defining problem instances.

\begin{definition}[Quantum Decision Problems]
    A quantum promise problem $X = (X^{\text{yes}}, X^{\text{no}})$ is defined by a pair of sets of quantum states $X^{\text{yes}}$ and $X^{\text{no}}$. A state $\rho$ is called a yes instance if $\rho \in X^{\text{yes}}$ and a no instance if $\rho \in X^{\text{no}}$. 
\end{definition}
If the sets consist exclusively of pure states, then we refer to such problems as \emph{pure} quantum decision problems. To disambiguate, we otherwise call them \emph{mixed} quantum decision problems. One can alternatively state state problems as \emph{pairs} of states and bits (where the bit denotes whether it is a yes instance or a no instance), but we stick to the canonical for quantum promise problem.

Fix a universal set of gates. We say that a family of quantum circuits $\{V_\lambda\}_\lambda$ is P-uniform if there exists a polynomial-time Turing Machine $\mathrm{M}$ that, on input $1^\lambda$ outputs a classical description (e.g., as a sequence of gates) of $V_\lambda$. The definition is naturally extended to interactive verification circuits (as defined in \cref{sec:interactive}). In a slight abuse of notation, we ignore the ancilla registers provided as an input to the verifier (in part because they can be specified as part of the input state), but it should be understood that the verifier's workspace is initialized to the $\ketbra*{0 \dots 0}{0 \dots 0}$ state.

We are now ready to define the complexity class of decision quantum problems. Our definition is taken almost in verbatim from \cite{CCHS24}.

\begin{definition}[$^*$QIP]\label{def:QIP}
    A collection of promise problems $X = (X^{\text{yes}}, X^{\text{no}})$ is in $^*$QIP if there exists polynomial $m, p, v$ and a P-uniform verifier interactive circuit family $\mathrm{V}=\{\mathrm{V}_\lambda\}_\lambda$ such for all sufficiently large $\lambda$ and all states $\rho$ the following holds.
    \begin{itemize}
        \item (Completeness) If $\rho \in X^{\text{yes}}$, then there exists an $m(\lambda)$-message prover $\mathrm{P}_\lambda$ such that
        \[
        \Pr\left(\mathrm{P}(\rho^{\otimes p(\lambda)}) \rightleftharpoons \mathrm{V}_\lambda(\rho^{\otimes v(\lambda)}) = 1 \right) \geq 2/3.
        \]
        \item (Soundness) If $\rho \in X^{\text{no}}$, then for all $m(\lambda)$-message provers $\mathcal{P}_\lambda$ we have
        \[
        \Pr\left(\mathcal{P}_\lambda(\rho^{\otimes p(\lambda)}) \rightleftharpoons \mathrm{V}_\lambda(\rho^{\otimes v(\lambda)}) = 1 \right) \leq 1/3.
        \]
    \end{itemize}
\end{definition}
Following the notation from \cite{CCHS24}, we can specialize the above definition for the case of pure quantum decision problems (which we refer to as pQIP) by requiring that the problem instance consists exclusively of pure states. On the other hand, we call the class mQIP if the states can be mixed. Note that the class pQIP (mQIP, respectively) where $m=1$ corresponds to the class pQMA (mQMA, respectively) and by requiring that the message is classical we obtain the classes pQCMA and mQCMA.

We remark that our choice of constants for the completeness and soundness bound is arbitrary. This is justified by the fact that \cite{CCHS24} proved an amplification theorem showing that, as long as the completeness-soundness gap is at least $1/\mathrm{poly}(\lambda)$, then completeness and soundness can be amplified to be \emph{exponentially close} to $1$ and $0$, respectively. Therefore, we can henceforth assume that this holds without loss of generality. Finally, we also note that one could require the stronger definition of soundness where the prover is given an infinite amount of copies of the state, but for simplicity we stick to above definition although everything in this work generalizes to the stronger settings.

Next, we define the class of problems with a zero-knowledge proof, directly for the case of negligible completeness and soundness error. Same as above, we refer to pQIPzk if the states are required to be pure and otherwise mQIPzk.

\begin{definition}[$^*$QIPzk]\label{def:QIPzk}
    A collection of promise problems $X = (X^{\text{yes}}, X^{\text{no}})$ is in $^*$QIPzk if there exists polynomial $m, p, v,s$ and a P-uniform verifier interactive circuit family $\mathrm{V} = \{\mathrm{V}_\lambda\}_\lambda$ such that the following holds.
    \begin{itemize}
        \item (Completeness) For all $\rho \in X^{\text{yes}}$ there exists a prover interactive circuit family $\mathrm{P} = \{\mathrm{P}_\lambda\}_\lambda$ and a negligible function $\mu$ such that 
        for all $\lambda\in\mathbb{N}$ we have
        \[
        \Pr\left(\mathrm{P}_\lambda(\rho^{\otimes p(\lambda)}) \rightleftharpoons \mathrm{V}_\lambda(\rho^{\otimes v(\lambda)}) = 1 \right) \geq 1 - \mu(\lambda)
        \]
        where the interaction consists of $m(\lambda)$ messages.
        \item (Soundness) For all $\rho \in X^{\text{no}}$ and all (malicious) provers $\mathcal{P} = \{\mathcal{P}_\lambda\}_\lambda$ there exists a negligible function $\mu$ such that 
        for all $\lambda\in\mathbb{N}$ we have
        \[
        \Pr\left(\mathcal{P}_\lambda(\rho^{\otimes p(\lambda)}) \rightleftharpoons \mathrm{V}_\lambda(\rho^{\otimes v(\lambda)}) = 1 \right) \leq  \mu(\lambda)
        \]
        where the interaction consists of $m(\lambda)$ messages.
        \item (Zero Knowledge) For all $\rho \in X^{\text{yes}}$ and all (malicious) QPT algorithms $\mathcal{A} = \{\mathcal{A}_\lambda\}_\lambda$, there exists a QPT simulator $\mathcal{S} = \{\mathcal{S}_\lambda\}_\lambda$, such that for any QPT distinguisher $\mathcal{D} = \{\mathcal{D}_\lambda\}_\lambda$, there exists a negligible function $\mu$ such that for all security parameters $\lambda\in\mathbb{N}$ and all non-uniform bipartite advice states $\rho_{AD}$ we have
    \[
    \abs{\Pr\left(\mathcal{D}_\lambda\left( \mathrm{P}_\lambda(\rho^{\otimes p(\lambda)}) \rightleftharpoons \mathcal{A}_\lambda(\rho_A),\rho_D\right)= 1\right) -
    \Pr\left(\mathcal{D}_\lambda\left( \mathcal{S}_\lambda(\rho_A, \rho^{\otimes s(\lambda)}),\rho_D\right)= 1\right)
    } \leq \mu(\lambda)
    \]
where the interaction consists of $m(\lambda)$ messages.
    \end{itemize}
\end{definition}
Let us comment that what prevents the simulator to run the honest prover (and thus makes the definition non-trivial) is the fact that we impose the requirement that the simulator is QPT, whereas the honest prover is potentially unbounded.

In contrast to \cref{def:QIP}, the above definition does not immediately specialize to pQMA or mQMA statements because, although the (standard) verification protocol for p/mQMA is non-interactive, the zero-knowledge protocol could be interactive. In the latter case, we additionally need to make sure that the honest prover is QPT, when given copies of the witness state. Besides syntactical modifications, this only changes the completeness requirement of \cref{def:QIPzk}. 
\ifllncs
\else
In the following definition we refer to the single message from the prover to the verifier as the \emph{witness}.
\begin{definition}[$^*$QMAzk]\label{def:QMAzk}
    A collection of promise problems $X = (X^{\text{yes}}, X^{\text{no}})$ is in $^*$QMAzk if there exists polynomial $m, p, v,s$ and a P-uniform prover/verifier interactive circuit family $\mathrm{P} = \{\mathrm{P}_\lambda\}_\lambda$ and $\mathrm{V} = \{\mathrm{V}_\lambda\}_\lambda$ such that the following holds.
    \begin{itemize}
        \item (Completeness) For all $\rho \in X^{\text{yes}}$ and $\sigma$ the corresponding witness there exits a negligible function $\mu$ such that 
        for all $\lambda\in\mathbb{N}$ we have
        \[
        \Pr\left(\mathrm{P}_\lambda(\sigma^{\otimes p(\lambda)},\rho^{\otimes p(\lambda)}) \rightleftharpoons \mathrm{V}_\lambda(\rho^{\otimes v(\lambda)}) = 1 \right) \geq 1 - \mu(\lambda)
        \]
        where the interaction consists of $m(\lambda)$ messages.
        \item (Soundness) For all $\rho \in X^{\text{no}}$ and all (malicious) provers $\mathcal{P} = \{\mathcal{P}_\lambda\}_\lambda$ there exists a negligible function $\mu$ such that 
        for all $\lambda\in\mathbb{N}$ we have
        \[
        \Pr\left(\mathcal{P}_\lambda(\rho^{\otimes p(\lambda)}) \rightleftharpoons \mathrm{V}_\lambda(\rho^{\otimes v(\lambda)}) = 1 \right) \leq  \mu(\lambda)
        \]
        where the interaction consists of $m(\lambda)$ messages.
        \item (Zero Knowledge) For all $\rho \in X^{\text{yes}}$ and all (malicious) QPT algorithms $\mathcal{A} = \{\mathcal{A}_\lambda\}_\lambda$, there exists a QPT simulator $\mathcal{S} = \{\mathcal{S}_\lambda\}_\lambda$, such that for any QPT distinguisher $\mathcal{D} = \{\mathcal{D}_\lambda\}_\lambda$, there exists a negligible function $\mu$ such that for all security parameters $\lambda\in\mathbb{N}$ and all non-uniform bipartite advice states $\rho_{AD}$ we have
    \[
    \abs{\Pr\left(\mathcal{D}_\lambda\left( \mathrm{P}_\lambda(\sigma^{\otimes p(\lambda)},\rho^{\otimes p(\lambda)}) \rightleftharpoons \mathcal{A}_\lambda(\rho_A),\rho_D\right)= 1\right) -
    \Pr\left(\mathcal{D}_\lambda\left( \mathcal{S}_\lambda(\rho_A, \rho^{\otimes s(\lambda)}),\rho_D\right)= 1\right)
    } \leq \mu(\lambda)
    \]
    holds for any valid witness $\sigma$ and the interaction consists of $m(\lambda)$ messages.
    \end{itemize}
\end{definition}
\fi
\ifllncs
\else

\section{Zero Knowledge for p/mQMA}

\subsection{Zero Knowledge Proofs for pQMA}

We present a simple protocol for proving statements in pQMA, assuming the existence of a one-sided statistical secure computation protocol. Let $X = (X^{\text{yes}}, X^{\text{no}})$ be a pQMA problem instance, and let us assume without loss of generality (see \cref{sec:def}) that the protocol has completeness $1- \mu(\lambda)$ and soundness $\mu(\lambda)$, for some negligible function $\mu$. Our protocols is parametrized by two polynomials $p = p(\lambda)$ and $q = q(\lambda)$, whose exact values we will specify later. For a (pure) state $\ket{\psi} \in (\mathbb{C}^{2})^{\otimes n}$ and a security parameter $\lambda$, the protocol consists of the joint secure computation of the following functionality, with statistical security against a corrupted prover.

\begin{itemize}
    \item (Prover Input) The prover inputs $p$ copies of the state $\ketbra{\psi}{\psi}^{\otimes p}$ and of the witness $\sigma^{\otimes p}$.
    \item (Verifier Input) The verifier inputs $q$ copies of the state $\ketbra{\psi}{\psi}^{\otimes q}$.
    \item (Computation) The functionality proceeds as follows:
    \begin{itemize}
        \item Sample a random subset $S$ of $\{1, \dots, p\}$ with $\abs{S} = q$ and a random element of $s^* \in \{1, \dots, p\} \setminus S$. 
        \item For all $s \in S$, run a SWAP-test between the $s$-th copy of the input state of the prover and a fresh copy of the verifier. Abort if any of the tests fail.
        \item Apply the projection $\mathrm{V}^\dagger (\ketbra{1}{1}\otimes \mathrm{Id}) \mathrm{V}$ to the joint state of the $s^*$-th copy of $\sigma$ and $\ketbra{\psi}{\psi}$ of the prover and accept if the projection succeeds.
    \end{itemize}
\end{itemize}
Completeness of the protocol is immediate.
In order to prove soundness, we need to recall a technical lemma. Given a POVM $\{\mathrm{E}_j\}_j$ we associate a measurement map 
\[
\Lambda(\rho) = \sum_j \Tr(\mathrm{E}_j \rho) \ketbra{e_j}{e_j}
\]
with an orthonormal basis $\{\ket{e_j}\}_j$. We refer to this map as a quantum-classical channel. The following theorem (known as the dimension-independent quantum De Finetti) establishes a bound on the distance between a permutation-invariant state and a product state.
\begin{lemma}[\cite{BH13deFinetti}]\label{thm:definetti}
    Let $\rho_{A_1 \dots A_p}$ be a permutation-invariant state on registers 
    $A_1 \dots A_p$ where each register contains $n$ qubits. For any 
    $0 \leq q \leq p$ there exists $\{\tau_i\}_i$ and  $\{p_i\}_i$ such that
    \[
    \max_{\Lambda_1 \dots \Lambda_q} \norm{(\Lambda_1 \otimes \dots \otimes \Lambda_q)\left(\rho_{A_1 \dots A_p} - \sum_i p_i \tau_i \otimes \dots \otimes \tau_i\right)}_1 \leq \sqrt{\frac{2q^2 n}{p-q}}
    \]
    where $\Lambda_i$ are quantum-classical channels.
\end{lemma}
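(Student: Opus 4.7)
The plan is to use a quantum information-theoretic argument, in the spirit of Brand\~{a}o--Harrow, that converts permutation invariance of $\rho$ into a ``post-measurement'' de Finetti bound. The key observation is that while the standard quantum de Finetti theorem unavoidably incurs a factor of the full Hilbert-space dimension $2^n$, restricting the distinguishing test to product channels $\Lambda_1 \otimes \cdots \otimes \Lambda_q$ allows one to replace this dimension factor by $n = \log 2^n$ by applying Pinsker-type inequalities only at the level of the classical measurement outcomes, where the extractable information per register is controlled by $n$ rather than by $2^n$.

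\textbf{Main argument.} By the chain rule for quantum mutual information and the permutation invariance of $\rho_{A_1\dots A_p}$, one has
\[
\sum_{i=2}^{p} I(A_1; A_i \mid A_2 \dots A_{i-1})_{\rho} \;=\; I(A_1; A_2\dots A_p)_{\rho} \;\leq\; H(A_1) \;\leq\; n,
\]
so that on average the conditional mutual information is $\le n/(p-1)$, and relabeling by symmetry one can pick an index $i$ where it is at most $n/(p-q)$. A measured quantum Pinsker inequality then shows that, after classically measuring the ``later'' registers, the conditional state on a fresh register is within trace distance $\sqrt{2n/(p-q)}$ of a product form under product measurements. Iterating this peel-off argument $q$ times, re-symmetrizing at each step, and bounding the accumulated error by the triangle inequality yields total deviation at most $q \cdot \sqrt{2n/(p-q)} = \sqrt{2q^2 n/(p-q)}$. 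The classical outcome $i$ of the conditioning measurement on $A_{q+1}\dots A_p$ plays the role of the mixing variable, producing exactly the decomposition $\sum_i p_i\, \tau_i \otimes \cdots \otimes \tau_i$.

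\textbf{Main obstacle.} The delicate step is converting quantum conditional mutual information to trace distance \emph{without} incurring the dimension factor $2^n$. The textbook Pinsker inequality $\|\rho-\sigma\|_1 \leq \sqrt{2\, D(\rho\|\sigma)}$ does not directly suffice, because quantum mutual information can be large even for states that are close under product measurements. The resolution is to work with the \emph{measured} relative entropy together with the fact that a single product-measurement outcome on an $n$-qubit register carries at most $n$ bits of classical information, decoupling the bound from $2^n$. Establishing this measurement-restricted Pinsker estimate rigorously, and then carefully tracking how $q$ per-step errors compound into the $q^2$ factor inside the square root, is where I would expect to spend most of the effort.
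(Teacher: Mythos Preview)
The paper does not prove this lemma; it is quoted verbatim as a result of Brand\~{a}o--Harrow \cite{BH13deFinetti} and used as a black box in the soundness analysis of the pQMA protocol. There is therefore no ``paper's own proof'' to compare against. Your sketch is a faithful outline of the original Brand\~{a}o--Harrow information-theoretic argument (chain rule for conditional mutual information, measured Pinsker, iterated peel-off), so in that sense you are reproducing the source rather than deviating from it.
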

We are now ready to prove a bound on the soundness. Note that the bound claimed below consists of three summands and, for a large enough $p$ and $q$, the first two summands are vanishing. Thus we obtain an overall constant bound, and soundness can be amplified by sequential repetition.
\begin{lemma}[Soundness]
    The protocol as described above is sound with a maximal success probability negligibly close to
    \[
    \sqrt{\frac{2q^2 n}{p-q}} + 0.99^q + \frac{1}{\sqrt{50}}.
    \]
\end{lemma}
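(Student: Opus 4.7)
My plan is to combine statistical simulation security of the secure computation against a corrupted prover, the dimension-independent quantum De Finetti theorem (\cref{thm:definetti}), and a threshold case analysis on the fidelity of the resulting ``De Finetti components'' with $\ket{\psi}$. As a first step I would invoke simulation security against the prover to replace the real interaction with a direct call to the ideal functionality at the cost of a negligible statistical deviation. The adversary is then reduced to preparing an arbitrary state on $p$ paired registers $B_i = (A_i,W_i)$ (the $i$-th claimed copy together with the $i$-th witness). Since sampling $S \subseteq \{1,\dots,p\}$ of size $q$ and then $s^* \in \{1,\dots,p\}\setminus S$ is equivalent to applying a uniformly random permutation to these pair-registers, we may assume without loss of generality that the prover's state $\rho_{B_1\dots B_p}$ is permutation invariant on the pair-registers.

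Next, I would apply \cref{thm:definetti} to $\rho_{B_1\dots B_p}$. The $q$ SWAP tests (whose POVM is given by \cref{eq:swapPOVM}) and the final projection $\mathrm{V}^\dagger(\ketbra{1}{1}\otimes\mathrm{Id})\mathrm{V}$ act on disjoint $B$-registers and each outputs a single classical bit, so the measurement that the functionality implements is a tensor product of $q+1$ quantum-classical channels. Up to an error of $\sqrt{2q^2 n/(p-q)}$ (absorbing small constants and the witness-register dimension into the polynomial choice of $p$), one may therefore replace $\rho_{B_1\dots B_p}$ by a convex combination $\sum_i p_i\,\tau_i^{\otimes p}$ of product states and bound the functionality's acceptance probability in this mixture instead.

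Finally, I would threshold at $\alpha^* = 0.98$. For each $i$ with $\bra{\psi}(\tau_i)_A\ket{\psi}\leq \alpha^*$, \cref{eq:swapPOVM} gives that each SWAP test on a fresh copy of $\ket{\psi}$ accepts with probability at most $(1+\alpha^*)/2=0.99$, so all $q$ tests pass with probability at most $0.99^q$. For each $i$ with $\bra{\psi}(\tau_i)_A\ket{\psi}> \alpha^*$, the standard inequality $\mathrm{Td}(\tau,\ketbra{\psi}{\psi}) \leq \sqrt{1-\bra{\psi}\tau\ket{\psi}}$ for pure states gives $\mathrm{Td}((\tau_i)_A,\ketbra{\psi}{\psi})\leq \sqrt{1-\alpha^*}=1/\sqrt{50}$; hence the final pQMA projection applied to the $s^*$-th copy of $\tau_i$ paired with any prover-supplied witness accepts with probability at most $1/\sqrt{50}$ more than the corresponding probability on the genuine no-instance $\ketbra{\psi}{\psi}$, which is itself bounded by the (negligible) pQMA soundness error. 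Weighting the two cases by $p_i$, summing, and adding the De Finetti error yields the claimed bound, plus a negligible additive term from simulation security and pQMA soundness.

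The main obstacle I foresee is the careful bookkeeping of the symmetrization step, since the prover's state entangles each claimed copy with its witness register and so the permutation has to act on the paired registers $B_i$ jointly; this in turn forces De Finetti to be applied to these larger registers, but the dependence on the register dimension only multiplies $n$ by a constant and can be absorbed by choosing $p$ appropriately. Every other step is a routine combination of the tools cited above.
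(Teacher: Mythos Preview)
Your proposal is correct and follows essentially the same route as the paper: reduce to the ideal functionality via statistical simulation security, symmetrize over the paired (state, witness) registers, apply the dimension-independent De Finetti theorem to the resulting tensor product of quantum--classical channels, and split into two cases according to the overlap $\bra{\psi}(\tau_i)_A\ket{\psi}$ at threshold $0.98$.

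The one place where you should tighten the argument is the high-overlap case. You bound the \emph{marginal} trace distance $\mathrm{Td}\bigl((\tau_i)_A,\ketbra{\psi}{\psi}\bigr)\le 1/\sqrt{50}$ and then conclude a bound on the acceptance probability of $\mathrm{V}^\dagger(\ketbra{1}{1}\otimes\mathrm{Id})\mathrm{V}$, but that projection acts on the \emph{joint} register $(A,W)$, and the witness part of $\tau_i$ may be entangled with its $A$ part (it is not a separately ``prover-supplied'' witness). A marginal trace-distance bound does not by itself control a joint observable. The paper handles this by applying the gentle-measurement lemma directly to $(\tau_i)_{AW}$ with the projector $\ketbra{\psi}{\psi}\otimes\mathrm{Id}$: since $\Tr\bigl((\ketbra{\psi}{\psi}\otimes\mathrm{Id})(\tau_i)_{AW}\bigr)=\bra{\psi}(\tau_i)_A\ket{\psi}\ge 1-1/50$, one gets a state of the form $\ketbra{\psi}{\psi}\otimes\sigma_W$ within trace distance $1/\sqrt{50}$ of $(\tau_i)_{AW}$, and then pQMA soundness applies to that product state. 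This yields the same constant from the same hypothesis, so your outline is easily completed.
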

\begin{proof}
We analyze the protocol in a setting where the adversarial prover interacts with the ideal functionality, and the bound on the soundness directly follows from (statistical) indistinguishability of the secure computation protocol. 

Let $A_i$ be the registers containing the input states of the prover, let $B_i$ the registers containing the witness state, and let $\rho_{A_1B_1 \dots A_pB_p}$ be the reduced density of these registers. Sampling a random subset of the registers is equivalent to randomly permuting all pairs and selecting the first $q + 1$ of them. Thus the state of the system can be described by
\[
\Tilde{\rho} = \frac{1}{|S_p|} \sum_{\mathrm{P}\in S_p} \mathrm{P} \rho \mathrm{P}^\dagger
\]
which is easily seen to be permutation-invariant. Both the SWAP-test and the last test performed by the functionality are classical-quantum channels and furthermore they are all in tensor product with each other. Thus, by \cref{thm:definetti} we can approximate their output by a classical mix over identical copies of some state $\tau_i$, up to an additive factor. To bound the soundness of the protocol, we consider the index $i$ such that $\tau_i$ has the highest success probability, and the bound follows by convexity.

For notational convenience, we rename $\tau_i = \tau^*$ and, for all $j$, we let $\tau_{A_j}^* = \Tr_{B_j}(\tau^*_{A_jB_j})$. Let $\mathrm{E}_0$ be the positive semi-definite operator of the SWAP-test for a pure state $\ketbra{\psi}{\psi}$, as defined in \cref{eq:swapPOVM}.
For all indices $j$, we consider two cases:
\begin{itemize}
    \item $\mathrm{Tr}(\mathrm{E}_0 \tau_{A_j}^*) < 1 -\frac{1}{100}$: Since all the SWAP-tests are executed with independent copies of $\ketbra{\psi}{\psi}$ and the $\tau^*_{A_jB_j}$ are all in tensor product, we can bound the probability that all tests accept by $0.99^q$.
    \item $\mathrm{Tr}(\mathrm{E}_0 \tau_{A_j}^*) \geq 1 -\frac{1}{100}$: In this case, we have that the SWAP test succeeds with probability close to $1$. More precisely, we can write
    \[
    \Tr(\mathrm{E}_0 \tau_{A_j}^*) = \Tr\left( \frac{\mathrm{Id}+\ketbra{\psi}{\psi}}{2}\tau_{A_j}^*\right) = \frac{1}{2} \Tr\left( \tau_{A_j}^*\right) + \frac{1}{2}\Tr\left(\ketbra{\psi}{\psi} \tau_{A_j}^*\right) \geq 1 -\frac{1}{100}
    \]
    since $\mathrm{Tr}( \tau_{A_j}^*) \leq 1$ we can conclude that
    \[
    \Tr\left(\left(\ketbra{\psi}{\psi}\otimes \mathrm{Id}\right) \tau_{A_jB_j}^*\right) = \Tr\left(\ketbra{\psi}{\psi} \tau_{A_j}^*\right) \geq 1 - \frac{1}{50}.
    \]
    Then, by \cref{lmm:gentle} we can bound
    \[
    \mathrm{Td}\left(\tau^*_{A_jB_j}, \frac{\left(\ketbra{\psi}{\psi}\otimes \mathrm{Id}\right)\tau_{A_jB_j}^*\left(\ketbra{\psi}{\psi}\otimes \mathrm{Id}\right)}{\Tr\left(\left(\ketbra{\psi}{\psi}\otimes \mathrm{Id}\right) \tau_{A_jB_j}^*\right)}\right) \leq \frac{1}{\sqrt{50}}
    \]
    and we can bound with a triangle inequality the probability that the projection $\mathrm{V}^\dagger (\ketbra{1}{1}\otimes \mathrm{Id}) \mathrm{V}$ succeeds to be negligibly close to $1/\sqrt{50}$, by observing that the state on the RHS corresponds to a correct initial state of the verifier.
\end{itemize}
By a union bound, we obtain that the success probability of the adversary is bounded by
\[
\sqrt{\frac{2q^2 n}{p-q}} + 0.99^q + \frac{1}{\sqrt{50}}
\]
plus a negligible factor, as desired.
\end{proof}

Finally, we show that the protocol satisfies zero-knowledge.
\begin{lemma}[Zero-Knowledge]
    If EFI pairs exist, then the protocol as described above satisfies zero-knowledge.
\end{lemma}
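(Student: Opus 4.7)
The plan is to prove zero-knowledge via a hybrid argument, reducing to the malicious-verifier simulation security of the underlying secure computation protocol, together with the (perfect) completeness of the pQMA verifier. The simulator $\mathcal{S}_\lambda$ will essentially consist of the 2PC simulator, programmed with an ``ideal functionality response'' that does not need the witness $\sigma$.

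Concretely, given $(\rho_A, \ket{\psi}^{\otimes s(\lambda)})$ with $s(\lambda) = p+q$ (or appropriately larger), $\mathcal{S}_\lambda$ runs the 2PC malicious-verifier simulator $\mathcal{S}^{\mathrm{2PC}}$ on $\mathcal{A}_\lambda(\rho_A)$. When $\mathcal{S}^{\mathrm{2PC}}$ expects to query the ideal functionality on the verifier's extracted input register $B$, $\mathcal{S}_\lambda$ responds as follows: it uses its own $q$ copies of $\ket{\psi}$ to simulate the role of the prover in the SWAP tests, running the SWAP test on each pair $(B_s, \ket{\psi})$; if any SWAP test rejects, it forwards $\bot$; otherwise it forwards $\mathrm{accept}$. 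It then continues the execution of $\mathcal{S}^{\mathrm{2PC}}$ to produce the final view of $\mathcal{A}_\lambda$.

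The hybrid argument proceeds in three steps. In Hybrid $1$, I replace the real execution of the secure computation protocol with the 2PC simulator interacting with the ideal functionality $\mathrm{F}$ defined in the protocol; indistinguishability is exactly \cref{def:2pc} applied to $\mathcal{A}_\lambda$ (with the honest prover's states $\ketbra{\psi}{\psi}^{\otimes p} \otimes \sigma^{\otimes p}$ as the other party's input). In Hybrid $2$, I modify $\mathrm{F}$ so that after all SWAP tests pass, it just outputs $\mathrm{accept}$ instead of applying the projection $\mathrm{V}^\dagger(\ketbra{1}{1}\otimes \mathrm{Id})\mathrm{V}$. Since the prover's $s^*$-th register holds an honest copy of $\ket{\psi}$ and the witness register holds $\sigma$, pQMA completeness (amplified to $1-\mathrm{negl}(\lambda)$ as noted in \cref{sec:def}) bounds the statistical distance between the two outputs by a negligible function. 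In Hybrid $3$, I observe that the modified functionality never touches the witness $\sigma$ and only uses the prover's $\ket{\psi}$ copies inside the SWAP tests, so I can equivalently provide the simulator's own $\ket{\psi}$ copies in place of the prover's; since both are identical pure states, the resulting output distribution is unchanged. Hybrid $3$ is exactly $\mathcal{S}_\lambda$, which completes the argument.

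The main obstacle is the transition from Hybrid $1$ to Hybrid $2$: one must argue that, conditioned on all SWAP tests passing, the projection's outcome is $\mathrm{accept}$ with overwhelming probability. For the honest prover this is immediate from completeness, but we must be careful that the 2PC simulator produces a proper ``ideal world'' in which the prover's inputs are still the true states $\ket{\psi}^{\otimes p}$ and $\sigma^{\otimes p}$; this is guaranteed by the definition of the ideal execution in \cref{def:2pc}, where the honest party's input is passed unmodified to the functionality. A minor point to verify is that the SWAP-test subroutine in the simulator is executed in exactly the same way as inside the functionality in Hybrid $2$, so no extra distinguishing advantage arises from that step.
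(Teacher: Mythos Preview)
Your proposal is correct and follows essentially the same approach as the paper: invoke the 2PC simulator against the malicious verifier, replace the final projection by a hard-coded ``accept'' via pQMA completeness, and supply the simulator's own $\ket{\psi}$ copies in place of the honest prover's. The paper's proof is simply a terse one-paragraph version of your three-hybrid argument; the only cosmetic difference is that the paper does not spell out the intermediate hybrids, and your concern about conditioning on the SWAP tests passing is harmless since the projection acts on the prover's untouched $s^*$-th registers, which are in tensor product with everything the SWAP tests see.
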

\begin{proof}
    The simulator simply runs the simulator of the secure computation protocol to extract the input state of the (malicious) verifier. Then it uses copies of the state $\ketbra{\psi}{\psi}$ to run the SWAP-tests. If any of the tests fail, the output is programmed to be $0$, otherwise the output is set to $1$. Clearly the simulator is QPT, and the view of the adversary is indistinguishable by \cref{def:2pc}.
\end{proof}
Overall, we have shown the following theorem.
\begin{theorem}[EFIs Imply pQMAzk]
    If EFI pairs exist, then pQMA $\subseteq$ pQMAzk.
\end{theorem}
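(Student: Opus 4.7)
The plan is to combine the three lemmas (completeness, soundness, zero-knowledge) established above to obtain the pQMAzk protocol, after calibrating the parameters of the single-shot protocol and amplifying via sequential repetition.

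First, I choose the polynomials $p(\lambda)$ and $q(\lambda)$ so that the soundness bound $\sqrt{2q^2 n/(p-q)} + 0.99^q + 1/\sqrt{50}$ is bounded by a constant strictly less than $1$. For instance, setting $q(\lambda) = \lambda$ and $p(\lambda) = q + 2 q^2 n \lambda$ drives the first two summands to negligible, leaving a bound close to $1/\sqrt{50} < 1/2$. On the completeness side, the SWAP-tests on honest inputs succeed with probability $1$ (two identical copies of a pure state lie in the symmetric subspace), and the final projection onto the accepting subspace succeeds with probability $1 - \mu(\lambda)$ by completeness of the underlying pQMA verifier, which can be boosted to this regime by the amplification theorem of \cite{CCHS24}. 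This yields a constant completeness-soundness gap in a single execution.

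Next, I amplify by sequential repetition: run the single-shot protocol $\lambda$ times independently and accept only if every round accepts. Completeness becomes $(1 - \mu(\lambda))^\lambda \geq 1 - \lambda \mu(\lambda)$, still overwhelming, while soundness decays as $c^\lambda$ for a constant $c < 1$ and is therefore negligible, matching \cref{def:QMAzk}. The honest prover remains QPT given copies of the witness, since each round simply executes the underlying secure computation protocol from \cref{lmm:2pcEFI}, whose existence follows from the EFI assumption.

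For zero-knowledge under sequential composition, I define a composite simulator that threads the malicious verifier's state across rounds and, in each round, invokes the per-round simulator from the zero-knowledge lemma (which itself only runs the secure-computation simulator and programs SWAP-test outputs using fresh copies of $\ket{\psi}$). A standard hybrid argument over the $\lambda$ rounds shows computational indistinguishability from the real interaction, each hybrid step costing only the negligible simulation error of \cref{def:2pc}. The main technical obstacle---obtaining a constant soundness bound against a possibly unbounded prover that only holds polynomially many copies of $\ket{\psi}$---is already handled by the dimension-independent De Finetti theorem (\cref{thm:definetti}) inside the soundness lemma; what remains is the routine parameter setting, amplification, and simulator composition sketched above.
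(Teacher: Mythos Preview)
Your proposal is correct and follows exactly the approach of the paper: the theorem is obtained by combining the completeness, soundness, and zero-knowledge lemmas already proven, calibrating $p,q$ so the De Finetti and SWAP-test terms vanish, and then amplifying the resulting constant soundness gap by sequential repetition (the paper notes this amplification step in the remark preceding the soundness lemma). The parameter instantiation and the sequential-composition hybrid for zero-knowledge that you spell out are precisely the routine details the paper leaves implicit when it writes ``Overall, we have shown the following theorem.''
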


\subsection{Evidence Against Zero-Knowledge for Mixed States}

Next, we show that zero-knowledge proofs for mQMA with an efficient prover are unlikely to exist. Towards this we consider a public key encryption scheme $(\mathrm{Gen},\mathrm{Enc}, \mathrm{Dec})$ encrypting a single bit with perfect correctness that satisfies the following properties.
\begin{itemize}
    \item (Commitment) For a randomly sampled pair $(\mathrm{pk,sk}) \in \mathrm{Gen}(1^\lambda)$, we have that $\mathrm{pk}$ is a statistically binding commitment to $\mathrm{sk}$.
    \item (CPA Security) For a randomly sampled $(\mathrm{pk,sk}) \in \mathrm{Gen}(1^\lambda)$ we have that
    \[
    \left\{\mathrm{pk},\mathrm{Enc}(\mathrm{pk}, 0)\right\}_\lambda \approx_c \left\{\mathrm{pk},\mathrm{Enc}(\mathrm{pk}, 1)\right\}_\lambda.
    \]
    By a standard hybrid argument, indistinguishability extends to a polynomial number of ciphertexts in the view of the distinguisher.
\end{itemize}
It is well-known that CPA-secure encryption can be constructed assuming the hardness of the learning with errors (LWE) problem \cite{Regev09}. For the commitment property, it is possible to make \emph{any} encryption scheme satisfying this property, simply by adding a statistically binding and computationally hiding commitment to the public key. The new public key is defined to be $(\mathrm{pk}, \mathrm{Com(sk)})$ and the new secret key is $(\mathrm{sk}, s)$ where $s$ is the opening of the commitment. Security of the augmented scheme follows again by a standard argument, where we first replace the commitment with $\mathrm{Com}(0)$ and then we appeal to the CPA-security of the scheme.
Statistically binding commitments can be constructed from any one-way function \cite{NaorCom}, and in particular from LWE. Thus we can conclude that, assuming the hardness of LWE there exists a scheme satisfying all of the above properties. 

A direct consequence of CPA security is that, for a randomly sampled $\mathrm{pk}$ in the support of $\mathrm{Gen}(1^\lambda)$, the following two mixed states are computationally indistinguishable
\begin{equation}\label{eq:CPA}
\left\{\sum_{r}f(r)\ket{\mathrm{pk}, \mathrm{Enc}(\mathrm{pk}, 0; r)} \right\}_\lambda\approx_c
\left\{\sum_{r}f(r)\ket{\mathrm{pk}, \mathrm{Enc}(\mathrm{pk}, 1; r)}\right\}_\lambda
\end{equation}
where $f$ is the density function of the randomness in the $\mathrm{Enc}$ algorithm.
We also note that the above states are can be prepared efficiently (we say that they are \emph{BQP-preparable}) since they are just a classical mix and running the encryption algorithm with uniform coins results in the correct distribution.

We are now ready to prove our impossibility for zero-knowledge for mQMA.

\begin{theorem}[Zero-Knowledge for mQMA]
    If the LWE problem is hard, then there exist problems in mQMA without an efficient zero-knowledge protocol.
\end{theorem}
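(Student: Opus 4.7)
The plan is to instantiate the mQMA problem using the CPA-secure, statistically-binding encryption scheme described above. Define $X^{\text{yes}} = \{\rho_{\mathrm{pk},0}\}_{\mathrm{pk}}$ and $X^{\text{no}} = \{\rho_{\mathrm{pk},1}\}_{\mathrm{pk}}$, where $\rho_{\mathrm{pk},b} = \sum_r f(r)\ketbra{\mathrm{pk},\mathrm{Enc}(\mathrm{pk},b;r)}{\mathrm{pk},\mathrm{Enc}(\mathrm{pk},b;r)}$ and $\mathrm{pk}$ ranges over the support of $\mathrm{Gen}(1^\lambda)$. The mQMA witness is the pair $(\mathrm{sk},s)$ with $\mathrm{pk}=\mathrm{Com}(\mathrm{sk};s)$; the efficient verifier checks the commitment opening, applies $\mathrm{Dec}$ to a single ciphertext copy of $\rho$, and accepts iff the plaintext is $0$. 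Completeness is immediate from perfect correctness of $\mathrm{Dec}$, while soundness follows because statistical binding of the commitment uniquely determines $\mathrm{sk}$ from $\mathrm{pk}$, and perfect correctness then pins down the unique plaintext of the ciphertexts in the mixture.

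\textbf{ZK plus CPA forces the simulator to accept no-instances.} Suppose toward contradiction that an efficient zero-knowledge protocol $(\mathrm{P},\mathrm{V})$ for $X$ exists. Instantiate \cref{def:QMAzk} with $\mathcal{A}$ taken to be the honest verifier augmented to output its accept/reject bit. This yields an efficient simulator $\mathcal{S}$ whose output on any yes-instance input is computationally indistinguishable from the real transcript, and is therefore accepting with overwhelming probability by completeness. A standard hybrid argument over the $v(\lambda)+s(\lambda)$ input copies applied to \cref{eq:CPA} gives $\rho_{\mathrm{pk},0}^{\otimes(v+s)} \approx_c \rho_{\mathrm{pk},1}^{\otimes(v+s)}$, and since $\mathcal{S}$ is QPT we conclude $\mathcal{S}(\rho_{\mathrm{pk},0}^{\otimes(v+s)}) \approx_c \mathcal{S}(\rho_{\mathrm{pk},1}^{\otimes(v+s)})$. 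In particular, $\mathcal{S}$ also produces an accepting transcript with overwhelming probability when fed copies of the no-instance $\rho_{\mathrm{pk},1}$.

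\textbf{From simulator to cheating prover; the main obstacle.} The final step is to upgrade ``$\mathcal{S}$ produces an accepting transcript on no-instances'' into an efficient malicious prover $\mathcal{P}^*$ that makes a fresh copy of $\mathrm{V}$ accept on $\rho_{\mathrm{pk},1}^{\otimes p(\lambda)}$, contradicting soundness. Since the theorem is phrased as ``not sound \emph{or} not witness-indistinguishable,'' I would route this step through witness indistinguishability: because $\mathcal{S}$ is efficient and never touches the witness $(\mathrm{sk},s)$, its prover-side messages are an efficient function of the input copies of $\rho$ and of the verifier-side messages it feeds into its internal $\mathcal{A}$-slot. One defines $\mathcal{P}^*$ to run $\mathcal{S}$ in an online fashion, redirecting $\mathcal{S}$'s $\mathcal{A}$-slot to the real $\mathrm{V}$, thereby reproducing the simulator's accepting behavior against an independent verifier. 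The delicate point, and the main obstacle, is precisely this online usage of $\mathcal{S}$: a generic quantum simulator may depend on inspecting or rewinding the verifier's internal state, operations that a real interactive prover cannot perform. It is here that the restriction to \emph{efficient} (polynomial-time) provers is essential, and this is consistent with the open problem flagged in the introduction regarding the unbounded-prover variant of mQIPzk, which our argument does not rule out.
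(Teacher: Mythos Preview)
There is a genuine gap, and it stems from your choice of problem instance. With a single public key, statistical binding forces a \emph{unique} secret key, hence a unique witness; witness indistinguishability is then vacuously satisfied and cannot be the property you violate. Your plan to ``route this step through witness indistinguishability'' therefore has nothing to route through. What remains is the simulator-to-prover conversion, which you correctly flag as the main obstacle and do not resolve: a quantum simulator receives the verifier's full internal state $\rho_A$ and may rewind or otherwise manipulate it, so ``$\mathcal{S}$ produces an accepting output'' does not yield an interactive prover that makes an independent honest verifier accept. That step is not a detail---it is the entire difficulty, and in general it is false.

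The paper avoids both problems by using a \emph{two}-key instance: the state encodes $(\mathrm{Enc}_{\mathrm{pk}_0}(b_0),\mathrm{Enc}_{\mathrm{pk}_1}(b_1))$, yes-instances are those with $0\in\{b_0,b_1\}$, and each $\mathrm{sk}_i$ is a valid witness. The argument then never touches the simulator. It works directly with the \emph{efficient honest prover} holding, say, $\mathrm{sk}_0$: since the prover lacks $\mathrm{sk}_1$, CPA security lets one swap the verifier's $b_1$-component without changing the (efficiently observable) acceptance probability. A case split on whether swapping $b_0$ changes acceptance then gives either a soundness break (verifier accepts the no-instance $(1,1)$) or a gap between the $\mathrm{sk}_0$-prover and the $\mathrm{sk}_1$-prover on the same yes-instance, i.e., a WI violation. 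The second key is precisely what buys you both a nontrivial WI target and a CPA handle that survives the prover holding a witness.
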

\begin{proof}
Consider the mixed state
    \begin{align*}
\rho^{(\mathrm{pk}_0,\mathrm{pk}_1, b_0, b_1)} &= \left(\sum_{r} f(r)\ket{\mathrm{pk}, \mathrm{Enc}(\mathrm{pk}, b_0; r)}_{C_0}\right)\otimes \left(\sum_{r} f(r)\ket{\mathrm{pk}, \mathrm{Enc}(\mathrm{pk}, b_1; r)}_{C_1}\right)
    \end{align*}
Then we define $X = (X^{\text{yes}}, X^{\text{no}})$, where
\[
X^{\text{yes}}=\left\{\rho^{(\mathrm{pk}_0, \mathrm{pk}_1, b_0, b_1)}\right\}_{\lambda,\mathrm{pk}_0, \mathrm{pk}_1, b_0, b_1: \{0\} \in \{b_0, b_1\}}
\quad
\text{and}
\quad
X^{\text{no}}=\left\{\rho^{(\mathrm{pk}_0, \mathrm{pk}_1, 1, 1)}\right\}_{\lambda,\mathrm{pk}_0, \mathrm{pk}_1}.
\]
We argue that $X =(X^{\text{yes}}, X^{\text{no}})$ is a valid mQMA instance. Indeed, for any given state $\rho \in X^{\text{yes}}$ a verifier, given the appropriate secret key $\mathrm{sk}$ as the witness, can efficiently text membership as follows. For $b\in\{0,1\}$:
\begin{itemize}
    \item Check if $\mathrm{pk}_b$ is a valid commitment for $\mathrm{sk}$.
    \item Use $\mathrm{sk}$ to decrypt the ciphertext in the $C_b$ register and check if the result is $0$.
\end{itemize}
Accept if either test passes.

Both tests are efficient and, by the perfect completeness of the encryption scheme, they succeed with certainty if the witness is valid. As for soundness, let $\rho \in X^{\text{no}}$. No (even unbounded) prover can convince a verifier about the validity of a key different from $\mathrm{sk}_0$ or $\mathrm{sk}_1$, by the statistically binding property of the commitment scheme. Thus, soundness follows by the decryption correctness of the encryption scheme.

Next, we show that such a language cannot have an efficient zero-knowledge protocol. 
Towards contradiction, fix any efficient zero-knowledge protocol for $X =(X^{\text{yes}}, X^{\text{no}})$ and let us assume without loss of generality that it has a negligible completeness and soundness error. This means that for a randomly sampled pair $\mathrm{pk}_0, \mathrm{pk}_1$ supported on $\mathrm{Gen}(1^\lambda)$ we have
\[
p_{000} = \Pr\left(\mathrm{P}\left(\mathrm{sk}_0,(\rho^{(\mathrm{pk}_0,\mathrm{pk}_1,0,0)})^{\otimes p(\lambda)} \right) \rightleftharpoons \mathrm{V}\left((\rho^{(\mathrm{pk}_0,\mathrm{pk}_1,0,0)})^{\otimes v(\lambda)}\right) = 1\right) \geq 1 - \mu(\lambda)
\]
for some negligible function $\mu$. Then let us consider what happens when we change the state input by the verifier to the quantity
\[
p_{010} = \Pr\left(\mathrm{P}\left(\mathrm{sk}_0,(\rho^{(\mathrm{pk}_0,\mathrm{pk}_1,0,0)})^{\otimes p(\lambda)} \right) \rightleftharpoons \mathrm{V}\left((\rho^{(\mathrm{pk}_0,\mathrm{pk}_1,1,0)})^{\otimes v(\lambda)}\right) = 1\right).
\]
We consider two complementary options:
\begin{itemize}
    \item There exists a negligible function $\mu$ such that $\abs{p_{000} - p_{010}} \leq \mu(\lambda)$.

    In this case, we claim that
    \begin{equation}\label{eq:arg1}
    \abs{p_{010} - \Pr\left(\mathrm{P}\left(\mathrm{sk}_0,(\rho^{(\mathrm{pk}_0,\mathrm{pk}_1,0,0)})^{\otimes p(\lambda)} \right) \rightleftharpoons \mathrm{V}\left((\rho^{(\mathrm{pk}_0,\mathrm{pk}_1,1,1)})^{\otimes v(\lambda)}\right) = 1\right)} \leq \mu(\lambda)
    \end{equation}
    for some negligible function $\mu$, and we defer the proof of \cref{eq:arg1} to a later point in this argument. However notice that this means that the verifier accepts with overwhelming probability a state $\rho^{(\mathrm{pk}_0,\mathrm{pk}_1,1,1)} \in X^{\text{no}}$, contradicting the soundness of the protocol.

    \item There exists a polynomial $p$ such that $\abs{p_{000} - p_{010}} \geq \frac{1}{p(\lambda)}$.

    For this case it is useful to recall that, again by correctness, switching the witness of the prover does not change the acceptance probability, so
    \[
p_{100} = \Pr\left(\mathrm{P}\left(\mathrm{sk}_1,(\rho^{(\mathrm{pk}_0,\mathrm{pk}_1,0,0)})^{\otimes p(\lambda)} \right) \rightleftharpoons \mathrm{V}\left((\rho^{(\mathrm{pk}_0,\mathrm{pk}_1,0,0)})^{\otimes v(\lambda)}\right) = 1\right) \geq 1 - \mu(\lambda)
\]
for some negligible function $\mu$. And furthermore we claim that
\begin{equation}\label{eq:arg2}
    \abs{p_{100} - \Pr\left(\mathrm{P}\left(\mathrm{sk}_1,(\rho^{(\mathrm{pk}_0,\mathrm{pk}_1,0,0)})^{\otimes p(\lambda)} \right) \rightleftharpoons \mathrm{V}\left((\rho^{(\mathrm{pk}_0,\mathrm{pk}_1,1,0)})^{\otimes v(\lambda)}\right) = 1\right)} \leq \mu(\lambda)
    \end{equation}
    and we defer the proof of \cref{eq:arg2} to a later point in this argument. However notice that there is now a non-negligible gap between the success probability of the prover using $\mathrm{sk}_0$ and the prover using $\mathrm{sk}_1$, if the verifier uses as its input $\rho^{(\mathrm{pk}_0,\mathrm{pk}_1,1,0)}$. This contradicts the witness indistinguishability (and therefore the zero-knowledge) property of the protocol.
\end{itemize}
Thus, we attain either a contradiction to soundness or zero-knowledge and all is left to be shown is that \cref{eq:arg1} and \cref{eq:arg2} hold. We show that \cref{eq:arg1} indeed holds and the same argument applies to \cref{eq:arg2}.

Note that the only difference between the two cases is whether the verifier inputs 
copies of $\rho^{(\mathrm{pk}_0,\mathrm{pk}_1,1,0)}$ or $\rho^{(\mathrm{pk}_0,\mathrm{pk}_1,1,1)}$. Note that both states are separable, BQP-preparable, and they are identical in the register $C_0$ so it suffices to consider the state in $C_1$: In the former case it consists of encryptions of $0$ under $\mathrm{pk}_1$ and in the latter it consist of encryptions of $1$ under $\mathrm{pk}_1$. Importantly, the prover is given $\mathrm{sk}_0$ as the witness and no additional information about $\mathrm{sk}_1$ is given in the protocol. Thus we can appeal to \cref{eq:CPA} (which is a consequence of CPA-security) to conclude that the two states
\[
\left\{\left(\mathrm{sk}_0, \rho^{(\mathrm{pk}_0,\mathrm{pk}_1,1,0)}\right) \right\}_\lambda \approx_c \left\{\left(\mathrm{sk}_0, \rho^{(\mathrm{pk}_0,\mathrm{pk}_1,1,1)}\right)\right\}_\lambda
\]
are computationally indistinguishable. The same holds for polynomially-many copies (by a standard hybrid argument), which in particular means that the success probability of the interactive protocol can only change by a negligible amount, since it is an efficiently observable event. Thus \cref{eq:arg1} holds.
\end{proof}

\fi
\section{Zero-Knowledge for pQIP}

In the following we present our main result, a zero-knowledge proof for any language in pQIP. We construct the protocol iteratively:
\begin{itemize}
    \item In \cref{sec:hvzk} we present a protocol that satisfies honest-verifier zero-knowledge.
    \item In \cref{sec:round} we show how to round-collapse the protocol to three messages.
    \item In \cref{sec:ampl} we discuss soundness amplification.
    \item In \cref{sec:public} we show how to make the protocol public-coin.
    \item In \cref{sec:zk} we finally show how to achieve security against a dishonest verifier.
\end{itemize}
Combining everything together, we obtain that every language in pQIP has a zero-knowledge protocol.
\begin{theorem}[EFIs Imply pQIPzk]
    If EFI pairs exist, then pQIP $\subseteq$ pQIPzk.
\end{theorem}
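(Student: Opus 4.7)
The plan is to build the pQIPzk protocol in the five-step structure sketched in the technical outline, starting from an arbitrary pQIP protocol for $X = (X^{\text{yes}}, X^{\text{no}})$, which we may assume without loss of generality to have negligibly small completeness and soundness error. The first step is to turn it into an honest-verifier zero-knowledge protocol: I would run each round of the original interaction inside a one-sided statistically secure two-party computation (which exists under EFI by \cref{lmm:2pcEFI}), and use the quantum state commitments of \cref{sec:qscom} to carry the verifier's internal state between rounds and thereby enforce consistency even though the prover is unbounded. Soundness is inherited from the original pQIP protocol because the secure computation is statistically sound against the prover, while honest-verifier zero-knowledge follows from the secure-computation simulator, which needs only polynomially many copies of $\ket{\psi}$ to produce correctly-distributed transcripts.

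Next, I would round-collapse this interactive protocol into the three-message public-coin form recalled in the technical outline, in which the verifier ultimately sends a single random challenge bit and, depending on the bit, either runs $\mathrm{V}_2$ and measures a designated qubit, or runs $\mathrm{V}_1^\dagger$ and projects onto $\ketbra{\psi_v}{\psi_v}$. Applying the Kitaev-Watrous~\cite{KW00}, Marriot-Watrous~\cite{MW05}, or KKMV09~\cite{KKMV09} compilers directly does not suffice, because a naive recursion multiplies the number of copies of $\ket{\psi}$ at each level, causing an exponential blow-up in the copy complexity; I would therefore assemble a tailored variant, a modified Kitaev-Watrous step followed by a modified Marriot-Watrous amplification step interleaved with standard soundness-amplification stages, so that the copy overhead stays polynomial. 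The second subtlety is that we have no circuit preparing $\ket{\psi_v}$, so the final $\ketbra{\psi_v}{\psi_v}$ projection has to be replaced by a SWAP-test against fresh copies of $\ket{\psi}$; by the POVM description in \eqref{eq:swapPOVM}, this only costs a constant factor in the soundness gap, which can be recovered by another round of amplification, followed by a standard transformation to public-coin form.

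The main obstacle, and the step I expect to be technically hardest, is upgrading from honest-verifier to malicious-verifier zero-knowledge. Watrous's rewinding analysis~\cite{WatrousZK} does not apply here because we lack an efficiently implementable projector onto $\ket{\psi_v}$, which is precisely the gap identified in~\cite{CCHS24}. My strategy is to replace the verifier's public coins by the output of a secure two-party coin-tossing subprotocol (again instantiated via \cref{lmm:2pcEFI}), so that the prover sees only a uniformly random challenge regardless of the verifier's strategy, while a QPT simulator for any malicious verifier is obtained by composing the coin-toss simulator with the honest-verifier simulator constructed in the first step. Soundness is preserved because the coin-toss subprotocol is statistically secure against the prover. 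The delicate part is the zero-knowledge argument: since the honest prover in pQIP is unbounded, I would first thread a hybrid that replaces the prover's real execution of each secure computation primitive by its ideal functionality (which preserves the distribution statistically) \emph{before} invoking the computational security of any cryptographic primitive, so that no computational reduction ever has to run an unbounded adversary internally.
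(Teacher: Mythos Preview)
Your five-step pipeline matches the paper's construction essentially step for step: honest-verifier zero-knowledge via per-round secure computation plus quantum state commitments, a modified Kitaev--Watrous compression followed by parallel amplification, a modified Marriott--Watrous public-coin transformation with the $\ketbra{\psi_v}{\psi_v}$ projector replaced by a SWAP-test, and finally a coin-tossing subprotocol (plus sequential repetition) to upgrade to malicious-verifier zero-knowledge. So at the architectural level you are aligned with the paper.

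There is, however, a genuine gap in your last paragraph. You propose to handle the unbounded honest prover by ``first threading a hybrid that replaces the prover's real execution of each secure computation primitive by its ideal functionality (which preserves the distribution statistically).'' This does not work: the one-sided statistical security of the two-party computation is \emph{against} a corrupted prover, i.e., it guarantees that the honest \emph{verifier's} view is statistically simulatable. In the zero-knowledge proof the verifier is malicious and the prover is honest, so the only simulation guarantee available is the \emph{computational} one against the verifier. Replacing the real coin-toss by the ideal functionality is therefore a computational step, not a statistical one, and you are right back to needing an efficient reduction that must nonetheless produce the unbounded prover's messages $W$ and $M$ surrounding the coin-toss.

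The paper resolves this differently: every computational indistinguishability step (both in the HVZK analysis and in the final malicious-verifier hybrid sequence) is proved via a \emph{non-uniform} reduction that receives, as quantum advice, the joint state of the system immediately prior to the sub-protocol under attack. Because the hybrids are peeled off from the last sequential iteration backward, all later iterations are already efficiently simulatable by the HVZK simulator, so once the advice supplies the ``past'' the reduction is QPT. This non-uniform-advice trick is the mechanism that lets the computational reductions avoid running the unbounded prover; your statistical-first ordering does not achieve that, and you should replace it with the non-uniform argument.
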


\subsection{Honest-Verifier Zero-Knowledge}\label{sec:hvzk}

Let $X = (X^{\text{yes}}, X^{\text{no}})$ be a pQIP problem instance, and we assume without loss of generality that the 
completeness-soundness gap is $1- \mu(\lambda)$, for some negligible function $\mu$ (see \cref{sec:def}).  We assume the existence of the following ingredients:
\begin{itemize}
    \item A one-sided statistically secure computation protocol (see \cref{sec:2pc}).
    \item A statistically hiding and computationally hiding commitment (see \cref{sec:qscom}).
\end{itemize}
Both of these ingredients can be constructed assuming the existence of EFI pairs. For a (pure) state $\ket{\psi} \in (\mathbb{C}^{2})^{\otimes n}$ and a security parameter $\lambda$, let $m=m(\lambda)$ be the number of messages of the pQIP protocol, which we assume to be odd for notational convenience. The protocol proceeds as follows.

\paragraph{Protocol I.} In the first round, the prover and the verifier engage in the secure computation of the following functionality, with statistical security against a corrupted prover.
\begin{itemize}
\item (Prover Input) The prover inputs nothing.
\item (Verifier Input) The verifier inputs copies of the state $\ket{\psi}^{\otimes v(\lambda)}$.
\item (Computation) The functionality proceeds as follows:
\begin{itemize}
\item Initialize a register $W$ with $\ketbra{0\dots 0}{0\dots 0} \otimes \ketbra{\psi}{\psi}^{\otimes v(\lambda)}$ and a register $E$ with $\ketbra*{0^\lambda}{0^\lambda}$.
\item Apply $\mathrm{Com}$ to $WE$ and divide the resulting state in two registers $C$ and $D$.
\item Return $C$ to the prover and $D$ to the verifier.
\end{itemize}
\end{itemize}
Let $r = \frac{m+1}{2}$ be the number of rounds of interaction. For $i \in \{1, \dots, r\}$ the prover and the verifier engage in the secure computation of the following functionality, with statistical security against a corrupted prover.
\begin{itemize}
    \item (Prover Input) The prover inputs a register $M$ and a register $C$.
    \item (Verifier Input) The verifier inputs a register $D$.
    \item (Computation) The functionality proceeds as follows:
    \begin{itemize}
    \item Apply the unitary $\mathrm{Com}^\dagger$ to $CD$, resulting in two registers $WE$.
    \item Apply the verification measurement $\Pi_{\mathrm{Com}}$ to $WE$ and return $\bot$ to both prover and verifier if the verification fails.
    \item Apply the verifier unitary $\mathrm{V}_i$ to $WM$.
    \item If $i=1$ measure the first qubit of $W$ in the computational basis and return its output to the verifier (and nothing to the prover).
    \item Otherwise, apply $\mathrm{Com}$ to $WE$ and divide the resulting state in two registers $C$ and $D$.
    \item Return $C$ and $M$ to the prover and $D$ to the verifier.
    \end{itemize}
\end{itemize}
If the secure computation protocol never aborts, the verifier returns the output of the last round of secure computation, otherwise it outputs $0$.

\paragraph{Analysis.} It is clear that the protocol is complete, since the prover can simply run the honest prover's strategy and cause the verifier to accept with overwhelming probability. Next, we show that the protocol is sound.

\begin{lemma}[Soundness]\label{lmm:hvzksound}
    Protocol I is sound.
\end{lemma}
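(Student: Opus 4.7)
The plan is to reduce the soundness of Protocol I to the soundness of the underlying pQIP protocol by moving to an ``ideal world'' in which every secure computation invocation is replaced by a query to its ideal functionality. The key claim is that, in this ideal world, any malicious prover's view is statistically close to a direct interaction with the honest pQIP verifier $\mathrm{V}=(\mathrm{V}_1,\dots,\mathrm{V}_r)$ on input $\ket{\psi}^{\otimes v(\lambda)}$, whose workspace $W$ is faithfully maintained across rounds by the commitment mechanism. Since $\rho \in X^{\text{no}}$, the (amplified) unconditional soundness of pQIP then yields a negligible acceptance probability.

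First I would peel off the $r+1$ secure computation invocations one at a time via a hybrid argument. The SC scheme provides one-sided statistical security against a corrupted prover (\cref{def:2pc}, \cref{lmm:2pcEFI}), so for each invocation there is an (unbounded) simulator that replaces the real SC interaction with a single query to the corresponding ideal functionality while changing the joint view of any distinguisher by at most a negligible amount. Composing these $\mathrm{poly}(\lambda)$ hybrids still leaves a negligible total gap, so it suffices to bound the acceptance probability of an arbitrary unbounded malicious prover $\mathcal{P}^*$ interacting with the sequence of ideal functionalities.

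Next I would analyze this ideal experiment directly. The initialization functionality produces
\[
(C,D) \;=\; \mathrm{Com}\left(\ket{0\cdots 0}_W \otimes \ket{\psi}^{\otimes v(\lambda)} \otimes \ket{0^\lambda}_E\right),
\]
handing $C$ to $\mathcal{P}^*$ and $D$ to the honest verifier. In round $i$, $\mathcal{P}^*$ returns some (possibly modified) register $C'$ together with a message register $M$; the functionality opens $C'D$ with $\mathrm{Com}^\dagger$, runs the check $\Pi_{\mathrm{Com}}$, and, if verification passes, applies $\mathrm{V}_i$ to $WM$ before re-committing. Invoking the statistical double-opening binding of the commitment (\cref{def:swapbind}), I would argue that any operation $\mathcal{P}^*$ performs on $C$ that does not cause $\Pi_{\mathrm{Com}}$ to reject is statistically indistinguishable from having acted only on the prover's side of the cut while leaving $W$ untouched. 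Chaining this guarantee across the $r$ rounds, the $W$ register inside the ideal functionalities evolves exactly as $\mathrm{V}_r\cdots \mathrm{V}_1$ applied to the honest initial state, interleaved with $\mathcal{P}^*$'s messages on $M$.

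Hence the ideal experiment is statistically indistinguishable from an $m$-message pQIP execution between $\mathcal{P}^*$ and $\mathrm{V}$ on input $\ket{\psi}^{\otimes v(\lambda)}$, and the amplified pQIP soundness (\cref{def:QIP} together with the amplification remark in \cref{sec:def}) then bounds the final acceptance probability by a negligible function of $\lambda$. The main obstacle is the hybrid in the previous paragraph: double-opening binding, as stated in \cref{def:swapbind}, is formulated as a two-round game, so one must verify that it composes across the $r$ open--operate--recommit cycles of Protocol I, and that $\mathcal{P}^*$ cannot exploit growing entanglement between $C$, its internal workspace, and the accumulating message history to gradually drift the committed state $W$ away from its honest evolution.
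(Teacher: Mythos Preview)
Your proposal is correct and follows essentially the same two-step structure as the paper: first replace every secure computation invocation by its ideal functionality using statistical simulation security against the prover, then use statistical double-opening binding of the quantum state commitment to argue that the verifier's workspace $W$ is faithfully carried across rounds, reducing to pQIP soundness. The paper's only additional concreteness is in the binding step: rather than arguing abstractly that the prover cannot disturb $W$, it introduces an explicit external register $W'$ initialized to $\ket{0\cdots 0}$, swaps $W\leftrightarrow W'$ before each commit and swaps back before applying $\mathrm{V}_i$, so that each round's reduction matches the double-opening game of \cref{def:swapbind} on the nose---this directly addresses the composition concern you flag.
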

\begin{proof}
Take an arbitrary prover $\mathcal{P}$ and consider the following series of hybrid experiments.
\begin{itemize}
    \item Hybrid $0$: This is the original interaction between the prover and the verifier.
    \item Hybrid $1$: We substitute all the interactions of the secure computation protocol with a the simulator. 
    
    Indistinguishability follows from the statistical security of the protocol and a triangle inequality.
    \item Hybrid $2$: We substitute the way we compute the commitments in the secure computation protocol. Before committing to $WE$, we initialize an external register $W'$ initialized to $\ketbra{0 \dots 0}{0 \dots 0}$ and we apply a $\mathrm{SWAP}$ operator to $WW'$. Then, before applying the verifier's unitary $\mathrm{V}_i$, we apply again the $\mathrm{SWAP}$ operator to $WW'$.

    Indistinguishability follows from a standard hybrid argument, where we reduce the security of each neighbouring hybrid to the (statistical) double opening binding property of the quantum state commitment scheme.
\end{itemize}
The proof is concluded by noticing that the last hybrid is identical to an execution of the pQIP protocol, where the verifier is run honestly. Thus soundness follows by a triangle inequality.
\end{proof}
Next, we define the notion of honest-verifier zero-knowledge.
\begin{definition}[Honest-Verifier Zero-Knowledge]\label{def:hvzk}
Let $X = (X^{\text{yes}}, X^{\text{no}})$ be a pQIP problem instance. An interactive protocol satisfies honest-verifier zero-knowledge for all 
$\ket{\psi} \in X^{\text{yes}}$ there exists a polynomials $s,v$ and a QPT simulator $\mathcal{S} = \{\mathcal{S}_\lambda\}_\lambda$ such that
\[
\left\{ \mathcal{S}_\lambda(i, \ket{\psi}^{\otimes s(\lambda)}) \right\}_\lambda \approx_c \left\{\Tr_R\left(\mathrm{P}_i\mathrm{V}_{i-1}\mathrm{P}_{i-1} \cdots \mathrm{V}_1\mathrm{P}_1 (\ket{\psi}^{\otimes v(\lambda)}\otimes \ket{0 \dots 0}) \right)\right\}_\lambda
\]
holds for all $i\in m(\lambda)$.
\end{definition}
We are now ready to prove that the protocol satisfies \cref{def:hvzk}.
\begin{lemma}[Honest-Verifier Zero-Knowledge]
    If EFI pairs exist, Protocol I is honest-verifier zero-knowledge.
\end{lemma}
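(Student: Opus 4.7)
The plan is to define $\mathcal{S}_\lambda(i, \ket{\psi}^{\otimes s(\lambda)})$ as a careful orchestration of the SC simulator guaranteed by \cref{def:2pc} applied to the verifier's side of each of the (constantly many per round) secure computations in Protocol I. Since the verifier is honest, its input to each 2PC is known to $\mathcal{S}_\lambda$: $\ket{\psi}^{\otimes v(\lambda)}$ (drawn from the input copies) for the first 2PC, and the decommitment register $D_{j-1}$ produced in the previous round thereafter. As the artificial function output fed to the SC simulator, the plan is to use a freshly generated commitment to a dummy state $\ket{0\dots0}$ for every intermediate 2PC, and the bit $1$ for the final measurement 2PC. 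Stitching the resulting simulated views together gives $\mathcal{S}_\lambda$'s output, which is clearly QPT and requires only polynomially many copies of $\ket\psi$.

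Indistinguishability from the real view proceeds through a hybrid argument alternating two kinds of transitions for each 2PC. \emph{Transition A} replaces one real 2PC execution with the output of the SC simulator applied to the \emph{actual} function output; indistinguishability here is direct from the computational security of the secure-computation protocol against a corrupted verifier (\cref{def:2pc}). \emph{Transition B} swaps the actual function output (a commitment to the honestly evolving verifier state) for a commitment to a dummy state; this is justified for intermediate rounds by the hiding property of the quantum state commitment scheme, which ensures that the verifier's $D$ register is indistinguishable from a commitment to an arbitrary state, and for the terminal round by the $1-\mu(\lambda)$ completeness of the underlying pQIP protocol, which guarantees that the real accept bit equals $1$ except with negligible probability. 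Composing the $O(r)$ transitions via a triangle inequality gives the desired bound.

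The main obstacle is that the honest prover in Protocol I inherits the potentially unbounded pQIP prover, so the SC reduction cannot execute the prover as part of a QPT environment. The plan to handle this is to march through the hybrids in temporal order and, when reducing the $j$-th transition, freeze the joint quantum state of prover and verifier immediately before the $j$-th 2PC and pass it as non-uniform quantum advice to the reduction. From that frozen point onward the reduction only has to run the SC challenger for that single 2PC and the already-simulated subsequent 2PCs, both of which are QPT; the statistical security of the 2PC against a corrupted prover ensures that the frozen advice state is well-defined up to statistical error, independent of the specific reduction strategy. This yields computational indistinguishability in the sense required by \cref{def:hvzk}.
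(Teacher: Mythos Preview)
Your overall plan---simulate each secure-computation sub-protocol with the \cref{def:2pc} simulator, replace the honest intermediate outputs by commitments to $\ket{0\dots0}$ and the final output by $1$, and absorb the inefficient prover into non-uniform advice---matches the paper's proof. But the order in which you march through the hybrids is exactly backwards, and as written the argument does not go through. You propose to proceed ``in temporal order'' and freeze the state \emph{before} the $j$-th 2PC; however, after the challenger for that 2PC returns, the reduction must still produce the verifier's state at message $i$, which requires running the \emph{later} 2PCs $j+1,\dots,k$. In your hybrids those later sub-protocols are still the real ones, and the prover's next-round unitaries act on its private workspace together with the $C,M$ registers output by the challenger---quantities that cannot be pre-computed as advice. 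Your sentence ``the already-simulated subsequent 2PCs'' is therefore inconsistent with forward order, and the appeal to statistical security against a corrupted prover does not resolve this: that property is about soundness and says nothing about making the prover's future messages efficiently computable.

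The fix, which is what the paper does, is to sweep the hybrids \emph{backwards}: first simulate the $k$-th sub-protocol and replace its output, then the $(k-1)$-th, and so on. When you reach the $(k-j)$-th sub-protocol, everything after it has already been turned into an efficient simulation, so the reduction takes as non-uniform advice the joint state \emph{just before} the $(k-j)$-th 2PC and can efficiently finish the transcript from there. A secondary remark: you justify Transition~B via ``hiding'' of the commitment, but the paper deliberately never defines hiding and instead reduces this step to the (computational) double-opening binding property of \cref{def:swapbind}; in the quantum-state-commitment setting the indistinguishability of the $D$ register under a swap of the committed message is exactly what that game captures, so you should cite that property rather than hiding.
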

\begin{proof}
    Fix any index $i$ and let $k \leq r$ be such that the $i$-the message of the protocol is a (possibly intermediate) message of the $k$-th secure computation sub-protocol. Consider the following sequence of hybrids.

\begin{itemize}
    \item Hybrid $0$: The output of the simulator is defined to be
    \[
    \Tr_R\left(\mathrm{P}_i\mathrm{V}_{i-1}\mathrm{P}_{i-1} \cdots \mathrm{V}_1\mathrm{P}_1 (\ket{\psi}^{\otimes v(\lambda)}\otimes \ket{0 \dots 0})  \right)
    \]
    that is, it corresponds to the real experiment.

    \item Hybrid $1$: We invoke the simulator of the secure computation protocol to simulate the messages of the $k$-th computation protocol.

    Indistinguishability follows by a reduction to the (computational) malicious security of the protocol: The reduction is given as a non-uniform advice the states in registers $C$, $D$, and $M$ corresponding to an honest execution of the protocol, right before the beginning of the $k$-th secure computation. The reduction uses these registers as the inputs for its interaction with the challenger.  
    Then, it uses the distinguisher to detect if the interaction was real or simulated.

    \item Hybrid $2$: If $k = r$, simulate the output of the secure computation protocol to be $1$, otherwise simulate the output to commit to the $\ketbra{0\dots 0}{0\dots 0}$ state.

    For the former case, indistinguishability follows by the negligible completeness error of the protocol. For the latter, by a non-uniform reduction to the (computational) double opening binding property of the quantum state commitment scheme. 
\end{itemize}
Then, for $j \in \{1, \dots, k-1\}$ we define
\begin{itemize}
    \item Hybrid $2j + 1$: The simulator invokes the simulator of the secure computation protocol to simulate the messages of the $(k-j)$-th secure computation protocol.

    Indistinguishability follows again by a non-uniform reduction, as above. However, note that messages corresponding to subsequent sub-protocols are all efficient (since subsequent protocols are simulated) and so the reduction can efficiently complete the execution of the protocol to get to the final state.
    
    \item Hybrid $2j + 2$: The simulator programs the output of the $(k-j)$-th secure computation protocol to a commitment to the $\ketbra{0\dots 0}{0\dots 0}$ state.

    Indistinguishability follows from the same argument as above.
\end{itemize}
Finally, observe that in the last hybrid the simulator computes the final state by simulating instances of the secure computation protocol and programming their output to be a commitment to a fixed state, which is also efficiently computable. Overall, this establishes that the simulator is QPT and its output is computationally indistinguishable from the original message.
\end{proof}

\subsection{Round Collapsing}\label{sec:round}

Let $(\mathrm{P}, \mathrm{V})$ be an interactive protocol for pQIP that satisfies honest-verifier zero-knowledge (see \cref{sec:hvzk}) and has negligible completeness and soundness errors. Let us denote by 
\[
\ketbra{\psi_v}{\psi_v} = \ketbra{\psi}{\psi}^{\otimes v(\lambda)} \otimes \ketbra{0\dots 0}{0\dots 0}
\]
the initial state of a the verifier, for a problem instance $\ket{\psi} \in (\mathbb{C}^{2})^{\otimes n}$. In the following we show that a small modification of the \cite{KW00} compiler allows us to turn any such $r$-round protocol into a $3$-message one.

\paragraph{Protocol II.} We describe the protocol from the perspective of the verifier  in the following.
\begin{itemize}
    \item Receive registers $(W_2, \dots, W_r)$ and $(M_1, \dots, M_r)$ from the prover. Initialize a register $W_1$ to $\ketbra{\psi_v}{\psi_v}$. Apply the projective measurement $\Pi_{\mathrm{Acc}} = \{ \Pi_0, \Pi_1\}$ to $W_rM_r$, where we define
    \[
    \Pi_0 = \mathrm{Id}- \Pi_1
    \quad \text{and} \quad
    \Pi_1 = \mathrm{V}_r^\dagger (\ketbra{1}{1} \otimes \mathrm{Id}) \mathrm{V}_r.
    \]
    Reject if the outcome $1$ is not observed.
    \item Initialize two registers $BB'$ with the maximally entangled state
    \[
    \frac{\ket{0}_B\ket{0}_{B'} + \ket{1}_B\ket{1}_{B'}}{\sqrt{2}}
    \]
    and sample a random $i \in \{1, \dots, r-1\}$. Apply $\mathrm{V}_i$ to $W_iM_i$ then apply a $\mathrm{SWAP}$ operator between $W_i$ and $W_{i+1}$, controlled on the register $B$. Send the index $i$ and the registers $M_i$, $M_{i+1}$, and $B'$ to the prover.
    \item Receive the register $B'$ back from the prover. Controlled on $B$ apply the Pauli $\mathrm{X}$ to $B'$, then measure $B$ in the Hadamard basis. Accept of the outcome $+$ is observed, and reject otherwise. 
\end{itemize}

\paragraph{Analysis.} For an honest prover, the protocol is identical to \cite{KW00}, so completeness follows from the same argument. For soundness, we show that the same argument from \cite{KW00} also applies to our modified protocol. Let us denote the projection $\Pi_{\mathrm{Init}} = \ketbra{\psi_v}{\psi_v}$ and recall the following technical lemma.
\begin{lemma}[\cite{KW00}]\label{lmm:5}
Let $(\mathrm{P}, \mathrm{V})$ be an interactive protocol with $r$ rounds and soundness $\zeta$. Let $\rho_1, \dots \rho_r$ be states in registers $M_iW_i$ satisfying
\[
\rho_1 = \Pi_{\mathrm{Init}}\rho_1\Pi_{\mathrm{Init}}
\quad\text{and}\quad
\rho_r = \Pi_1\rho_r \Pi_1.
\]
Then \[\sum_{i=1}^{r-1}\sqrt{F\left(\Tr_{M_i}(\mathrm{V}_i^\dagger\rho_i \mathrm{V}_i), Tr_{M_i}(\rho_{i+1})\right)} \leq r-1-\frac{(1-\zeta)^2}{8(r-1)}.\]
\end{lemma}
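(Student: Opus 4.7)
The approach is to construct a cheating prover $\mathcal{P}^*$ from the sequence $\rho_1,\dots,\rho_r$ via Uhlmann's theorem, lower bound its acceptance probability in terms of the sum appearing in the statement, and then invoke soundness ($P_\mathrm{acc}\leq \zeta$) to get the desired inequality. The observation driving the proof is that if the per-round fidelities were all equal to one, then consistency between neighbouring $\rho_i$'s on the verifier's marginal would let the prover perfectly interpolate between them, trivially causing acceptance; the lemma quantifies this trade-off.

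Concretely, I would first purify each $\rho_i$ on a private prover register $R_i$, obtaining pure states $\ket{\rho_i}\in\mathcal{H}_{W_iM_iR_i}$. The cheating prover sends the message portion of $\rho_1$ to the verifier, which is consistent with the verifier's initialization since $\rho_1=\Pi_\mathrm{Init}\rho_1\Pi_\mathrm{Init}$ forces the $W_1$-marginal to equal $\ketbra{\psi_v}{\psi_v}$. After the verifier applies $\mathrm{V}_i$ and returns the message register, the prover applies a Uhlmann unitary $\mathrm{U}_i$ on $M_iR_i$ chosen so that the resulting state has maximal overlap with $\ket{\rho_{i+1}}$. By Uhlmann's theorem this overlap equals $\sqrt{F(\Tr_{M_i}(\mathrm{V}_i^\dagger\rho_i\mathrm{V}_i),\Tr_{M_i}(\rho_{i+1}))}$, which accounts for the $\mathrm{V}_i^\dagger$ appearing in the statement: one must pull the verifier unitary off the state before comparing marginals on $W_i$.

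Because $\rho_r=\Pi_1\rho_r\Pi_1$, the verifier's acceptance probability $P_\mathrm{acc}$ is at least the squared overlap of the final state with $\ket{\rho_r}$. The main step is to show that this overlap is lower-bounded by an explicit function of the sum $S=\sum_{i=1}^{r-1}\sqrt{F_i}$. I would apply the reverse triangle inequality for fidelity~\eqref{eq:fidtriangle}, in its quadratic form $F(\rho,\sigma)^2+F(\sigma,\tau)^2\leq 1+F(\rho,\tau)$, iteratively between neighbouring ``ideal'' states $\ket{\rho_i}$ and the true evolving state; this prevents losses from compounding multiplicatively. A short algebraic manipulation then yields an inequality of the form
\[
S \;\leq\; (r-1)-\frac{(1-P_\mathrm{acc})^2}{8(r-1)},
\]
and substituting $P_\mathrm{acc}\leq\zeta$ delivers the claim.

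The hard part, I expect, is the telescoping: a naive triangle inequality for trace distance or an iterated Fuchs--van de Graaf bound would lose a factor linear in $r$ in the wrong place, producing a trivial bound. The delicate point is that one must exploit precisely the quadratic form of~\eqref{eq:fidtriangle} at each step, combined with a convexity or Cauchy--Schwarz argument, to concentrate all the round-by-round loss into the single $1/(8(r-1))$ factor on the right-hand side. Everything else — the construction of $\mathcal{P}^*$, the Uhlmann step, and the final invocation of soundness — is essentially mechanical.
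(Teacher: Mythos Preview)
The paper does not prove this lemma; it is simply \emph{recalled} from \cite{KW00} as a technical tool and used as a black box in the soundness analysis of Protocol~II. There is therefore no proof in the paper to compare your proposal against.

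For what it is worth, your outline is the standard Kitaev--Watrous argument: purify each $\rho_i$, use Uhlmann's theorem to build a cheating prover that round by round rotates its private register toward the next purification $\ket{\rho_{i+1}}$, and then lower-bound the acceptance probability (which, because $\rho_r$ is supported on $\Pi_1$, dominates the squared overlap with $\ket{\rho_r}$) via an iterated overlap analysis. Your identification of the delicate step --- that a naive triangle inequality loses a multiplicative factor per round, and that one must instead use a quadratic fidelity inequality to concentrate the total loss into the single $(1-\zeta)^2/8(r-1)$ term --- is precisely the crux of the original proof. The boundary condition $\rho_1=\Pi_{\mathrm{Init}}\rho_1\Pi_{\mathrm{Init}}$ is what lets the prover start in phase with the verifier, and you have used it correctly.
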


We are now ready to prove soundness.
\begin{lemma}[Soundness]
If $(\mathrm{P}, \mathrm{V})$ satisfies soundness with error $\zeta$, then Protocol II is sound with error $1 - \frac{(1-\zeta)^2}{16(r-1)^2}$.
\end{lemma}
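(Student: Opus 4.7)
The plan is to adapt the standard Kitaev--Watrous argument to Protocol II. Fix an arbitrary cheating prover. Its initial message fixes states $\rho_1, \ldots, \rho_r$ on the registers $W_iM_i$, together with an arbitrary channel the prover will apply to $M_i, M_{i+1}, B'$ after receiving the challenge. Because the verifier itself initializes $W_1$ to $\ketbra{\psi_v}{\psi_v}$, the state $\rho_1$ automatically satisfies $\rho_1 = \Pi_{\mathrm{Init}}\rho_1\Pi_{\mathrm{Init}}$; conditioning on success of the initial $\Pi_{\mathrm{Acc}}$ measurement forces the post-measurement $\rho_r$ to satisfy $\rho_r = \Pi_1\rho_r\Pi_1$. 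Thus the hypotheses of \cref{lmm:5} are met on the post-$\Pi_{\mathrm{Acc}}$ states, and since the total acceptance probability is bounded by the conditional probability that the swap-test phase accepts given $\Pi_{\mathrm{Acc}}$ succeeds, it suffices to bound this conditional probability.

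The first technical step is to show that the controlled-SWAP-plus-entanglement gadget, together with the prover's channel on $M_i, M_{i+1}, B'$, realizes an optimal swap test between the reduced states on $W_i$ (after applying $\mathrm{V}_i$) and $W_{i+1}$. The maximally entangled state on $BB'$ combined with the controlled-SWAP on $W_iW_{i+1}$ and the final Hadamard measurement on $B$ is (by a direct calculation) equivalent to a standard swap test with control qubit $B$, deferred via teleportation through $B'$. The prover's freedom to act on $M_i, M_{i+1}, B'$ is enough to realize arbitrary purifications of the marginals on $W_i$ and $W_{i+1}$, so by Uhlmann's theorem the maximum acceptance probability is $(1+F_i)/2$, where $F_i$ is the fidelity between $\Tr_{M_i}(\mathrm{V}_i\rho_i\mathrm{V}_i^\dagger)$ and $\Tr_{M_{i+1}}(\rho_{i+1})$ (matching the quantity appearing in \cref{lmm:5} up to the cosmetic placement of $\mathrm{V}_i$ vs. $\mathrm{V}_i^\dagger$).

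Averaging over the uniform challenge $i \in \{1, \ldots, r-1\}$, and using that $F_i \in [0,1]$ implies $F_i \leq \sqrt{F_i}$, the conditional acceptance probability is at most
\[
\frac{1}{2}\left(1 + \frac{1}{r-1}\sum_{i=1}^{r-1}F_i\right) \;\leq\; \frac{1}{2}\left(1 + \frac{1}{r-1}\sum_{i=1}^{r-1}\sqrt{F_i}\right).
\]
Applying \cref{lmm:5} to the post-$\Pi_{\mathrm{Acc}}$ states yields $\frac{1}{r-1}\sum_i\sqrt{F_i} \leq 1 - \frac{(1-\zeta)^2}{8(r-1)^2}$, so the total acceptance probability is at most $1 - \frac{(1-\zeta)^2}{16(r-1)^2}$, as claimed.

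The main obstacle is the first step: unraveling the entanglement-based gadget to verify that it genuinely implements a swap test whose optimum over prover strategies equals $(1+F)/2$ for the Uhlmann fidelity of the reduced $W$-marginals. A minor subtlety is that \cref{lmm:5} must be applied to the (normalized) post-measurement states rather than to the states the prover sends, but since multiplying by $\Pr[\Pi_{\mathrm{Acc}} \text{ succeeds}] \leq 1$ only decreases the acceptance probability, this causes no loss in the final bound.
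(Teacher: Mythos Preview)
Your overall plan matches the paper's proof: condition on the $\Pi_{\mathrm{Acc}}$ measurement succeeding (which can only raise the success probability), observe that $\rho_1$ and the post-selected $\rho_r$ satisfy the hypotheses of \cref{lmm:5}, bound each conditional acceptance probability $p_i$ via the fidelity of the $W$-marginals, and average over $i$. The gap is in your analysis of the gadget. You assert that the maximum acceptance probability for challenge $i$ equals $(1+F_i)/2$, but under the paper's squared-fidelity convention this is \emph{too small} to serve as a soundness bound. The paper computes $p_i = \tfrac{1}{2} + \tfrac{1}{2}\Re\ip{\psi_0}{\psi_1}$, where $\ket{\psi_0},\ket{\psi_1}$ are the purifications of the two $W_i$-marginals realized by the prover's unitary, and then uses \cref{eq:fidelity} to get $|\ip{\psi_0}{\psi_1}| \le \sqrt{F_i}$, hence $p_i \le (1+\sqrt{F_i})/2$. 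The gadget is not a swap test in the usual sense (which would give $(1+\Tr(\rho\sigma))/2$, also not $(1+F)/2$); the acceptance probability is linear in the inner product of the purifications, not quadratic, so optimizing over purifications yields the root fidelity $\sqrt{F_i}$ rather than $F_i$.

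Your argument happens to land at the right endpoint because you then invoke $F_i \le \sqrt{F_i}$, arriving at $(1+\sqrt{F_i})/2$. But this step does not repair the logic: you cannot salvage an invalid upper bound by weakening it, since a priori the cheating prover can exceed your claimed $(1+F_i)/2$. The fix is to drop the $(1+F_i)/2$ claim and establish $p_i \le (1+\sqrt{F_i})/2$ directly from the purification calculation, exactly as the paper does; \cref{lmm:5} then applies with no detour, and the $F_i \le \sqrt{F_i}$ step becomes superfluous.
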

\begin{proof}
    We bound the probability of the prover by the probability that the state \emph{post-selected} on passing the check of projective measurement $\Pi_{\mathrm{Acc}}$. Given that this is a necessary condition for the prover to pass, this can only increase its success probability. 
    
    Let $\rho_1, \dots, \rho_r$ be the states held be the verifier in registers $M_iW_i$, after the aforementioned post-selection. We want to derive a closed form for the probability of the verifier accepting for a given $i$, denoted by $p_i$. The state of the verifier after the Pauli $\mathrm{X}$ applied in the third round can be written
    \[
    \frac{\ket{0}_B \ket{\psi_0}_{W_iK} + \ket{1}_B \ket{\psi_1}_{W_iK}}{\sqrt{2}}
    \]
    where $K$ denotes the collection of all registers but $B$ and $W_i$. Measuring this in the Hadamard basis, we obtain $0$ with probability
    \begin{equation}
    \label{eq:pi}
    p_i =  \norm{\frac{1}{2}\ket{0} (\ket{\psi_0}+ \ket{\psi_1})}^2 = \frac{1}{2} + \frac{1}{2}\Re\ip{\psi_0}{\psi_1}.
    \end{equation}
    Note that 
    \[
    \Tr_K(\ketbra{\psi_0}{\psi_0}) = \Tr_{M_i}(\mathrm{V}_i^\dagger\rho_i \mathrm{V}_i)
    \quad\text{and}\quad
    \Tr_K(\ketbra{\psi_1}{\psi_1}) = \Tr_{M_i}(\rho_{i+1})
    \]
    thus by \cref{eq:fidelity} we can bound
\begin{equation}\label{eq:abs}
     \abs{\ip{\psi_0}{\psi_1}}^2 \leq F\left(\Tr_{M_i}(\mathrm{V}_i^\dagger\rho_i \mathrm{V}_i), Tr_{M_i}(\rho_{i+1})\right).
\end{equation}
Combining \cref{eq:pi} and \cref{eq:abs}, we can conclude that 
\[
p_i \leq \frac{1}{2} + \frac{1}{2}\sqrt{F\left(\Tr_{M_i}(\mathrm{V}_i^\dagger\rho_i \mathrm{V}_i), Tr_{M_i}(\rho_{i+1})\right)}.
\]
We can now appeal to \cref{lmm:5} to bound
\[
\E_i (p_i) \leq 1 - \frac{(1-\zeta)^2}{16(r-1)^2}
\]
since the register $W_1$ is initialized to $\ketbra{\psi_v}{\psi_v}$ and it is therefore trivially in the image of $\Pi_{\mathrm{Init}}$, whereas the condition on $\rho_r$ holds by assumption. This concludes the proof.
\end{proof}

We show that the protocol satisfies honest-verifier zero-knowledge, following an argument from \cite{KobayashiZK}.
\begin{lemma}[Honest-Verifier Zero-Knowledge]
If $(\mathrm{P}, \mathrm{V})$ satisfies honest-verifier zero-knowledge, then so does Protocol II.
\end{lemma}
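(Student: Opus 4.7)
I would construct a simulator $\mathcal{S}^{\text{II}}$ for Protocol II as follows. First, sample the verifier's random index $i^* \in \{1, \dots, r-1\}$. Next, invoke the HVZK simulator $\mathcal{S}$ of $(\mathrm{P}, \mathrm{V})$ three times, at the indices $i^*$, $i^*+1$, and $r$, to produce states $\sigma_{i^*}, \sigma_{i^*+1}, \sigma_r$ on the corresponding $W_j M_j$ registers; fill the remaining $W_j M_j$ (for $j \notin \{i^*, i^*+1, r\}$) with a default state such as $\ketbra{0\dots 0}{0\dots 0}$. Then execute the honest verifier's operations on this simulated first message---the $\Pi_{\mathrm{Acc}}$ check on $W_r M_r$, the Bell pair preparation, the application of $\mathrm{V}_{i^*}$, and the controlled SWAP between $W_{i^*}$ and $W_{i^*+1}$---and simulate the honest prover's response to the SWAP-test challenge, which is an efficient operation given access to the simulated joint state. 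Output the resulting transcript. The simulator is QPT because $\mathcal{S}$ is QPT and all other operations are efficient.

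Indistinguishability from the real interaction would follow from a hybrid argument. First, replace the snapshots at indices $j \notin \{i^*, i^*+1, r\}$ by the default state; this is perfectly indistinguishable because the verifier's circuit never accesses those registers. Then, replace the honest marginals $\rho_{i^*}, \rho_{i^*+1}, \rho_r$ one at a time with the simulated $\sigma_j$, each step reducing directly to the HVZK property of $(\mathrm{P}, \mathrm{V})$ at index $j$ (the rest of the transcript is efficiently generated from $\sigma_j$ and non-uniform advice capturing the other registers).

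The delicate point is that in the real execution the three marginals are part of a single joint state, potentially entangled through the honest prover's (possibly inefficient) workspace $R$, whereas in the simulation they are independent samples from $\mathcal{S}$. I would resolve this following the strategy of \cite{KobayashiZK}: the verifier's decision probability can be expressed as a function of the marginals alone---the $\Pi_{\mathrm{Acc}}$ check inspects only the marginal on $W_r M_r$, and by \cref{eq:pi} and \cref{eq:abs} the acceptance rate of the SWAP subroutine is controlled by the marginal fidelity between $\Tr_{M_{i^*}}(\mathrm{V}_{i^*}^\dagger \rho_{i^*} \mathrm{V}_{i^*})$ and $\Tr_{M_{i^*+1}}(\rho_{i^*+1})$, together with the honest prover's canonical (marginal-dependent) response on $M_{i^*}, M_{i^*+1}, B'$. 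Consequently, any joint-state correlations across the three active positions beyond their marginals are immaterial to the verifier's view, and marginal-level HVZK of $(\mathrm{P}, \mathrm{V})$ suffices to conclude computational indistinguishability of the simulated and real transcripts.
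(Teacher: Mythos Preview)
Your proposal has two genuine gaps that prevent it from establishing honest-verifier zero-knowledge in the sense of \cref{def:hvzk}.

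First, filling the registers $W_jM_j$ for $j \notin \{i^*, i^*+1, r\}$ with a default state is \emph{not} indistinguishable from the real interaction. Honest-verifier zero-knowledge requires the simulator to reproduce the reduced state on \emph{all} of the verifier's registers (this is exactly what $\Tr_R(\cdots)$ in \cref{def:hvzk} denotes), not merely those the verifier's circuit happens to act on. After the prover's first message in Protocol~II the honest verifier holds every $W_jM_j$; in the real world each of these contains a genuine snapshot of the underlying protocol, and a distinguisher that simply inspects one of the ``unused'' registers will immediately tell $\ketbra{0\dots 0}{0\dots 0}$ apart from the honest content. The paper's simulator accordingly populates \emph{every} position $k$ with a freshly simulated snapshot $\ket{\psi_k}_{S_kW_kM_k}$ and, crucially, retains the auxiliary register $S_k$ rather than tracing it out.

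Second, your resolution of the correlation issue conflates the verifier's \emph{acceptance probability} with the verifier's \emph{view}. The bounds in \cref{eq:pi} and \cref{eq:abs} control the probability of the final $+$ outcome; they say nothing about the residual state on the remaining registers, and HVZK demands full-state indistinguishability, not just matching acceptance rates. The paper (following \cite{KobayashiZK}) sidesteps this entirely by giving an explicit response to the challenge: upon receiving $i, M_i, M_{i+1}, B'$, the simulator applies $\mathcal{S}_{i+1}$ to $M_iS_i$ and then, controlled on $B'$, a $\mathrm{SWAP}$ between $M_iS_i$ and $M_{i+1}S_{i+1}$. Using the identity $\mathrm{V}_k\ket{\psi_k} = \mathcal{S}_{k+1}^\dagger\ket{\psi_{k+1}}$, one verifies by direct calculation that the global state factors as $(\ket{00}_{BB'}+\ket{11}_{BB'})\otimes\ket{\psi_{i+1}}\ket{\psi_{i+1}}$, so the verifier's final Bell-state projection succeeds with certainty---the simulator's workspaces $S_k$ play the role the honest prover's private register $R$ plays in the real protocol. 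Your proposal leaves this response unspecified (``simulate the honest prover's response'') and, since you trace out the simulator's workspace when forming $\sigma_j$, there is no register left with which to enact such a response.
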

\begin{proof}
Given that the interactive protocol satisfies completeness with probability $1-\mu(\lambda)$, for a negligible function $\mu$, it suffices to describe a QPT simulator that produces an accepting transcript. Let $\mathcal{S}$ be the simulator of the underlying protocol and let $S$ be its work register, which we assume to be initialized to $\ket{0 \dots 0}$. For $k \in \{1, \dots, r-1\}$ the simulator prepares the registers $M_kW_k$ by computing
\[
\Tr_S\left(\mathcal{S}_k \mathrm{V}_{k-1} \dots \mathrm{V}_1 \mathcal{S}_1 (\ket{\psi_v}\ket{0 \dots 0})\right)
\]
which the simulator can compute with sufficiently many copies of $\ket{\psi}$. Given that the transcript is accepting and that the verifier is honest, the measurement $\Pi_{\mathrm{Acc}}$ acts as the identity on the state sent by the simulator.

Upon receiving the index $i$ and the registers $M_iM_{i+1}B'$, the simulator acts as follows:
\begin{itemize}
\item Apply $\mathcal{S}_{i+1}$ to $M_iS_i$, where $S_i$ is the simulator internal register corresponding to the $i$-th copy above.
\item Controlled on $B'$, apply $\mathrm{SWAP}$ to $M_iS_i$ with $M_{i+1}S_{i+1}$. 
\end{itemize}
To see why this is a good simulation, let us denote the joint state of the registers $S_kW_kM_k$ right after the simulator's first message by
\[
\ket{\psi_k}= \mathcal{S}_k \mathrm{V}_{k-1} \dots \mathrm{V}_1 \mathcal{S}_1 (\ket{\psi_v}\ket{0 \dots 0})
\]
and we have that
\[
\mathrm{V}_k \ket{\psi_k} = \mathrm{V}_k \mathcal{S}_k \mathrm{V}_{k-1} \dots \mathrm{V}_1 \mathcal{S}_1 (\ket{\psi_v}\ket{0 \dots 0}) = \mathcal{S}_{k+1}^\dagger \ket{\psi_{k+1}}.
\]
Then, the joint state between the prover and the verifier after the verifier's message is
\begin{align*}
&\ket{0}_{B}\ket{0}_{B'}\mathrm{V}_{i}\ket{\psi_{i}}_{S_iW_iM_i}\ket{\psi_{i+1}}_{S_{i+1}W_{i+1}M_{i+1}}\\ & \quad +
\ket{1}_{B}\ket{1}_{B'}\mathrm{SWAP}_{W_iW_{i+1}}\mathrm{V}_{i}\ket{\psi_{i}}_{S_iW_iM_i}\ket{\psi_{i+1}}_{S_{i+1}W_{i+1}M_{i+1}}\\
=~&\ket{0}_{B}\ket{0}_{B'}\mathcal{S}_{i+1}^\dagger\ket{\psi_{i+1}}_{S_iW_iM_i}\ket{\psi_{i+1}}_{S_{i+1}W_{i+1}M_{i+1}}\\  & \quad +\ket{1}_{B}\ket{1}_{B'}\mathrm{SWAP}_{W_iW_{i+1}}\mathcal{S}_{i+1}^\dagger\ket{\psi_{i+1}}_{S_iW_iM_i}\ket{\psi_{i+1}}_{S_{i+1}W_{i+1}M_{i+1}}
\end{align*}
omitting normalization factors. Then, after the unitary $\mathcal{S}_{i+1}$ applied by the simulator we obtain the (unnormalized) state
\ifllncs
\begin{align*}
&\ket{0}_{B}\ket{0}_{B'}\ket{\psi_{i+1}}_{S_iW_iM_i}\ket{\psi_{i+1}}_{S_{i+1}W_{i+1}M_{i+1}}\\
&+\ket{1}_{B}\ket{1}_{B'}\mathrm{SWAP}_{W_iW_{i+1}}\ket{\psi_{i+1}}_{S_iW_iM_i}\ket{\psi_{i+1}}_{S_{i+1}W_{i+1}M_{i+1}}
\end{align*}
\else
\[
\ket{0}_{B}\ket{0}_{B'}\ket{\psi_{i+1}}_{S_iW_iM_i}\ket{\psi_{i+1}}_{S_{i+1}W_{i+1}M_{i+1}}+\ket{1}_{B}\ket{1}_{B'}\mathrm{SWAP}_{W_iW_{i+1}}\ket{\psi_{i+1}}_{S_iW_iM_i}\ket{\psi_{i+1}}_{S_{i+1}W_{i+1}M_{i+1}}
\]
\fi
and the controlled $\mathrm{SWAP}$ operator results into the identity 
\ifllncs
\begin{align*}
&\mathrm{SWAP}_{M_iS_iM_{i+1}S_{i+1}}
\mathrm{SWAP}_{W_iW_{i+1}}\ket{\psi_{i+1}}_{S_iW_iM_i}\ket{\psi_{i+1}}_{S_{i+1}W_{i+1}M_{i+1}}\\=&\ket{\psi_{i+1}}_{S_iW_iM_i}\ket{\psi_{i+1}}_{S_{i+1}W_{i+1}M_{i+1}}
\end{align*}
\else
\[
\mathrm{SWAP}_{M_iS_iM_{i+1}S_{i+1}}
\mathrm{SWAP}_{W_iW_{i+1}}\ket{\psi_{i+1}}_{S_iW_iM_i}\ket{\psi_{i+1}}_{S_{i+1}W_{i+1}M_{i+1}}=\ket{\psi_{i+1}}_{S_iW_iM_i}\ket{\psi_{i+1}}_{S_{i+1}W_{i+1}M_{i+1}}
\]
\fi
which allows us to factorize the state as
\[
\left(\ket{0}_{B}\ket{0}_{B'}+\ket{1}_{B}\ket{1}_{B'}\right)\otimes \ket{\psi_{i+1}}_{S_iW_iM_i}\ket{\psi_{i+1}}_{S_{i+1}W_{i+1}M_{i+1}}.
\]
Note that the final check of the verifier is a projection of the $BB'$ registers onto the maximally entangled state, which succeeds with certainty on the above state. This concludes the proof.
\end{proof}

\subsection{Amplification}\label{sec:ampl}

It is shown in \cite{KW00} that the soundness of any $3$-message protocol amplifies with parallel repetition. Technically, \cite{KW00} only claimed the soundness result for interactive protocols where the initial state of the verifier is $\ket{0\dots 0}$, whereas here we need a slightly generalized version, where the initial state of the verifier is arbitrary. However, their proof works perfectly well for this case as well.

One way to see this is that we can substitute our verifier with a (possibly inefficient) one that starts with the $\ket{0\dots 0}$ state and applies the unitary $\mathrm{U}_v$ such that $\mathrm{U}_v\ket{0\dots 0} = \ket{\psi_v}$ and proceed as specified in the protocol. The proof of the parallel repetition theorem does not use the fact that the verifier is efficient, so the statement follows. 

\begin{lemma}[\cite{KW00}]
Let $(\mathrm{P}, \mathrm{V})$ be a $3$-message protocol with soundness error $\zeta$. Then, for any polynomial $p$, its $p$-fold parallel repetition has soundness error $\zeta^p$. 
\end{lemma}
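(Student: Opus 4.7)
The plan is to reduce to the statement of \cite{KW00} as stated, where the verifier begins the protocol with its workspace in the $\ket{0 \dots 0}$ state, and then invoke their theorem as a black box. The only subtlety is that our verifier begins in the (possibly non-trivially prepared) state $\ket{\psi_v} = \ket{\psi}^{\otimes v(\lambda)} \otimes \ket{0\dots 0}$, which need not be efficiently preparable by the verifier itself. The observation is that this distinction is immaterial for soundness.

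Concretely, I would define a modified verifier $\mathrm{V}'$ as follows. Let $\mathrm{U}_v$ be any unitary (not required to be efficiently implementable) such that $\mathrm{U}_v\ket{0\dots 0} = \ket{\psi_v}$; such a $\mathrm{U}_v$ exists because $\ket{\psi_v}$ is a unit vector. The verifier $\mathrm{V}'$ initializes its workspace to $\ket{0\dots 0}$, applies $\mathrm{U}_v$ as its very first operation (before any interaction), and thereafter runs exactly as $\mathrm{V}$ does. For every prover strategy $\mathcal{P}$, the joint execution with $\mathrm{V}'$ is unitarily equivalent to the joint execution with $\mathrm{V}$, since $\mathrm{U}_v$ acts only on the verifier's private workspace prior to any message exchange. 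Therefore $(\mathrm{P},\mathrm{V})$ and $(\mathrm{P},\mathrm{V}')$ have identical soundness error $\zeta$, and the $p$-fold parallel repetitions are likewise equivalent: one can simulate the repetition of $\mathrm{V}$ by the repetition of $\mathrm{V}'$ and vice versa, via the tensor product $\mathrm{U}_v^{\otimes p}$ acting on the $p$ independent workspaces.

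Now I would directly invoke the parallel repetition theorem of \cite{KW00}, applied to the $3$-message protocol $(\mathrm{P},\mathrm{V}')$ whose verifier starts with $\ket{0\dots 0}$ as required by their setup. That theorem concludes that the $p$-fold parallel repetition has soundness at most $\zeta^p$. Combined with the equivalence above, this yields the same bound for $(\mathrm{P},\mathrm{V})$.

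The only thing to verify, which I expect to be the main (and essentially only) subtlety, is that the \cite{KW00} parallel repetition argument does not implicitly use efficiency of the verifier anywhere. Inspecting their proof, it is an information-theoretic argument: it analyzes the success probability in terms of the structure of the $3$-message protocol (the projector onto the initial verifier state, the verifier's intermediate unitary, and the final accept projector), and bounds the value of the repeated game via fidelity/operator-norm manipulations that are insensitive to the circuit complexity of these operations. Hence applying it to the inefficient $\mathrm{V}'$ is legitimate, and the lemma follows.
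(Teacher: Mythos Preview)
Your proposal is correct and matches the paper's own argument essentially verbatim: the paper also replaces the verifier by a (possibly inefficient) one that first applies a unitary $\mathrm{U}_v$ with $\mathrm{U}_v\ket{0\dots 0}=\ket{\psi_v}$ and then invokes \cite{KW00} as a black box, noting that their parallel-repetition proof nowhere uses efficiency of the verifier. There is nothing to add.
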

Clearly honest-verifier zero-knowledge is preserved under parallel repetition (by a standard hybrid argument). Hence, we can henceforth assume that our protocol has negligible soundness error.

\subsection{Public Coin}\label{sec:public}

Let $(\mathrm{P}, \mathrm{V})$ be the parallel repeated version of the $3$-message protocol from \cref{sec:round} so that it has negligible completeness and soundness errors. Since the protocol has exactly $3$ messages, it is fully specified by an initial state plus two prover unitaries $(\mathrm{P}_1, \mathrm{P}_2)$ and two verifier unitaries $(\mathrm{V}_1, \mathrm{V}_2)$.

In the following we show how to compile the protocol into one where the verifier sends only a single bit, by slightly modifying the compiler from \cite{MW05}. To this end, let the problem instance be $\ket{\psi} \in (\mathbb{C}^{2})^{\otimes n}$ and note that the initial state of the verifier $\ket{\psi_v}$ is efficiently preparable given enough copies of $\ket{\psi}$. 

\paragraph{Protocol III.} The compiled protocol proceeds as follows.
\begin{itemize}
    \item The verifier receives a register $W$ from the prover.
    \item The verifier samples a coin $b \in\{0,1\}$ uniformly at random and sends it to the prover.
    \item The prover returns a register $M$.
    \begin{itemize}
    \item If $b=0$: Apply $\mathrm{V}_2$ to $WM$ and measure the first qubit of $W$ in the computational basis. Accept if the outcome is $1$.
    \item If $b=1$: Apply $\mathrm{V}_1^\dagger$ to $WM$ and perform a SWAP-test between $\ket{\psi_v}$ and $W$. Accept if the SWAP-test accepts.
    \end{itemize}
\end{itemize}

\paragraph{Analysis.} Completeness follows immediately from the one of the underlying protocol. Next, we show that the protocol is sound. Before that, we need to recall two technical lemmas from \cite{MW05}.

\begin{lemma}[\cite{MW05}]\label{lmm:mw1}
    Let $\rho_{WM}$ be a mixed state and let $\rho_W = \Tr_M(\rho_{WM})$. If $MW$ is measured with respect to a binary-valued projective measurement $\{\Pi_0, \Pi_1\}$, then then the probability of obtaining the outcome 1 is at most $F(\sigma_W, \rho_W)^2$ for some $\sigma_W$ a reduced density of a state in the support of $\Pi_1$. 
\end{lemma}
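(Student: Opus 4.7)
The plan is to prove the lemma via a standard purification argument combined with Uhlmann's theorem. Let $p_1 = \Tr(\Pi_1 \rho_{WM})$ denote the probability of observing outcome $1$. The goal is to identify a candidate $\sigma_W$ in the image of the binary projection and then relate $p_1$ to the maximum overlap between purifications of $\sigma_W$ and $\rho_W$, which is exactly the fidelity.

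First, I would introduce an auxiliary register $R$ and pick a purification $\ket{\phi}_{WMR}$ of $\rho_{WM}$, so that in particular $\Tr_{MR}(\ketbra{\phi}{\phi}) = \rho_W$. Define the sub-normalized vector $(\Pi_1 \otimes \mathrm{Id}_R)\ket{\phi}$, whose squared norm equals $p_1$, and let
\[
\ket{\phi'}_{WMR} = \frac{(\Pi_1 \otimes \mathrm{Id}_R)\ket{\phi}}{\sqrt{p_1}}
\]
(assuming $p_1 > 0$; the $p_1 = 0$ case is trivial). By construction, $\ketbra{\phi'}{\phi'}_{WM} = \Pi_1 \rho_{WM} \Pi_1 / p_1$ is supported inside the image of $\Pi_1$, so its reduction $\sigma_W = \Tr_{MR}(\ketbra{\phi'}{\phi'})$ is the reduced density of a state in the support of $\Pi_1$, as required.

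The key computation is the identity
\[
\langle \phi | \phi' \rangle = \frac{1}{\sqrt{p_1}} \bra{\phi}(\Pi_1 \otimes \mathrm{Id}_R)\ket{\phi} = \frac{p_1}{\sqrt{p_1}} = \sqrt{p_1},
\]
where the first step uses that $\ket{\phi'}$ already lies in the image of $\Pi_1 \otimes \mathrm{Id}_R$. Hence $p_1 = |\langle \phi | \phi' \rangle|^2$.

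Finally, I would invoke Uhlmann's theorem via the definition of fidelity recalled in \cref{eq:fidelity}: since $\ket{\phi}$ and $\ket{\phi'}$ are purifications of $\rho_W$ and $\sigma_W$ respectively (with the purifying system being $MR$), the maximum over all purifications gives
\[
F(\sigma_W, \rho_W) \geq |\langle \phi | \phi' \rangle|^2 = p_1,
\]
which yields the claimed bound (noting that in the $F = |\langle\cdot|\cdot\rangle|^2$ convention used in this paper, the squared-fidelity appearing in the statement should be read consistently with the Uhlmann overlap). The main subtlety is not analytical but notational: one must be careful that the purification $\ket{\phi'}$ produced by projecting $\ket{\phi}$ automatically uses the \emph{same} purifying register $R$, so that Uhlmann can be applied directly to the overlap on $WMR$ and transferred to the reduced states on $W$ without any loss.
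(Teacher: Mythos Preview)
The paper does not supply its own proof of this lemma; it is quoted verbatim from \cite{MW05}. Your argument is correct and is in fact the standard one (and the one given in \cite{MW05}): purify $\rho_{WM}$, project the purification with $\Pi_1\otimes\mathrm{Id}_R$, and read off the overlap via Uhlmann. You also correctly identify the notational mismatch: with the paper's convention $F(\rho,\sigma)=\max_{\ket\psi,\ket\phi}\lvert\langle\psi|\phi\rangle\rvert^{2}$ from \cref{eq:fidelity}, your computation yields $p_1\le F(\sigma_W,\rho_W)$, so the extra square in the lemma statement is an artifact of importing the bound from \cite{MW05}, where $F$ denotes the root fidelity.
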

\begin{lemma}[\cite{MW05}]\label{lmm:mw2}
    The maximum probability with which a verifier specified by $\mathrm{V}_1$ and $\mathrm{V}_2$ can be made to accept is
    \[
  \max_{\sigma, \tau}  F(\sigma, \tau)^2
    \]
    where the maximum is taken over $\sigma$ in the support of $\mathrm{V}_2^\dagger (\ketbra{1}{1}\otimes \mathrm{Id})\mathrm{V}_2$ and $\tau$ in the support of $\mathrm{V}_1\ketbra{\psi_v}{\psi_v}\mathrm{V}_1^\dagger$.
\end{lemma}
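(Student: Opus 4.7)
The plan is to establish the claimed equality by separately upper bounding the cheating prover's winning probability and exhibiting a matching strategy. I read the verifier described by $\mathrm{V}_1$ and $\mathrm{V}_2$ as the canonical two-projector game: initialize $W$ to $\ket{\psi_v}$ and the message register to $\ket{0\dots 0}$, apply $\mathrm{V}_1$ to $WM$, hand $M$ to the prover, apply $\mathrm{V}_2$ to the returned $WM$, and measure the first qubit of $W$. The relevant projectors on $WM$ are the acceptance projector $\Pi_{\mathrm{acc}} = \mathrm{V}_2^\dagger(\ketbra{1}{1}\otimes \mathrm{Id})\mathrm{V}_2$ and the ``initial'' projector $\Pi_{\mathrm{init}} = \mathrm{V}_1(\ketbra{\psi_v}{\psi_v}\otimes \mathrm{Id})\mathrm{V}_1^\dagger$ describing states reachable immediately after $\mathrm{V}_1$.

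For the upper bound, I would purify the prover's action by Stinespring: model it as a unitary $\mathrm{U}$ on $M$ together with an auxiliary register $P$ initialized in a pure state, so that the joint state on $WMP$ just before the final measurement is $\ket{\psi}=(\mathrm{Id}_W\otimes \mathrm{U}_{MP})(\mathrm{V}_1\ket{\psi_v}\ket{0}_M\otimes \ket{0}_P)$, with acceptance probability $\norm{(\Pi_{\mathrm{acc}}\otimes \mathrm{Id}_P)\ket{\psi}}^2$. Applying \cref{lmm:mw1} to $\ket{\psi}$ with the projective measurement $\{\Pi_{\mathrm{acc}}\otimes \mathrm{Id}_P,\,\mathrm{Id}-\Pi_{\mathrm{acc}}\otimes \mathrm{Id}_P\}$ bounds this by $F(\sigma,\rho_W)^2$ for some reduced density $\sigma$ of a state in the image of $\Pi_{\mathrm{acc}}$. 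Since $\mathrm{U}$ does not touch $W$, the reduced density $\rho_W$ equals $\Tr_M(\tau)$ for $\tau = \mathrm{V}_1\ketbra{\psi_v}{\psi_v}\mathrm{V}_1^\dagger$, and $\tau$ itself is a purification of $\rho_W$ lying in the image of $\Pi_{\mathrm{init}}$. Invoking the purification characterization of fidelity (\cref{eq:fidelity}), I can upgrade $F(\sigma,\rho_W)^2$ to $F(\sigma,\tau)^2$ for an appropriate lift of $\sigma$ into the image of $\Pi_{\mathrm{acc}}$; maximizing then yields the desired bound $\max_{\sigma,\tau} F(\sigma,\tau)^2$.

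For the lower bound I would pick $\sigma^*,\tau^*$ attaining the maximum fidelity. By Uhlmann's theorem (together with the defining relation \cref{eq:fidelity}), after possibly enlarging with an ancilla there exists a unitary $\mathrm{U}$ on $M$ such that $(\mathrm{Id}_W\otimes \mathrm{U})\ket{\tau^*}$ has squared overlap $F(\sigma^*,\tau^*)^2$ with $\ket{\sigma^*}$. A cheating prover who applies this $\mathrm{U}$ to $M$ therefore leaves the verifier with a joint state whose acceptance probability is exactly $F(\sigma^*,\tau^*)^2$, matching the upper bound.

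The main obstacle I expect is the careful bookkeeping of the purification registers in the upper-bound step: one needs to argue that the reduced density $\sigma$ arising from \cref{lmm:mw1} on the enlarged space $WMP$ can be realized as the reduction of a genuine state in the image of $\Pi_{\mathrm{acc}}$ on $WM$, which requires peeling off the ancilla $P$ via a unitary that preserves membership in the support. A second subtlety is checking that the maximization in the resulting bound does not inadvertently range over states violating the ``in-support'' constraint — this is ultimately enforced by the fact that the ancilla was initialized in a pure state, so any purification of the same $\rho_W$ lies in the $\mathrm{U}$-orbit, exactly as needed to apply Uhlmann's characterization.
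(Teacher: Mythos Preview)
The paper does not prove this lemma; it is quoted verbatim from \cite{MW05} as a ``technical lemma'' and used as a black box in the soundness analysis of Protocol~III. So there is no in-paper proof to compare against, and your task was really to reconstruct the Marriott--Watrous argument.

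Your lower-bound direction is fine and is exactly the Uhlmann-based strategy one expects. The upper bound, however, has a real gap at the ``upgrade'' step. From \cref{lmm:mw1} you obtain a bound of the form $F(\sigma_W,\rho_W)^2$, where $\sigma_W=\Tr_M(\sigma)$ for some $\sigma$ in the support of $\Pi_{\mathrm{acc}}$ and $\rho_W=\Tr_M(\tau)$. You then want to produce a lift $\tilde\sigma$ in the support of $\Pi_{\mathrm{acc}}$ with $F(\tilde\sigma,\tau)=F(\sigma_W,\rho_W)$ by appealing to \cref{eq:fidelity}. But Uhlmann's theorem only tells you that \emph{some} purification of $\sigma_W$ achieves this fidelity with $\tau$; it gives no control over which one, and the optimal purification need not lie in the image of $\Pi_{\mathrm{acc}}$. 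Worse, monotonicity of fidelity under partial trace runs the wrong way: for \emph{every} $\tilde\sigma$ with $\Tr_M\tilde\sigma=\sigma_W$ one has $F(\tilde\sigma,\tau)\le F(\sigma_W,\rho_W)$, so the inequality you need (acceptance $\le F(\tilde\sigma,\tau)^2$) does not follow from acceptance $\le F(\sigma_W,\rho_W)^2$. A concrete obstruction: take $\Pi_{\mathrm{acc}}=\ketbra{01}{01}$ and $\tau=\ket{+}\ket{0}$; then $\sigma_W=\ketbra{0}{0}$, $F(\sigma_W,\rho_W)=\tfrac12$, yet the only state in $\operatorname{supp}(\Pi_{\mathrm{acc}})$ has $F(\cdot,\tau)=0$.

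The fix is not the ancilla-peeling bookkeeping you flag in your last paragraph (that part is fine: tracing out $P$ does keep you inside $\operatorname{supp}(\Pi_{\mathrm{acc}})$). Rather, the Marriott--Watrous argument keeps the fidelity at the level of the $W$-marginals throughout---i.e., works with $F(\Tr_M\sigma,\Tr_M\tau)$---and matches upper and lower bounds there; the freedom in the prover's first message (which you suppressed by treating $\Pi_{\mathrm{init}}$ as rank one) is what lets the achievability side reach that quantity. If you want to phrase the final answer as a fidelity on $WM$, you must additionally optimize over $\tau$ in the full support of $\Pi_{\mathrm{init}}=\mathrm{V}_1(\ketbra{\psi_v}{\psi_v}_W\otimes\mathrm{Id}_M)\mathrm{V}_1^\dagger$, not fix it.
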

We are now ready to show the soundness of our protocol.
 \begin{lemma}[Soundness]
If $(\mathrm{P}, \mathrm{V})$ satisfies soundness with error $\zeta$, then Protocol II is sound with error $\frac{3}{4} + \frac{\sqrt{\zeta}}{2}$.
\end{lemma}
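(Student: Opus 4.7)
My plan is to bound the acceptance probability of any cheating prover by analyzing the two branches $b=0$ and $b=1$ separately and then gluing them together via the reverse triangle inequality for fidelity. Fix an arbitrary prover strategy and let $\rho_{WM}^{(b)}$ denote the joint state of $WM$ held by the verifier after the third message when the challenge was $b$. Since the prover relinquishes $W$ before learning $b$ and cannot touch it afterwards, its reduced density $\omega_W = \Tr_M(\rho_{WM}^{(b)})$ is the same for both values of $b$; this common anchor is precisely what will let the two branches talk to each other.

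For $b=0$ the verifier's test is a binary projective measurement with $\Pi = \mathrm{V}_2^\dagger(\ketbra{1}{1}\otimes\mathrm{Id})\mathrm{V}_2$ applied to $\rho_{WM}^{(0)}$, so \cref{lmm:mw1} bounds its acceptance probability by $p_0 \leq F(\sigma_0,\omega_W)^2$ for some $\sigma_0$ equal to the $W$-reduced density of a state in the support of $\mathrm{V}_2^\dagger(\ketbra{1}{1}\otimes\mathrm{Id})\mathrm{V}_2$. For $b=1$, the SWAP-test against $\ket{\psi_v}$ has POVM element $\mathrm{E}_0 = (\mathrm{Id}+\ketbra{\psi_v}{\psi_v})/2$ by \cref{eq:swapPOVM}, so its acceptance probability is $p_1 = \tfrac{1}{2} + \tfrac{1}{2}\Tr(\Pi'\rho_{WM}^{(1)})$ for the binary projector $\Pi' = \mathrm{V}_1(\ketbra{\psi_v}{\psi_v}\otimes\mathrm{Id})\mathrm{V}_1^\dagger$. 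Applying \cref{lmm:mw1} once more, $\Tr(\Pi'\rho_{WM}^{(1)}) \leq F(\tau_1,\omega_W)^2$ for some $\tau_1$ in the $W$-reduced support of $\mathrm{V}_1\ketbra{\psi_v}{\psi_v}\mathrm{V}_1^\dagger$.

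Combining the two branches and coarsely bounding $\tfrac{1}{4}F(\tau_1,\omega_W)^2 \leq \tfrac{1}{2}F(\tau_1,\omega_W)^2$ to symmetrize the coefficients gives
\[
\tfrac{1}{2}(p_0 + p_1) \leq \tfrac{1}{4} + \tfrac{1}{2}\bigl(F(\sigma_0,\omega_W)^2 + F(\tau_1,\omega_W)^2\bigr).
\]
The reverse triangle inequality \cref{eq:fidtriangle} applied with $\omega_W$ as the middle state yields $F(\sigma_0,\omega_W)^2 + F(\tau_1,\omega_W)^2 \leq 1 + F(\sigma_0,\tau_1)$, and \cref{lmm:mw2} together with the soundness hypothesis on $(\mathrm{P},\mathrm{V})$ implies $F(\sigma_0,\tau_1)^2 \leq \zeta$, since $\sigma_0$ and $\tau_1$ live in exactly the two supports that witness the optimum of that lemma. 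Chaining everything yields the claimed bound $\tfrac{1}{4} + \tfrac{1}{2}(1+\sqrt{\zeta}) = \tfrac{3}{4} + \tfrac{\sqrt{\zeta}}{2}$.

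The main subtlety I expect to confront is twofold: first, pinning down that the $W$-reduced density $\omega_W$ is common to both branches (which relies on the fact that $W$ is out of the prover's control once sent), and second, massaging the SWAP-test acceptance probability into the shape of a projective measurement so that \cref{lmm:mw1} applies on both sides. The additive $\tfrac{1}{2}$ coming from the identity component of the SWAP POVM is precisely what produces the constant $\tfrac{3}{4}$ in the final bound, reflecting that a SWAP-test is only a lossy surrogate for the ideal rank-one projector onto $\ket{\psi_v}$ used by the original Marriott--Watrous analysis.
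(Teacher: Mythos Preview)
Your proposal is correct and follows essentially the same route as the paper: both analyses identify the common $W$-reduced density sent before the challenge, split the acceptance probability into the $b=0$ and $b=1$ branches, use the SWAP-test POVM \cref{eq:swapPOVM} to peel off the additive $\tfrac{1}{4}$, invoke \cref{lmm:mw1} on each branch, and finish with the reverse triangle inequality \cref{eq:fidtriangle} combined with \cref{lmm:mw2}. The only cosmetic difference is that the paper first assembles the full acceptance probability and then isolates the quantity $p$ corresponding to the idealized projective test, whereas you handle the two branches separately before combining; the underlying argument and the coarsening step $\tfrac14 F^2 \le \tfrac12 F^2$ are identical.
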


\begin{proof}
    Let $\rho \in \mathcal{H}_W$ be the state received in the first round by the verifier, and let $\rho_0, \rho_1 \in \mathcal{H}_W \otimes \mathcal{H}_M$ be the states after the second message of the prover, conditioned on $b=0$ and $b=1$, respectively. By \cref{eq:swapPOVM}, the success probability of the above protocol is
    \begin{align*}
        &\frac{1}{2}\Tr\left(\mathrm{V}_2^\dagger (\ketbra{1}{1}\otimes \mathrm{Id})\mathrm{V}_2 \rho_0\right) + 
        \frac{1}{2}\Tr\left(\frac{\mathrm{Id} + \ketbra{\psi_v}{\psi_v}}{2} \mathrm{V}_1^\dagger\rho_1\mathrm{V}_1\right) \\
        =~& \frac{1}{2}\Tr\left(\mathrm{V}_2^\dagger (\ketbra{1}{1}\otimes \mathrm{Id})\mathrm{V}_2 \rho_0\right) + \frac{1}{4}\Tr\left(\mathrm{V}_1^\dagger\rho_1\mathrm{V}_1\right) +
        \frac{1}{4}\Tr\left(\mathrm{V}_1\ketbra{\psi_v}{\psi_v}\mathrm{V}_1^\dagger \rho_1\right)\\
        \leq~& \frac{1}{2}\Tr\left(\mathrm{V}_2^\dagger (\ketbra{1}{1}\otimes \mathrm{Id})\mathrm{V}_2 \rho_0\right) + \frac{1}{4} +
        \frac{1}{4}\Tr\left(\mathrm{V}_1\ketbra{\psi_v}{\psi_v}\mathrm{V}_1^\dagger \rho_1\right)\\
        \leq~& \frac{1}{4} +\underbrace{\frac{1}{2}\Tr\left(\mathrm{V}_2^\dagger (\ketbra{1}{1}\otimes \mathrm{Id})\mathrm{V}_2 \rho_0\right) + 
        \frac{1}{2}\Tr\left(\mathrm{V}_1\ketbra{\psi_v}{\psi_v}\mathrm{V}_1^\dagger \rho_1\right)}_{=: p}
    \end{align*}
    where $p$ is the success probability of an identical protocol, except that instead of performing a SWAP-test we are projecting onto $\ketbra{\psi_v}{\psi_v}$. Thus, it suffices to bound $p$.

    By \cref{lmm:mw1} we have that 
    \[
    p \leq \max_{\sigma, \zeta} \left(\frac{1}{2}F(\sigma, \rho)^2 +\frac{1}{2}F(\rho, \tau)^2\right)
    \]
    maximized over $\sigma$ in the support of $\mathrm{V}_2^\dagger (\ketbra{1}{1}\otimes \mathrm{Id})\mathrm{V}_2$ and $\tau$ in the support of $\mathrm{V}_1\ketbra{\psi_v}{\psi_v}\mathrm{V}_1^\dagger$. For all $\sigma, \tau$, we can bound this expression as
    \[
    \frac{1}{2}F(\sigma, \rho)^2 +\frac{1}{2}F(\rho, \tau)^2 \leq \frac{1}{2} + F(\sigma, \tau) \leq \frac{1}{2} + \frac{\sqrt{\zeta}}{2}
    \]
    where the first inequality follows by \cref{eq:fidtriangle} and the second inequality follows by \cref{lmm:mw2}.
\end{proof}
Finally, we show that the protocol satisfies honest-verifier zero-knowledge, adapting an argument from \cite{KobayashiZK}.
\begin{lemma}[Honest-Verifier Zero-Knowledge]
If $(\mathrm{P}, \mathrm{V})$ satisfies honest-verifier zero-knowledge, then so does Protocol III.
\end{lemma}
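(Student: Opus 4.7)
The plan is to reduce the honest-verifier zero-knowledge of Protocol III to that of the underlying three-message protocol $(\mathrm{P},\mathrm{V})$. Let $\mathcal{S}$ denote the HVZK simulator promised for $(\mathrm{P},\mathrm{V})$ by \cref{def:hvzk}, so that $\mathcal{S}(j,\ket{\psi}^{\otimes s(\lambda)})$ is computationally indistinguishable from the joint state on the verifier's workspace and the message register right after the $j$-th message of the original protocol.

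The starting observation is that the honest prover for Protocol III admits a clean coherent description: it prepares the state $\ket{\phi}_{WMR} = \mathrm{V}_1\mathrm{P}_1(\ket{\psi_v}_W\otimes\ket{0}_{MR})$ on its internal workspace, sends $W$ in round $1$, and in round $3$ either applies $\mathrm{P}_2$ to $RM$ (when the challenge is $b=0$) or acts as the identity (when $b=1$) before sending $M$. Consequently, the verifier's view after message $1$ is $\Tr_{MR}\ketbra{\phi}{\phi}$; after message $2$ it is further tensored with a uniformly random bit $b$; and after message $3$ the joint $WM$-state equals $\Tr_R[\mathrm{P}_2\mathrm{V}_1\mathrm{P}_1(\ket{\psi_v}\otimes\ket{0})]$ when $b=0$ and $\Tr_R[\mathrm{V}_1\mathrm{P}_1(\ket{\psi_v}\otimes\ket{0})]$ when $b=1$. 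These are precisely the marginals produced by $\mathcal{S}(3,\cdot)$ and $\mathcal{S}(2,\cdot)$, respectively, for the underlying protocol.

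With this identification in hand, I would define the simulator $\mathcal{S}'$ for Protocol III round by round: for $i=1$, invoke $\mathcal{S}(2,\ket{\psi}^{\otimes s(\lambda)})$ and trace out $M$; for $i=2$, do the same and append a uniformly random classical bit; and for $i=3$, sample $b\leftarrow\{0,1\}$, then run $\mathcal{S}(3,\ket{\psi}^{\otimes s(\lambda)})$ if $b=0$ and $\mathcal{S}(2,\ket{\psi}^{\otimes s(\lambda)})$ if $b=1$, tensoring with the bit $b$ in either case. Each branch is QPT (since $\mathcal{S}$ is), and indistinguishability from the real view follows by a one-step reduction to the HVZK property of $(\mathrm{P},\mathrm{V})$.

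The main step that I expect to require care is the structural verification underlying the above identification, namely that the coherent prover strategy is simultaneously accepting for both challenges and produces the claimed marginals. In particular, in the $b=1$ branch, after the verifier applies $\mathrm{V}_1^\dagger$ the reduced state on $W$ must equal $\ket{\psi_v}$ so that the SWAP-test passes with certainty; this uses that $\mathrm{P}_1$ acts non-trivially only on $RM$ and therefore leaves the $W$-marginal of $\ket{\psi_v}\otimes\ket{0}$ invariant. Once these marginal identifications are in place, the hybrid argument is essentially routine.
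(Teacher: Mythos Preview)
Your proposal is correct and takes essentially the same route as the paper: identify the honest-prover marginals in Protocol~III with $\Tr_R(\mathrm{P}_2\mathrm{V}_1\mathrm{P}_1\,\cdot)$ (for $b=0$) and $\Tr_R(\mathrm{V}_1\mathrm{P}_1\,\cdot)$ (for $b=1$), then invoke the underlying HVZK simulator to produce each. The paper's version differs only cosmetically in that it writes the simulator in purified form (coherent unitaries $\mathcal{S}_1,\mathcal{S}_2$ with workspace $S$, outputting $\Tr_S(\mathcal{S}_2\mathrm{V}_1\mathcal{S}_1(\ket{\psi_v}\ket{0\dots0}))$ and $\Tr_S(\mathrm{V}_1\mathcal{S}_1(\ket{\psi_v}\ket{0\dots0}))$); note also that your indices are shifted by one relative to \cref{def:hvzk}, so what you call $\mathcal{S}(3,\cdot)$ is the paper's $\mathcal{S}(2,\cdot)$ and what you call $\mathcal{S}(2,\cdot)$ is $\mathrm{V}_1$ applied to the paper's $\mathcal{S}(1,\cdot)$.
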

\begin{proof}
    Given that the completeness error is negligible, it suffices to show a simulator that produces an accepting transcript, given a simulator $\mathcal{S}$ for the input $3$-message protocol. The simulator works as follows:
    \begin{itemize}
    \item With probability $1/2$, 
    set the registers $MW$ to the state
    \[
    \Tr_S\left(\mathcal{S}_2 \mathrm{V}_1 \mathcal{S}_1 (\ket{\psi_v}\ket{0 \dots 0})\right)
    \]
    and return these registers along with $b=0$.
    \item With probability $1/2$, set the registers $MW$ to the state
    \[
    \Tr_S\left(\mathrm{V}_1 \mathcal{S}_1 (\ket{\psi_v}\ket{0 \dots 0})\right)
    \]
    and return these registers along with $b=1$.
\end{itemize}
It is easy to see that the simulator's output is indistinguishable from an honest accepting transcript, and furthermore it is efficiently computatable.
\end{proof}

\subsection{Zero-Knowledge}\label{sec:zk}

Let $(\mathrm{P}, \mathrm{V})$ be the $3$-message protocol presented in \cref{sec:public}, where the verifier sends single bits. We show how to compile it to guarantee zero-knowledge against a \emph{malicious verifiers}. The compilation is simple: Protocol IV is defined to be identical to Protocol III, except that instead of the verifier sampling a random bit $b$, the parties engage in the secure computation of the following functionality, with statistical security against a corrupted prover.
\begin{itemize}
\item (Prover Input) The prover inputs a random bit $b_p$
\item (Verifier Input) The verifier inputs a random bit $b_v$
\item (Computation) The functionality computes $b_p \oplus b_v$ and outputs it to both parties.
\end{itemize}
To amplify soundness, then the protocol is repeated \emph{sequentially.}

Completeness is immediate. Soundness is also a straightforward reduction against the security of the secure computation protocol. The lemma below proves constant soundness and it is known that sequential repetition amplifies soundness (unconditionally).
\begin{lemma}[Soundness]
If $(\mathrm{P}, \mathrm{V})$ satisfies soundness with error $\zeta$, then one iteration Protocol IV is sound with error negligibly close to $\zeta$.
\end{lemma}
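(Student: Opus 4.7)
The plan is to reduce soundness of Protocol IV to that of Protocol III via a hybrid argument leveraging the \emph{statistical} simulation security of the secure computation against a corrupted prover (guaranteed by \cref{lmm:2pcEFI}). The only structural difference between the two protocols is that Protocol III's uniform verifier bit $b$ is replaced in Protocol IV by a 2PC that jointly computes $b_p \oplus b_v$. Since the honest verifier samples $b_v$ uniformly and independently of everything else, the functionality's output is uniform from the prover's perspective, no matter what $b_p$ is. This suggests programming the 2PC simulator's functionality output to equal a uniform bit supplied by an external Protocol III verifier.

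Concretely, given a (possibly unbounded) malicious prover $\mathcal{P}^*$ for Protocol IV, I would construct a malicious prover $\mathcal{P}^{**}$ for Protocol III as follows. $\mathcal{P}^{**}$ runs $\mathcal{P}^*$ internally and forwards its first register $W$ to the external Protocol III verifier. Upon receiving the uniform bit $b$ from the external verifier, $\mathcal{P}^{**}$ invokes the simulator $\mathcal{S}$ guaranteed by \cref{def:2pc} for the 2PC (against a corrupted Alice~$=$~prover), extracts $\mathcal{P}^*$'s effective input, and resumes $\mathcal{S}$ with the functionality output programmed to $b$; the resulting transcript is fed back to $\mathcal{P}^*$ as its view of the 2PC. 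Finally, $\mathcal{P}^{**}$ forwards $\mathcal{P}^*$'s third-round register $M$ to the external verifier.

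The key claim is that $\mathcal{P}^*$'s view in this reduction is statistically close to a real execution of Protocol IV. By the statistical simulation security of the 2PC, the simulated transcript is negligibly close to the real transcript when the functionality is honestly computed on the extracted $b_p$ together with a uniform $b_v$. In our reduction we substitute that honest output with the externally sampled uniform $b$, which has exactly the same marginal distribution because $b_p \oplus b_v$ is uniform whenever $b_v$ is. Hence the reduction's view differs from a real Protocol IV execution by at most a negligible function $\mu(\lambda)$, so the acceptance probability of $\mathcal{P}^*$ in the reduction equals its Protocol IV acceptance probability up to $\mu(\lambda)$. Since this matches the acceptance probability of $\mathcal{P}^{**}$ against the Protocol III verifier, and Protocol III has soundness error $\zeta$, we conclude that Protocol IV has soundness error at most $\zeta + \mu(\lambda)$.

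The main subtle point is that \emph{statistical} (rather than computational) simulation security is needed here, since the malicious Protocol IV prover is unbounded and we want the soundness bound to hold unconditionally. This is precisely the guarantee provided by one-sided statistical secure computation against a corrupted Alice, as set up in the protocol. A secondary bookkeeping point is that ``programming'' the simulator's functionality output is syntactically legal in \cref{def:2pc}, where the simulator first outputs $\rho_A^*$, the functionality is then applied, and the simulator continues on the output register; replacing the honest functionality step by a fresh uniform bit from the external challenger introduces no additional loss since the two distributions coincide. Everything else is routine.
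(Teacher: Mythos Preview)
Your proposal is correct and follows essentially the same approach as the paper: invoke the statistical simulation security of the 2PC against a corrupted prover to replace the real coin-flip with the simulated one, observe that the functionality output $b_p \oplus b_v$ is a uniform bit regardless of the extracted $b_p$, and reduce to the soundness of Protocol~III. The paper's proof is terser (it speaks of a single ``hybrid world'' rather than explicitly constructing $\mathcal{P}^{**}$), but the argument is the same.
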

\begin{proof}
    Consider the hybrid world where the simulator is interacting with the prover to run the secure computation protocol. By \cref{def:2pc}, the two experiments are statistically close, thus the acceptance probability can only change by a negligible amount. However, note that in the simulated experiment the simulator extracts the bit of the prover $b_p$ and then afterwards samples a random $b_v$ to return $b_p \oplus b_v$. Thus, the output bit is an unbiased random bit and the soundness follows by that of the underlying protocol.
\end{proof}
Finally, we prove that the scheme satisfies regular zero-knowledge.
\begin{lemma}[Zero-Knowledge]
    If EFI pairs exist and $(\mathrm{P}, \mathrm{V})$ satisfies honest-verifier zero-knowledge, Protocol IV is zero-knowledge.
\end{lemma}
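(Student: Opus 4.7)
The plan is to build a simulator $\mathcal{S}$ against a malicious verifier $\mathcal{A}$ that, for each of the $p(\lambda)$ sequential repetitions, invokes the honest-verifier simulator $\mathcal{S}_{\mathrm{HV}}$ of Protocol III to obtain a triple $(w_k, b_k, m_k)$, and uses the secure-computation simulator $\mathcal{S}_{\mathrm{2PC}}$ (against a corrupted verifier) to force the output of the coin-flipping functionality to be $b_k$. Concretely, in round $k$ the simulator sends $w_k$ to $\mathcal{A}$, engages in the 2PC via $\mathcal{S}_{\mathrm{2PC}}$ (which extracts the verifier's input $b_v^{(k)}$ and accepts $b_k$ as the ideal functionality's output), and finally sends $m_k$. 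Since $\mathcal{S}_{\mathrm{HV}}$ is QPT on polynomially many copies of $\ket{\psi}$ and $\mathcal{S}_{\mathrm{2PC}}$ is QPT, the overall $\mathcal{S}$ is QPT.

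To prove indistinguishability I would use hybrids that replace rounds one at a time with simulated rounds, proceeding in \emph{reverse order} from the last repetition to the first, mirroring the strategy from \cref{sec:hvzk}. For a round $k$ being swapped, the intermediate steps are: (i) replace the real 2PC execution with $\mathcal{S}_{\mathrm{2PC}}$, where the honest prover still samples $b_p^{(k)}$ and the simulator returns $b_p^{(k)}\oplus b_v^{(k)}$ as the functionality's output; then (ii) replace the honest prover's outer messages $(w_k,m_k)$ with samples from $\mathcal{S}_{\mathrm{HV}}$ conditioned on challenge $b_k$, simultaneously programming the 2PC output to be $b_k$ (this is possible since the simulator controls $b_p^{(k)}$ in the ideal world).

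Step (i) reduces to the malicious simulation security of the 2PC protocol against a corrupted verifier (\cref{def:2pc}); the reduction takes as non-uniform quantum advice the joint state of the prover's workspace, the verifier's workspace, and the distinguisher's register immediately before round $k$, forwards the challenger's 2PC messages to $\mathcal{A}$, and then \emph{efficiently} completes the remaining rounds since all subsequent rounds are already simulated. Step (ii) reduces to the honest-verifier zero-knowledge of Protocol III: once the 2PC is simulated, the prover's role in round $k$ is exactly to produce $(w_k,m_k)$ given a uniformly random challenge, which is what $\mathcal{S}_{\mathrm{HV}}$ approximates up to computational indistinguishability. The reduction is again non-uniform and efficient thanks to the backward ordering.

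The main obstacle is the fact that the honest prover in Protocol IV is in general unbounded, while reductions to our cryptographic primitives must be QPT. This is exactly what forces the backward-ordered hybrids together with non-uniform quantum advice: at the moment a hybrid swap is performed, all later rounds are already simulated and hence efficient to execute, and all earlier rounds are summarized by an advice state. A secondary subtlety is making sure, in step (ii), that the HVZK simulator's challenge bit matches the programmed 2PC output; this is immediate because the simulator chooses both values in the ideal world, so consistency is free. Combining these hybrids across all $p(\lambda)$ repetitions yields that the real and simulated views are computationally indistinguishable, concluding the proof.
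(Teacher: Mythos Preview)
Your overall simulator description and the backward hybrid ordering are right and match the paper. However, step~(i) has a genuine gap.

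You reduce step~(i) to the malicious-simulation security of the 2PC against a corrupted verifier, with advice consisting of the joint state immediately before round~$k$, and you claim the reduction ``efficiently completes the remaining rounds since all subsequent rounds are already simulated.'' But this overlooks the remainder of round~$k$ itself: after the coin-flip terminates and outputs~$b$, the honest prover must still produce the third-move register~$M$, and that requires applying the (potentially unbounded) prover unitary to its private workspace. The advice cannot contain both candidate responses, since they arise from applying two different unitaries to one entangled workspace; nor can it contain the correct~$M$ in advance, since~$M$ depends on~$b$. Hence the distinguisher you feed to the 2PC security game is not QPT, and the reduction does not go through. You correctly flag the unbounded prover as the main obstacle, but the backward ordering only handles inter-round efficiency, not the intra-round step of producing~$M$.

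The paper resolves this by essentially reversing the order of your two steps, which forces an additional idea. A direct swap to the HVZK simulator is blocked because the simulator's bit~$b^*$ is fixed before the (still real) 2PC produces its output. The paper therefore inserts a ``guessing'' sub-hybrid: sample an independent uniform~$b^*$ and abort whenever the 2PC output disagrees, losing only a factor of two in advantage. Conditioned on not aborting, the challenge is fixed in advance, so the HVZK challenger can supply the entire triple~$(W,b^*,M)$ and the reduction merely runs the (efficient) honest-prover side of the real coin-flip and aborts on mismatch---this reduction is QPT. Only after the prover's messages in round~$k$ have been replaced by simulated ones does the paper swap the real 2PC for its simulator; at that point the whole of round~$k$ is efficient, so the 2PC reduction is QPT as well.
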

\begin{proof}
    Let $\ell$ be the number of sequential repetition of the above protocol. We define the simulator gradually in a sequence of hybrids. For $i \in \{1, \dots, \ell\}$ we define
    \begin{itemize}
        \item Hybrid $i$: We run the simulator for the honest-verifier zero-knowledge protocol, which returns two registers $WM$ and a bit $b$. For the $(\ell -i+1)$-th iteration of the protocol we send the register $W$ as the first message, then we simulate the secure computation protocol so that the output equals $b$ and we return $M$ as the last message.
    \end{itemize}
    In the last hybrid, the simulation is runs in QPT, so all is left to be shown is that the neighbouring hybrids are indistinguishable to any QPT verifier. 
    
    Assume towards contradiction that there exists an $i$ such that the hybrids $i$ and $i+1$ are distinguishable with some inverse-polynomial advantage. As a thought experiment, consider a modified version of the hybrids as follows:
    \begin{itemize}
        \item Hybrid $i.0$: Same as the $i$-th hybrid.
        \item Hybrid $i.1$: Here the experiment samples a bit $b^*\in\{0,1\}$ and aborts if the output of the $(\ell -i)$-th secure computation protocol does not equal $b^*$.

        Clearly, the experiment aborts with probability exactly $1/2$, independently of the random variables in the experiment. Thus the distinguishing advantage between hybrids $i.1$ and $(i+1).1$ is at most half the advantage between $i$ and $i+1$.

        \item Hybrid $i.2$: Here the experiment sets $b^*$ to be the bit output by the honest-verifier simulator, and uses the corresponding registers $MW$ as the first and last message in the $(\ell -i)$-th iteration.

        Note the the distribution of $b^*$ is identical to the previous hybrid, so the only difference is that the registers of the honest-verifier simulation are used in place of the real one. Indistinguishability follows from a non-uniform reduction against the honest-verifier zero-knowledge property, where the reduction is given as advice the state of the system right before the  $(\ell -i)$-th iteration (subsequent iterations are efficiently simulatable by assumption).

        \item Hybrid $i.3$: Here the experiment simulates the secure computation protocol with a randomly sampled bit as output.

        Indistinguishability follows from another non-uniform reduction against the malicious security of the secure computation protocol. Once again, the reduction is given the state of the system right before the  $(\ell -i)$-th iteration as a non-uniform advice. Subsequent iterations are, by assumption, efficiently simulatable.
\end{itemize}
     The proof is concluded by observing that hybrid $i.3$ and $(i+1).1$ are identical, which contradicts the fact $i$ and $i+1$ (and consequently $i.1$ and $(i+1).1$) are distinguishable with inverse-polynomial advantage.
\end{proof}

\ifllncs
\else
\section{Zero-Knowledge for the Uhlmann Transformation}

The Uhlmann transformation problem is specified by two P-uniform unitaries $C,D$ that, when applied to the $\ket{0}$ state yield
\[
C\ket{0} = \ket{C} \in \mathcal{H}_{R}\otimes \mathcal{H}_{S}
\quad \text{and}\quad
D\ket{0} = \ket{D} \in \mathcal{H}_{R}\otimes \mathcal{H}_{S}
\]
such that $\mathrm{Tr}_S(\ket{C}) = \mathrm{Tr}_S(\ket{D})$. By Uhlmann's theorem \cite{UHLMANN} there exists an unitary\footnote{In general $\mathrm{U}$ is only unitary if the reduced densities of $\ketbra{C}{C}$ and $\ketbra{D}{D}$ are invertible, and otherwise it is a partial isometry, i.e., the restriction of a unitary to a subspace. For simplicity here we only consider the unitary case and we refer the reader to \cite{BEMPQY23} for a more general treatment.} $\mathrm{U}$ acting on the $S$ register such that
\[
(\mathrm{Id}\otimes \mathrm{U})\ket{C} = \ket{D}.
\]
The distributional Uhlmann transformation problem (with perfect fidelity\footnote{Generalizations with imperfect fidelity are also possible, although we do not explore them here.}) \cite{BEMPQY23} requires one to build an interactive protocol where the verifier holds a register $T$ and the prover is supposed to apply $\mathrm{U}$ to $T$. In \cite{BEMPQY23} the authors present a protocol to prove this in zero-knowledge against an \emph{honest verifier} and here we show how to modify their protocol to achieve zero-knowledge also against a malicious verifier.

\paragraph{Secure Computation for Reactive Functionalities.} A \emph{reactive} functionality \cite{IPS08,EFI} is a functionality where one of the two parties (without loss of generality, Bob) receives intermediate outputs and it is allowed to respond with an intermediate input, which is added to the current state of the computation. The security definition is identical to \cref{def:2pc}, except that the simulator has multiple rounds of interaction with the ideal functionality, each time extracting an input from the adversary.

The protocol presented in \cref{sec:hvzk} achieves precisely this, except that the verifier is given as output a single bit. It is straightforward to modify the protocol so that the output of the verifier is an arbitrary state. Then, the same argument that we used in \cref{sec:hvzk} can be used to show that the compiler results into a secure computation protocol for a reactive (for the prover) functionality against malicious parties. However, contrary to \cref{sec:hvzk}, we can only prove this for the case where the interactive protocol is \emph{efficient}.

\begin{lemma}[Reactive Secure Computation]\label{lmm:reactive}
    If EFI pairs exist, then there exist a reactive secure computation protocol for all efficient interactive protocols $(\mathrm{Alice},\mathrm{Bob})$.
\end{lemma}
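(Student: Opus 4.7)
The plan is to adapt Protocol I from \cref{sec:hvzk} almost verbatim, generalizing it from an honest-verifier zero-knowledge proof to a two-party secure computation of a reactive functionality. The key insight is that Protocol I already has exactly the right structure: it uses a quantum state commitment to pin down the verifier's evolving state across rounds, and invokes a one-sided statistical secure computation sub-protocol to apply each round's unitary. The only things that change in the reactive setting are (i) both parties' workspaces must be committed (not just the ``verifier'' side), since Alice in our protocol is now an honest efficient party that can also be maliciously corrupted; (ii) the per-round functionality outputs the full message/output register to Bob (rather than a single measured bit) and accepts a new input from Bob before proceeding; and (iii) we invoke \cref{lmm:2pcEFIoneside} for the per-round secure computations, alternating which side gets statistical security so that we can argue security against whichever party is corrupted.

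Concretely, I would have Alice and Bob each commit (using the quantum state commitment from \cref{sec:qscom}) to their initial workspaces and inputs in a setup phase implemented via secure computation, analogous to the first step of Protocol I. For round $i$, the parties run a secure computation of the functionality that opens both commitments, verifies them with $\Pi_{\mathrm{Com}}$, applies $\mathrm{Alice}_i$ and $\mathrm{Bob}_i$ (as specified by the efficient interactive protocol), releases Bob's intermediate output register to him, takes in Bob's next input, and recommits the remaining registers before handing back the commitment/decommitment pair. Correctness is immediate, and integrity of the state between rounds reduces to the statistical double opening binding of the commitment exactly as in the proof of \cref{lmm:hvzksound}: one introduces a hybrid where $\mathrm{SWAP}$s are inserted at the boundaries of each round, reducing to binding, after which the protocol becomes a straight-line execution of the ideal reactive functionality.

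For zero-knowledge (simulation security) against a malicious Alice, the simulator is constructed by an $O(r)$-step hybrid mirroring the proof of the honest-verifier zero-knowledge lemma in \cref{sec:hvzk}: working backwards from the last round, we replace each secure computation call by its simulator (relying on \cref{def:2pc}) and program its ``output to Alice'' to be a commitment to the all-zero state, using computational double opening binding to argue indistinguishability. Whenever the simulator needs to produce Bob's intermediate output, it queries the ideal reactive functionality with the input it just extracted for Alice. The symmetric argument, with roles reversed and using the other side of the duality (statistical hiding versus computational binding), handles a malicious Bob. Crucially, because the underlying interactive protocol is now \emph{efficient}, the non-uniform reductions in the proof of \cref{sec:hvzk} become straightforwardly efficient uniform reductions: the reduction can faithfully simulate all subsequent rounds on its own, so the advice state in the hybrids can be generated by the reduction itself.

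The main obstacle is the careful bookkeeping in the hybrid argument to ensure that extracting inputs and programming outputs at each round remains consistent with an execution of the reactive ideal functionality; in particular, one must make sure that when the simulator extracts the adversary's ``effective'' input on the committed registers via the secure computation simulator, the state implicitly pinned down by the still-unopened commitment at the other party is compatible with what the ideal functionality would have done on that input. This is exactly where the quantum state commitment's binding property, combined with the simulator guarantees of \cref{def:2pc}, does the work, and it mirrors step-by-step the reasoning already carried out in \cref{lmm:hvzksound} and its zero-knowledge companion; efficiency of the honest protocol is what lets us upgrade that honest-verifier argument to full two-sided malicious security.
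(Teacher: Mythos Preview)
Your overall strategy matches the paper: reuse Protocol~I with the output generalized from a single bit to an arbitrary state, argue statistical security against one party via the \cref{lmm:hvzksound}-style commitment/$\mathrm{SWAP}$ hybrid, and argue computational security against the other party via the HVZK-style backwards hybrid, with the crucial observation that efficiency of $(\mathrm{Alice},\mathrm{Bob})$ lets the reductions run the remaining rounds themselves rather than receiving them as non-uniform advice. That last point is exactly what the paper highlights as the difference from \cref{sec:hvzk}.

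Where you diverge is in unnecessary symmetrization. The paper does \emph{not} commit both parties' workspaces, nor does it alternate the side of statistical security across rounds. It keeps Protocol~I as written: only the non-reactive party's workspace $W$ is committed, and every per-round sub-protocol has statistical security against the reactive party (Bob, playing the role of the prover in Protocol~I). Security against a corrupted Bob is then literally \cref{lmm:hvzksound}; security against a corrupted Alice is the three-hybrid argument (simulate all secure computations, then swap committed workspace for $\ket{0\dots 0}$) using computational double-opening binding. Your proposal to commit both sides and alternate statistical security would probably also go through, but it adds bookkeeping without buying anything: once both parties are efficient, computational security of each sub-protocol against the corrupted side is all you need on every round, so a single one-sided instantiation suffices. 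Your remark that ``Alice is now an honest efficient party that can also be maliciously corrupted'' as justification for committing Alice's workspace too is a slight misreading; in Protocol~I the verifier's workspace is already the one being committed, and that is precisely the role Alice plays here.
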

\begin{proof}[Proof Sketch]
    Statistical security against a corrupted Bob is identical to \cref{lmm:hvzksound} and it is omitted. On the other hand, security against a corrupted Alice follows by a series of hybrids:
    \begin{itemize}
    \item Hybrid $0$: This is the original interaction between Alice and Bob.
    \item Hybrid $1$: We substitute all the interactions of the secure computation protocol with a the simulator. 

    Since the interaction between Alice and Bob is efficient, we can simulate it, while reducing against the \emph{computational} security of the secure computation protocol.
    
    \item Hybrid $2$: We substitute the way we compute the commitments in the secure computation protocol. Instead of committing to the current state of the workspace of Alice, we commit to the $\ketbra{0 \dots 0}{0 \dots 0}$ state.

    Indistinguishability follows from the computational double opening security of the commitment scheme.
\end{itemize}
In the last hybrid, the output of the computation is obtained by running an honest interaction between Alice and Bob and the simulator programs it as the output for Alice.
\end{proof}

\paragraph{Zero-Knowledge Uhlmann Beyond Honest Verifier.} Armed with our a secure computation for reactive functionalities (with one-sided) statistical security against Bob, we are now ready to describe our protocol against a malicious verifier. Alas, \cref{lmm:reactive} only holds for efficient prover/verifier protocols, so a straightforward composition with the protocol from \cite{BEMPQY23} does not work. However, we notice that the the number of qubits of the quantum state $n$ also imposes a bound on the runtime of the honest prover, since it can be easily simulated by a quantum circuit with polynomial in $2^n$ number of gates \cite{NC00}. Thus, we can resort to a technique known as \emph{complexity leveraging}, where we set the security parameter to, say,  $\lambda = n^{10}$ and, now assuming \emph{sub-exponential security} of EFI pairs, we can run the same argument as \cref{lmm:reactive}.

We show how to compose the protocol from \cite{BEMPQY23} with our secure computation for reactive functionalities. Let $\delta = \delta(\lambda)$ be a parameter to be set later, let $\gamma = 8\delta^2$, and let $n$ be a bound on the number of qubits of the states. The protocol consists of a run of the secure computation (with statistical security against the prover) with parameter $\lambda = n^c$, for some constant $c$, for the following reactive functionality.
\begin{itemize}
    \item (Prover Input) The prover inputs a classical description of the circuits $C$ and $D$.
    \item (Verifier Input) The verifier inputs a classical description of the circuits $C$ and $D$. along with a target register $T$.
    \item (Computation) The reactive functionality proceeds as follows:
    \begin{itemize}
    \item Abort if the description of the circuits $C$ and $D$ don't match.
    \item Sample a random index $i^* \in \{1, \dots, \gamma\}$.
    \item Then for all $i \in  \{1, \dots, \gamma\}$, proceed as follows.
    \item If $i \neq i^*$:
    \begin{itemize}
        \item Run the $C$ circuit to produce the state $\ketbra{C}{C}_{R,S}$.
        \item Send the register $R$ to the prover.
        \item Upon receiving the register $R$ back, apply the projective measurement $\Pi_D = \{ \Pi_0, \Pi_1\}$ where
        \[
        \Pi_0 = \mathrm{Id} - \Pi_1 \quad\text{and}\quad \Pi_1 = \ketbra{D}{D}
        \]
        which is efficiently implementable given $D$. Abort if the output $1$ is not observed.
    \end{itemize}        
    \item If $i = i^*$:
    \begin{itemize}
        \item Send the $T$ register to the prover.
        \item Receive the register $T$ back.
    \end{itemize}     
    \item If none of the tests failed, return $T$.
\end{itemize}
\end{itemize}
The runtime of the verifier is polynomial in $\lambda = n^{O(1)}$, which implies that the verifier is QPT (even though the prover might not be). Furthermore, note that the ideal functionality as described above is precisely identical to the protocol described in \cite{BEMPQY23}, except for the choice of the register $T$, which is left up to the verifier. As shown in \cite{BEMPQY23}, the probability that an honest prover and an honest verifier successfully terminate (i.e., the protocol does not abort), is $1$. 

A straightforward reduction shows that the protocol is (unconditionally) sound.

\begin{lemma}[Soundness]
    Let $(C,D)$ be an Uhlmann problem instance with error parameter $\delta$ and unitary $\mathrm{U}$. For sufficiently large $n$, for all provers $\mathcal{P}$ there exists a negligible function $\mu$ such that 
    \[
    \text{if }\Pr\left(\mathrm{V}\left(C,D,\Tr_S(\ket{C}_{T,S})\right)\rightleftharpoons\mathcal{P}(C,D)\right) \geq \frac{1}{2} \quad \text{then } \mathrm{Td}(\sigma, (\mathrm{U} \otimes \mathrm{Id})\ketbra{C}{C}) \leq \frac{1}{\delta} + \mu(\lambda)
    \]
    where $\sigma$ is the output of the protocol, conditioned on not aborting.
\end{lemma}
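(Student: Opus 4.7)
The plan is to reduce the soundness of the compiled protocol to that of the underlying cut-and-choose Uhlmann protocol of \cite{BEMPQY23}, by leveraging the statistical security of our reactive secure computation against a corrupted prover.

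First, observe that the reactive secure computation of \cref{lmm:reactive} enjoys statistical security against a corrupted Bob, i.e.\ the reactive party that receives intermediate outputs and supplies intermediate inputs. In our Uhlmann application, this is precisely the prover: it receives the challenge registers $R$ (in non-trap rounds) or $T$ (in the trap round) and must return them. Since this direction of security is statistical and imposes no computational restriction on the adversary, even an unbounded malicious prover $\mathcal{P}$ interacting with the real protocol is statistically indistinguishable from $\mathcal{P}$ interacting with the ideal reactive functionality, up to an error $\mu_1(\lambda)$ negligible in $\lambda = n^c$. Both the non-abort probability and the trace distance of the conditional output from the intended target can therefore change by at most $\mu_1(\lambda)$ in this transition.

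In the ideal world the functionality faithfully implements the cut-and-choose Uhlmann protocol of \cite{BEMPQY23}: $\gamma = 8\delta^2$ rounds with a single uniformly random trap index $i^*$, the projective tests $\Pi_D$ in the non-trap rounds, and the verifier's actual target register $T$ handed to the prover in round $i^*$. The only syntactic difference from the original protocol is that $T$ is supplied by the verifier rather than prepared internally by the functionality, which is inconsequential for soundness. Quoting the cut-and-choose bound of \cite{BEMPQY23} with the choice $\gamma = 8\delta^2$ then yields that, whenever the prover makes the ideal functionality accept with probability at least $1/2$, the conditional output state is within trace distance $1/\delta$ of $(\mathrm{U} \otimes \mathrm{Id})\ketbra{C}{C}$. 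Composing with the earlier statistical reduction gives the claimed bound with $\mu(\lambda) = \mu_1(\lambda)$.

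The main subtlety to flag, rather than a hard technical obstacle, is reconciling the \emph{unconditional} nature of the soundness claim with a compiler that itself is built out of cryptographic primitives. This works precisely because the only guarantee invoked on the prover's side of the compiler is one-sided statistical security, which holds against unbounded adversaries by design; the sub-exponential EFI assumption only enters when arguing security against a corrupted verifier (for zero-knowledge) and does not appear anywhere in the soundness argument.
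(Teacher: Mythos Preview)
Your proposal is correct and follows essentially the same approach as the paper's own proof: invoke the statistical simulation security of the reactive secure computation against a corrupted prover to pass to the ideal functionality (incurring a negligible additive loss), and then appeal to the cut-and-choose soundness bound of \cite{BEMPQY23} (their Lemma~6.3) in the ideal world. Your additional remarks on the Bob/prover role identification and on why the argument remains unconditional despite the cryptographic compiler are accurate elaborations, but do not change the underlying strategy.
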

\begin{proof}[Proof Sketch]
    By the (statistical) simulation security of the reactive secure computation protocol against a corrupted prover, we can equivalently analyze the interaction between the prover and the ideal functionality, at the cost of a negligible term in the trace distance. Then soundness follows from Lemma 6.3 in \cite{BEMPQY23}.
\end{proof}
Next, we define the zero-knowledge property against a malicious verifier. Even at a definitional level, it is not immediately clear what is the right generalization of the notion of \cite{BEMPQY23}, since the choice of the register $T$ is left completely up to the verifier. If the verifier is honest, it will automatically input the reduced density of $\ketbra{C}{C}$, but otherwise we have no guarantee. One sensible choice of definition is that we can simulate the interaction with a corrupted verifier in the presence of an oracle that the simulator can query exactly \emph{once}. This models the fact that all a corrupted verifier can learn is the output of $\mathrm{U}$ on a single state of its choice, but nothing more. 

\begin{lemma}[Zero-Knowledge]
    If sub-exponentially secure EFI pairs exists, then the following holds. For any (malicious) QPT verifier $\mathcal{V} = \{\mathcal{V}_\lambda\}_\lambda$, there exists a QPT simulator $\mathcal{S} = \{\mathcal{S}_\lambda\}_\lambda$, such that for any QPT distinguisher $\mathcal{D} = \{\mathcal{D}_\lambda\}_\lambda$, there exists a negligible function $\mu$ such that for all security parameters $\lambda\in\mathbb{N}$, all non-uniform bipartite advice state $\rho_{VA}$, and all Uhlmann instances $(C,D)$ with unitary $\mathrm{U}$ it holds that:
    \[
    \abs{\Pr\left(\mathcal{D}_\lambda\left( \mathrm{P}_\lambda(C,D) \rightleftharpoons \mathcal{V}(\rho_V,C,D),\rho_A\right)= 1\right) -
    \Pr\left(\mathcal{D}_\lambda\left( \mathcal{S}_{\lambda,\mathrm{U}}(1^\lambda, \rho_V, C, D),\rho_A\right)= 1\right)
    } \leq \mu(\lambda)
    \]
    where the algorithm $\mathcal{S}_{\lambda,\mathrm{U}}$ is defined as follows:
    \begin{itemize}
    \item Run the simulator $\mathcal{S}_\lambda(1^\lambda, \rho_V)$, which outputs a register $T$.
    \item Apply $\mathrm{U}$ to $T$.
    \item Continue executing the simulator with $T$, which in the end outputs a certain state $\sigma$.
    \item Output $\sigma$.
    \end{itemize}
\end{lemma}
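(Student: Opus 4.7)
The plan is to define the simulator $\mathcal{S}_\lambda$ as a two-level composition: first run the simulator $\mathcal{S}_{\mathrm{2pc}}$ for a corrupted Alice in the reactive secure computation of \cref{lmm:reactive} against $\mathcal{V}_\lambda$, which extracts the verifier's inputs (the classical descriptions of $(C,D)$ and the target register $T$); then play the role of the reactive ideal functionality by using the one allowed oracle call to apply $\mathrm{U}$ to the extracted $T$; finally return the resulting $T$ as the functionality's output and hand it back to $\mathcal{S}_{\mathrm{2pc}}$, which produces the verifier's final state. The $\gamma-1$ check rounds for $i \neq i^*$ can be short-circuited to ``accept'' because, by Uhlmann's theorem, they accept with probability one in the honest execution and their intermediate registers are never exposed to the verifier.

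For indistinguishability I would use two hybrids. The first hybrid replaces the real execution of the secure computation protocol with $\mathcal{S}_{\mathrm{2pc}}$ interacting with a ``full'' ideal reactive functionality that still runs the honest-prover strategy and actually applies $\mathrm{U}$ on the appropriate register in each of the $\gamma$ rounds. The second hybrid collapses the $\gamma-1$ applications of $\mathrm{U}$ on $\ket{C}$-derived registers, followed by the projection $\Pi_D = \ketbra{D}{D}$, into trivial accepts: since $(\mathrm{Id}\otimes \mathrm{U})\ket{C} = \ket{D}$ by Uhlmann, each such round deterministically maps the state into the image of $\Pi_D$, so removing it is an information-theoretic identity from the verifier's viewpoint. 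After these substitutions the only remaining use of $\mathrm{U}$ is a single application on $T$, matching the definition of $\mathcal{S}_{\lambda,\mathrm{U}}$.

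The main obstacle is the first hybrid step. The natural reduction to the computational security of \cref{lmm:reactive} has to internally run the honest prover, which implements the unitary $\mathrm{U}$ on an $n$-qubit register and may require a circuit of size $2^{O(n)}$; a standard polynomial-time reduction is thus impossible. I would overcome this by the complexity leveraging scheme already built into the protocol: the secure computation runs with internal security parameter $\lambda = n^c$, and assuming EFI pairs are sub-exponentially secure (with distinguishing advantage $2^{-\lambda^\varepsilon}$ against adversaries of size $2^{\lambda^\varepsilon}$), choosing $c > 1/\varepsilon$ makes a $2^{O(n)}$-time reduction fit inside the security regime while the resulting distinguishing advantage $2^{-n^{c\varepsilon}}$ remains negligible in $n$ (and in $\lambda$). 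The second hybrid is an exact information-theoretic step, and the overall bound follows by a triangle inequality.
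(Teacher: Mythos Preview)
Your proposal is correct and follows essentially the same approach as the paper's proof sketch: invoke the simulator of \cref{lmm:reactive} against the malicious verifier, absorb the $2^{O(n)}$ cost of emulating the honest prover via complexity leveraging against sub-exponentially secure EFI pairs, and then replace the inefficient application of $\mathrm{U}$ by the single allowed oracle call. If anything, your write-up is more careful than the paper's: you explicitly isolate the $\gamma-1$ check rounds and argue that they collapse information-theoretically (since $(\mathrm{Id}\otimes\mathrm{U})\ket{C}=\ket{D}$ lies in the image of $\Pi_D$ and the intermediate registers never reach the verifier), whereas the paper simply asserts that computing the output on $T$ is ``the only inefficient computation that the simulator makes.''
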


\begin{proof}[Proof Sketch]
    We run the simulator from \cref{lmm:reactive} to simulate the secure computation protocol against a malicious verifier. The runtime of the simulator is polynomial in $2^n$, which still suffices to derive a contradiction to the sub-exponential security of EFI pairs. Then, we can conclude that the output of the simulator is computationally indistinguishable from an honest run.

    However, the simulator is not QPT, since it needs to run an honest interaction between the prover and the verifier in order to compute the output state in register $T$ (and note that this is the only inefficient computation that the simulator makes). To make it QPT, we instead extract $T$ from the inputs of the (malicious) verifier and use the oracle provided by the experiment to apply $\mathrm{U}$ to $T$. By assumption, this is identical to an honest run, and the simulator is now QPT.
\end{proof}

\subsection*{Acknowledgments}

We thank Kai-Min Chung for comments on an earlier draft of this work. 

Research supported by the European Research Council through an ERC Starting Grant (Grant agreement No.~101077455, ObfusQation) and partially funded by the Deutsche Forschungsgemeinschaft (DFG, German Research Foundation) under Germany's Excellence Strategy - EXC 2092 CASA – 390781972.
\fi

\ifllncs
\newpage
\else
\fi

\bibliographystyle{alpha}
\bibliography{references}

\newcommand{\etalchar}[1]{$^{#1}$}
\begin{thebibliography}{BCWDW01}

\bibitem[Aar07]{aaronson2007learnability}
Scott Aaronson.
\newblock The learnability of quantum states.
\newblock {\em Proceedings of the Royal Society A: Mathematical, Physical and Engineering Sciences}, 463(2088):3089--3114, 2007.

\bibitem[Aar18]{aaronson2018shadow}
Scott Aaronson.
\newblock Shadow tomography of quantum states.
\newblock In {\em Proceedings of the 50th annual ACM SIGACT symposium on theory of computing}, pages 325--338, 2018.

\bibitem[AQY22]{AQY22}
Prabhanjan Ananth, Luowen Qian, and Henry Yuen.
\newblock Cryptography from pseudorandom quantum states.
\newblock In {\em Annual International Cryptology Conference}, pages 208--236. Springer, 2022.

\bibitem[BCKM21a]{RoundMPC}
James Bartusek, Andrea Coladangelo, Dakshita Khurana, and Fermi Ma.
\newblock On the round complexity of secure quantum computation.
\newblock In {\em Annual International Cryptology Conference}, pages 406--435. Springer, 2021.

\bibitem[BCKM21b]{BCKM2PC}
James Bartusek, Andrea Coladangelo, Dakshita Khurana, and Fermi Ma.
\newblock One-way functions imply secure computation in a quantum world.
\newblock In {\em Annual International Cryptology Conference}, pages 467--496. Springer, 2021.

\bibitem[BCQ23]{EFI}
Zvika Brakerski, Ran Canetti, and Luowen Qian.
\newblock On the computational hardness needed for quantum cryptography.
\newblock In {\em 14th Innovations in Theoretical Computer Science Conference (ITCS 2023)}, pages 24--1. Schloss Dagstuhl--Leibniz-Zentrum f{\"u}r Informatik, 2023.

\bibitem[BCWDW01]{SWAP}
Harry Buhrman, Richard Cleve, John Watrous, and Ronald De~Wolf.
\newblock Quantum fingerprinting.
\newblock {\em Physical review letters}, 87(16):167902, 2001.

\bibitem[BEM{\etalchar{+}}23]{BEMPQY23}
John Bostanci, Yuval Efron, Tony Metger, Alexander Poremba, Luowen Qian, and Henry Yuen.
\newblock Unitary complexity and the uhlmann transformation problem.
\newblock {\em arXiv preprint arXiv:2306.13073}, 2023.

\bibitem[BFNR08]{buhrman2008quantum}
Harry Buhrman, Lance Fortnow, Ilan Newman, and Hein R{\"o}hrig.
\newblock Quantum property testing.
\newblock {\em SIAM Journal on Computing}, 37(5):1387--1400, 2008.

\bibitem[BH13]{BH13deFinetti}
Fernando~GSL Brandao and Aram~W Harrow.
\newblock Quantum de finetti theorems under local measurements with applications.
\newblock In {\em Proceedings of the forty-fifth annual ACM symposium on Theory of computing}, pages 861--870, 2013.

\bibitem[BJSW16]{ZKQMA}
Anne Broadbent, Zhengfeng Ji, Fang Song, and John Watrous.
\newblock Zero-knowledge proof systems for qma.
\newblock In {\em 2016 IEEE 57th Annual Symposium on Foundations of Computer Science (FOCS)}, pages 31--40. IEEE, 2016.

\bibitem[BOGG{\etalchar{+}}88]{BGGSHKMR}
Michael Ben-Or, Oded Goldreich, Shafi Goldwasser, Johan H{\aa}stad, Joe Kilian, Silvio Micali, and Phillip Rogaway.
\newblock Everything provable is provable in zero-knowledge.
\newblock In {\em Conference on the Theory and Application of Cryptography}, pages 37--56. Springer, 1988.

\bibitem[BW16]{BW16}
Anne Broadbent and Evelyn Wainewright.
\newblock Efficient simulation for quantum message authentication.
\newblock In {\em International Conference on Information Theoretic Security}, pages 72--91. Springer, 2016.

\bibitem[CCHS24]{CCHS24}
Nai-Hui Chia, Kai-Min Chung, Tzu-Hsiang Huang, and Jhih-Wei Shih.
\newblock Complexity theory for quantum promise problems.
\newblock {\em arXiv preprint arXiv:2411.03716}, 2024.

\bibitem[CKR09]{CHRISTANDLdeFINETTI}
Matthias Christandl, Robert K{\"o}nig, and Renato Renner.
\newblock Postselection technique for quantum channels with applications to quantum cryptography.
\newblock {\em Physical review letters}, 102(2):020504, 2009.

\bibitem[CR12]{christandl2012reliable}
Matthias Christandl and Renato Renner.
\newblock Reliable quantum state tomography.
\newblock {\em Physical Review Letters}, 109(12):120403, 2012.

\bibitem[DNS10]{DNS10}
Fr{\'e}d{\'e}ric Dupuis, Jesper~Buus Nielsen, and Louis Salvail.
\newblock Secure two-party quantum evaluation of unitaries against specious adversaries.
\newblock In {\em Annual Cryptology Conference}, pages 685--706. Springer, 2010.

\bibitem[FKMO24]{fawzi2024learning}
Omar Fawzi, Richard Kueng, Damian Markham, and Aadil Oufkir.
\newblock Learning properties of quantum states without the iid assumption.
\newblock {\em Nature Communications}, 15(1):9677, 2024.

\bibitem[GJMZ23]{QSC}
Sam Gunn, Nathan Ju, Fermi Ma, and Mark Zhandry.
\newblock Commitments to quantum states.
\newblock In {\em Proceedings of the 55th Annual ACM Symposium on Theory of Computing}, pages 1579--1588, 2023.

\bibitem[GLSV21]{GLSV21}
Alex~B Grilo, Huijia Lin, Fang Song, and Vinod Vaikuntanathan.
\newblock Oblivious transfer is in miniqcrypt.
\newblock In {\em Annual International Conference on the Theory and Applications of Cryptographic Techniques}, pages 531--561. Springer, 2021.

\bibitem[GMW91]{GMW}
Oded Goldreich, Silvio Micali, and Avi Wigderson.
\newblock Proofs that yield nothing but their validity or all languages in np have zero-knowledge proof systems.
\newblock {\em Journal of the ACM (JACM)}, 38(3):690--728, 1991.

\bibitem[HKP20]{huang2020predicting}
Hsin-Yuan Huang, Richard Kueng, and John Preskill.
\newblock Predicting many properties of a quantum system from very few measurements.
\newblock {\em Nature Physics}, 16(10):1050--1057, 2020.

\bibitem[IPS08]{IPS08}
Yuval Ishai, Manoj Prabhakaran, and Amit Sahai.
\newblock Founding cryptography on oblivious transfer--efficiently.
\newblock In {\em Annual international cryptology conference}, pages 572--591. Springer, 2008.

\bibitem[KA04]{KA04}
Elham Kashefi and Carolina~Moura Alves.
\newblock On the complexity of quantum languages.
\newblock {\em arXiv preprint quant-ph/0404062}, 2004.

\bibitem[KKMV09]{KKMV09}
Julia Kempe, Hirotada Kobayashi, Keiji Matsumoto, and Thomas Vidick.
\newblock Using entanglement in quantum multi-prover interactive proofs.
\newblock {\em computational complexity}, 18(2):273--307, 2009.

\bibitem[Kob08]{KobayashiZK}
Hirotada Kobayashi.
\newblock General properties of quantum zero-knowledge proofs.
\newblock In {\em Theory of Cryptography Conference}, pages 107--124. Springer, 2008.

\bibitem[KW00]{KW00}
Alexei Kitaev and John Watrous.
\newblock Parallelization, amplification, and exponential time simulation of quantum interactive proof systems.
\newblock In {\em Proceedings of the thirty-second annual ACM symposium on Theory of computing}, pages 608--617, 2000.

\bibitem[MdW13]{montanaro2013survey}
Ashley Montanaro and Ronald de~Wolf.
\newblock A survey of quantum property testing.
\newblock {\em arXiv preprint arXiv:1310.2035}, 2013.

\bibitem[MW05]{MW05}
Chris Marriott and John Watrous.
\newblock Quantum arthur--merlin games.
\newblock {\em computational complexity}, 14(2):122--152, 2005.

\bibitem[Nao91]{NaorCom}
Moni Naor.
\newblock Bit commitment using pseudorandomness.
\newblock {\em Journal of cryptology}, 4(2):151--158, 1991.

\bibitem[NC10]{NC00}
Michael~A Nielsen and Isaac~L Chuang.
\newblock {\em Quantum computation and quantum information}.
\newblock Cambridge university press, 2010.

\bibitem[PLM18]{pallister2018optimal}
Sam Pallister, Noah Linden, and Ashley Montanaro.
\newblock Optimal verification of entangled states with local measurements.
\newblock {\em Physical review letters}, 120(17):170502, 2018.

\bibitem[Reg09]{Regev09}
Oded Regev.
\newblock On lattices, learning with errors, random linear codes, and cryptography.
\newblock {\em Journal of the ACM (JACM)}, 56(6):1--40, 2009.

\bibitem[Ren08]{RennerDeFinetti}
Renato Renner.
\newblock Security of quantum key distribution.
\newblock {\em International Journal of Quantum Information}, 6(01):1--127, 2008.

\bibitem[RY22]{RY21}
Gregory Rosenthal and Henry Yuen.
\newblock Interactive proofs for synthesizing quantum states and unitaries.
\newblock In {\em 13th Innovations in Theoretical Computer Science Conference (ITCS 2022)}, pages 112--1. Schloss Dagstuhl--Leibniz-Zentrum f{\"u}r Informatik, 2022.

\bibitem[Uhl76]{UHLMANN}
Armin Uhlmann.
\newblock The transition probability in the state space of a*-algebra.
\newblock {\em Reports on Mathematical Physics}, 9(2):273--279, 1976.

\bibitem[Wat06]{WatrousZK}
John Watrous.
\newblock Zero-knowledge against quantum attacks.
\newblock In {\em Proceedings of the thirty-eighth annual ACM symposium on Theory of Computing}, pages 296--305, 2006.

\bibitem[Wat18]{Watrous2018theory}
John Watrous.
\newblock {\em The theory of quantum information}.
\newblock Cambridge university press, 2018.

\end{thebibliography}

\ifllncs
\appendix

\else
\fi

\end{document}